\title{Faster Fréchet Distance under Transformations}
\date{}
\author{Kevin Buchin}{Technical University of Dortmund, Germany}{kevin.buchin@tu-dortmund.de}{https://orcid.org/0000-0002-3022-7877}{}
\author{Maike Buchin}{Ruhr University Bochum, Germany}{maike.buchin@rub.de}{https://orcid.org/0000-0002-3446-4343}{}
\author{Zijin Huang}{University of Sydney, Australia}{huang.zi.jin24@gmail.com}{https://orcid.org/0000-0003-3417-5303}{}
\author{André Nusser}{Université Côte d'Azur, CNRS, Inria, France}{andre.nusser@cnrs.fr}{https://orcid.org/0000-0002-6349-869X}{This work was supported by the French government through the France 2030 investment plan managed by the National Research Agency (ANR), as part of the Initiative of Excellence of Université Côte d'Azur under reference number ANR-15-IDEX-01.}
\author{Sampson Wong}{BARC, University of Copenhagen, Denmark}{sawo@di.ku.dk}{https://orcid.org/0000-0003-3803-3804}{}
\authorrunning{K. Buchin, M. Buchin, Z. Huang, A. Nusser, S. Wong} 
\keywords{}
\newtheorem{fact}[theorem]{Fact}
\theoremstyle{remark}
\newcommand{\abse}[1]{\lVert #1 \rVert}
\newcommand{\abs}[1]{|#1|}
\def\NAT@spacechar{~}\makeatother
\newcommand{\Frechet}{Fréchet }
\newcommand{\Oh}{\mathcal{O}}
\newcommand{\tOh}{\tilde{\mathcal{O}}}
\newcommand{\boundary}[1]{\partial #1}
\newcommand{\weight}[1]{\normalfont \textbf{w}(#1)}
\newcommand{\disk}[1]{D(#1)}
\newcommand{\graph}[0]{\mathcal{G}}
\newcommand{\edges}[0]{\mathcal{E}}
\newcommand{\vertices}[0]{\mathcal{V}}
\newcommand{\df}[1]{d_{\mathcal{F}}(#1)}
\newcommand{\fd}[2]{\mathcal{D}_{#2}(#1)}
\newcommand{\D}{\mathcal{D}}
\newcommand{\reals}[0]{\mathbb{R}}
\newcommand{\doublequote}[1]{``#1''}
\newcommand{\vertex}[1]{\normalfont \textbf{v}(#1)}
\newcommand{\fsgraph}[0]{\graph^{f}}
\newcommand{\cell}[1]{C_{#1}}
\newcommand{\ggraph}[0]{\graph^g}
\newcommand{\row}[1]{R(#1)}
\newcommand{\freespace}[2]{\mathcal{F}_{#2}(#1)}
\newcommand{\trans}[0]{\mathcal{T}}
\newcommand{\vvetrans}[0]{\mathcal{T}^{vve}_\delta}
\newcommand{\vetrans}[0]{\mathcal{T}^{ve}_\delta}
\newcommand{\cseg}[1]{\overline{#1}}
\begin{document}
\maketitle

\keywords{Fr\'echet distance, curve similarity, shape matching}
\begin{abstract}
    We study the problem of computing the Fréchet distance between two polygonal curves under transformations. First, we consider translations in the Euclidean plane. Given two curves $\pi$ and $\sigma$ of total complexity $n$ and a threshold $\delta \geq 0$, we present an $\tilde{\mathcal{O}}(n^{7 + \frac{1}{3}})$ time algorithm to determine whether there exists a translation $t \in \mathbb{R}^2$ such that the Fréchet distance between $\pi$ and $\sigma + t$ is at most~$\delta$. This improves on the previous best result, which is an $\mathcal{O}(n^8)$ time algorithm.

    We then generalize this result to any class of rationally parameterized transformations, which includes translation, rotation, scaling, and arbitrary affine transformations. For a class $\mathcal T$ of rationally parametrized transformations with $k$ degrees of freedom, we show that one can determine whether there is a transformation $\tau \in \mathcal T$ such that the Fréchet distance between $\pi$ and $\tau(\sigma)$ is at most $\delta$ in $\tilde{\mathcal{O}}(n^{3k+\frac{4}{3}})$ time.
\end{abstract}
    
\section{Introduction}

Many applications require determining the similarity of two geometric shapes, disregarding their location or orientation.
More specifically, \emph{shape matching} asks for the distance between two shapes if we allow the shapes to be transformed to minimize their distance~\cite{Alt09,AltG00,Veltkamp01}.
Transformations may include translations, rotations, scaling, or a combination thereof.
In this paper, we focus on polygonal curves under the Fr\'echet distance.
Curves occur in many applications and need to be matched whenever they describe a local pattern, for example to recognize handwritten characters~\cite{SriraghavendraKB07} or trademarks~\cite{AltSS07}. 
The Fréchet distance is arguably the most popular distance measure for curves in computational geometry.
There has been significant algorithmic progress on the Fr\'echet distance, for instance, most recently on approximation~\cite{ColombeF21,HorstO24,HorstKOS23}, data structures~\cite{AronovFKR24,BringmannDNP22,BuchinBG0W24,BuchinHOSSS22,filtser_static_2023,FiltserFK23,GudmundssonRSW21,GudmundssonSW23},
algorithm engineering~\cite{sigspatial1,sigspatial2,sigspatial3,DBLP:journals/jocg/BringmannKN21,DBLP:conf/esa/BringmannKN20},
simplification~\cite{BringmannC20,ChengH23,KerkhofKLMW19,KreveldLW20}, clustering~\cite{BruningCD22,BuchinDGHKLS19,BuchinDR23,GudmundssonW22,Nath020,DBLP:conf/gis/BuchinDLN19},  Fr\'echet variants~\cite{BuchinFLPRR23,BuchinFSW23,BuchinLOPUV23,DriemelHR22,fan_computing_2021,Filtser0MP23,FoxNPR24}, and its complexity in general~\cite{BlankD24,BuchinOS19,ChengH24,cheng_frechet_nodate}. 

The Fréchet distance under translation is defined as the minimum Fréchet distance for any translation of the curves.
Computing the Fr\'echet distance under translations was first studied by Efrat, Indyk and Venkatasubramanian~\cite{EfratIV04} and  by Alt, Knauer and Wenk~\cite{altMatchingPolygonalCurves2001,knauerComparingGeometricPatterns2002, wenkShapeMatchingHigher2003}. 
For two curves of total complexity $n$ in~$\mathbb R^2$, an $\tOh(n^{10})$ time algorithm\footnote{By $\tOh(\cdot)$ we hide (poly-)logarithmic factors in $n$.} for the Fr\'echet distance under translation was presented in~\cite{EfratIV04}, and an $\tOh(n^8)$ time algorithm was presented in~\cite{altMatchingPolygonalCurves2001}.
Wenk~\cite{wenkShapeMatchingHigher2003} generalized the approach in~\cite{altMatchingPolygonalCurves2001} to higher dimensions and a wide range of other transformations, e.g., for rotations or scalings in 2D in $\tOh(n^5)$ time, or for affine transformations in $d$ dimensions in $\tOh(n^{3(d^2+d)+2})$ time.

Despite significant algorithmic progress on various aspects of the Fr\'echet distance, no faster algorithms for computing the (continuous) Fr\'echet distance under transformations have been developed until now. In contrast, for computing the discrete Fr\'echet distance under translation, first an $\tOh(n^6)$-time algorithm~\cite{JiangXZ07} was developed, then an $\tOh(n^5)$-time algorithm~\cite{AvrahamKS15}, and more lately an $\tOh(n^{4+\frac{2}{3}})$-time algorithm~\cite{bringmannFrechetDistanceTranslation2021}. 

\paragraph*{Our Contribution}
In this paper, we present the first
progress on the Fr\'echet distance under transformations since its introduction~\cite{altMatchingPolygonalCurves2001,EfratIV04,knauerComparingGeometricPatterns2002,wenkShapeMatchingHigher2003}. Similar to the algorithm in~\cite{altMatchingPolygonalCurves2001,wenkShapeMatchingHigher2003}, our algorithm can be used for a wide range of classes of transformations. Specifically, given two curves $\pi$ and $\sigma$ of total complexity $n$ and a threshold $\delta \geq 0$, we want to determine whether there is a transformation $\tau$ from a given class of transformations, such that the Fr\'echet distance between $\pi$ and $\tau(\sigma)$ is at most $\delta$. Our algorithm improves the running time for translations in two dimensions from $\Oh(n^8)$ to $\tOh(n^{7 + \frac 1 3})$. More generally, we improve the running time for the various classes of transformations (and dimensions) given by Wenk~\cite{altMatchingPolygonalCurves2001} by roughly a factor of~$n^{2/3}$. For example, for rotations or scalings in $\mathbb R^2$, our algorithm runs in $\tOh(n^{4 + \frac 1 3})$ time and for affine transformations in $d$ dimensions in $\tOh(n^{3(d^2+d)+\frac 1 3})$ time.

We first present our approach for the special case of translations in two dimensions. Then we generalize our approach to other transformations and higher dimensions. Similarly to the approach  in~\cite{altMatchingPolygonalCurves2001}, we compute an arrangement in the space of transformations, which in the case of 2D translations has complexity $\Oh(n^6)$. In~\cite{altMatchingPolygonalCurves2001}, the Fr\'echet distance is computed for every vertex of this arrangement in $\Oh(n^2)$ time per vertex by using the \emph{free space diagram}, which results in an overall running time of $\Oh(n^8)$. In our approach, we instead traverse the arrangement, making use of the fact that the free space diagram for adjacent faces in the arrangement is similar.

The challenge with using this approach is that it requires a data structure for dynamic directed graph reachability.
This is a challenging problem on its own that to the best of our knowledge mostly saw progress on planar graphs~\cite{DBLP:conf/esa/DiksS07,karczmarz_et_al:LIPIcs.ICALP.2024.95,DBLP:conf/esa/Subramanian93} or more specifically grid graphs~\cite{AvrahamKS15,bringmannFrechetDistanceTranslation2021}.
In~\cite{bringmannFrechetDistanceTranslation2021}, a data structure with efficient updates and queries for reachability in a dynamic directed grid graph is presented. However, while the discrete Fr\'echet distance naturally reduces to a reachability problem on such a grid, this is not the case for the (continuous) Fr\'echet distance. An open question is whether there is a dynamic graph reachability data structure more suitable for the case of the continuous Fr\'echet distance under transformations.

We note that in an independent work, progress was made on the one-dimensional Fréchet distance under translation or scaling \cite{BCDKNR25}. In this work, the authors present an $\tOh(n^{8/3})$ algorithm for both problems, making use of the offline dynamic data structure of~\cite{bringmannFrechetDistanceTranslation2021}.

\paragraph*{Structure}
We provide an outline for the remainder of this paper. In Section~\ref{sec:prelim}, we cover the preliminaries for our paper. In Section~\ref{sec:fs-reach-to-fsg-reach}, we first present a detailed analysis of the changes in the free space diagram when traversing the arrangement in the space of translations. We refer to these changes as \emph{events}.
The free space diagram induces a directed graph, 
which we refer to as \emph{free space graph} and 
in which we need to perform a reachability query. We analyze how the graph changes for the various types of events. In Section~\ref{sec:fsg-to-gg}, 
we then show how these events can be modeled by a small number of changes to a suitable grid graph, which allows us to utilize the data structure in~\cite{bringmannFrechetDistanceTranslation2021}. The challenge is that at an event rows or columns in the free space graph may swap, or rows and columns may appear or disappear, which are operations that cannot be modeled directly in the grid graph.
In Section~\ref{sec:together}, we show how to generalize our algorithm to other classes of transformations.

\section{Preliminaries}\label{sec:prelim}

\newcommand{\width}[0]{\textbf{width}}
A $d$-dimensional polygonal curve $\pi$ is a piecewise linear curve represented as a continuous mapping from $[1, n]$ to $\reals^d$. For integer $i$, the point $\pi_i = \pi[i]$ is a vertex, and $\cseg{\pi_i} = \pi_i \pi_{i + 1} = \pi[i, i + 1]$ is a segment.

The Fréchet distance is a popular measure of the similarity between two polygonal curves. 
An \emph{orientation-preserving reparameterization} is a continuous and bijective function $f: [0, 1] \rightarrow [0, 1]$ such that $f(0) = 0$, and $f(1) = 1$. The $\width_{f, g}(\pi, \sigma)$ between two curves $\pi$ and $\sigma$ with respect to the reparameterizations $f$ and $g$, is defined as follows. 
\begin{align*}
    \width_{f, g}(\pi, \sigma) = \max_{t \in [0, 1]} \abse{\pi(f(t)) - \sigma(g(t))}
\end{align*}

\newcommand{\fredist}[1]{\delta_F(#1)}

Consider the scenario where a person is walking their dog with a leash connecting them: the person needs to stay on $\pi$ while walking according to $f$, and the dog needs to stay on $\sigma$ while walking according to $g$. The maximum leash length is the width between $\pi$ and $\sigma$ with respect to the reparameterizations $f$ and $g$. The standard \Frechet distance $\df{\pi, \sigma}$ is the minimum leash length required over all possible walks (defined by reparameterizations $f$ and~$g$).
\begin{align*}
    \df{\pi, \sigma} = \inf_{f, g \in [0, 1] \rightarrow [0, 1]} \width_{f, g}(\pi, \sigma)
\end{align*}

\newcommand{\freespaceDiagram}[3]{\mathcal{D}_{#3}(#1, #2)}
Problems relating to the \Frechet distance are commonly solved in a configuration space called the \emph{freespace diagram}.

The \emph{freespace} $\freespace{\pi, \sigma}{\delta}$ is a collection of points in $\reals^2$ in the range $[1, n] \times [1, n]$. A point $(x, y)$ is in the freespace if the Euclidean norm (distance) $\abse{\pi[x] - \sigma[y]}$ between $\pi[x]$ and $\sigma[y]$ is at most $\delta$.  As opposed to the free space, we will call $[1, n] \times [1, n] \setminus \freespace{\pi, \sigma}{\delta}$ the \emph{forbidden space}. 
\begin{align*}
    \freespace{\pi, \sigma}{\delta} = \{(x, y) \in [1, n] \times [1, n] \mid \abse{\pi(x) - \sigma(y)} \leq \delta\}
\end{align*}

The \emph{freespace diagram} $\D = \fd{\pi, \sigma}{\delta}$ is the partition of the freespace into $n \times n$ cells. A cell $\cell{i, j}$ in $\D$ contains the freespace  in the range $[i, i + 1] \times [j, j + 1]$. 
Alt and Godau~\cite{altComputingFrechetDistance1995} made the critical observation that the freespace within a cell $\cell{i, j}$ is an intersection between an ellipse $E_{i, j}$ and the square $S_{i, j} = [i, i + 1] \times [j, j + 1]$; it is convex with constant description complexity. They also showed that $\df{\pi, \sigma} \leq \delta$ if and only if there is a bi-monotone path in $\fd{\pi, \sigma}{\delta}$ from $(1, 1)$ to $(n, n)$ through the freespace. 

An intersection between the boundaries $\boundary{E_{i, j}}$ and $\boundary{S_{i, j}}\setminus\{(i, j), (i+1, j), (i, j+1), (i+1, j+1)\}$ is called a \emph{critical point}. A point $(i, j) \in \mathbb{N} \times \mathbb{N}$ is a \emph{corner point}. We say the line $\pi[i] \times [1, n]$ is the \emph{freespace diagram boundary} defined by $\pi_i$ (and analogous for $\sigma_i$). We say the strip $[i, i + 1] \times [1, n]$ is the $i$th \emph{column}, and the strip $[1, n] \times [i, i + 1]$ is the $i$th \emph{row}. We say a critical point $p$ is a row (resp. column) critical point if $p$ lies on a vertical (resp. horizontal) boundary.

\section{From freespace reachability to graph reachability} \label{sec:fs-reach-to-fsg-reach}

In this and the next section, 
we describe our ideas using an intuitive class of transformations: translation in $\reals^2$. Specifically, we determine whether there exists a translation vector $t = \lambda \Vec{v}$ such that $\df{\pi, \sigma + t} \leq \delta$, where $\lambda \in \reals_{\geq 0}$ and $\Vec{v}$ is a fixed directional vector. For this, we first describe how to transform the freespace reachability problem into a graph reachability problem. 

\sloppy Using the freespace diagram $\D = \fd{\pi, \sigma}{\delta}$, we construct a \emph{refined freespace diagram (refined FSD or FSD for short)} (see Figure~\ref{fig:fsd-to-fsg}, left). Let $l(\pi_i)$ (resp. $l(\sigma_j)$) be the vertical (resp. horizontal) boundary of $\D$ defined by $\pi_i$ (resp. $\sigma_j$).  For every critical point $p$ on the boundaries, we draw a perpendicular \emph{grid line} $l(p)$ through $p$. Note that by definition, a critical point does not coincide with a corner point of $\D$. If $p$ lies on the horizontal boundary, $l(p)$ is a vertical line; otherwise, $l(p)$ is a horizontal line. The refined FSD includes all grid lines, the freespace diagram boundaries, and their intersections. For simplicity, we redefine $\fd{\pi, \sigma}{\delta}$ to denote the refined FSD. Let the intersection between a grid line and a FSD boundary be a \emph{propagated critical point}. 

\begin{figure}[tbh]
    \centering
    \includegraphics[scale=0.75]{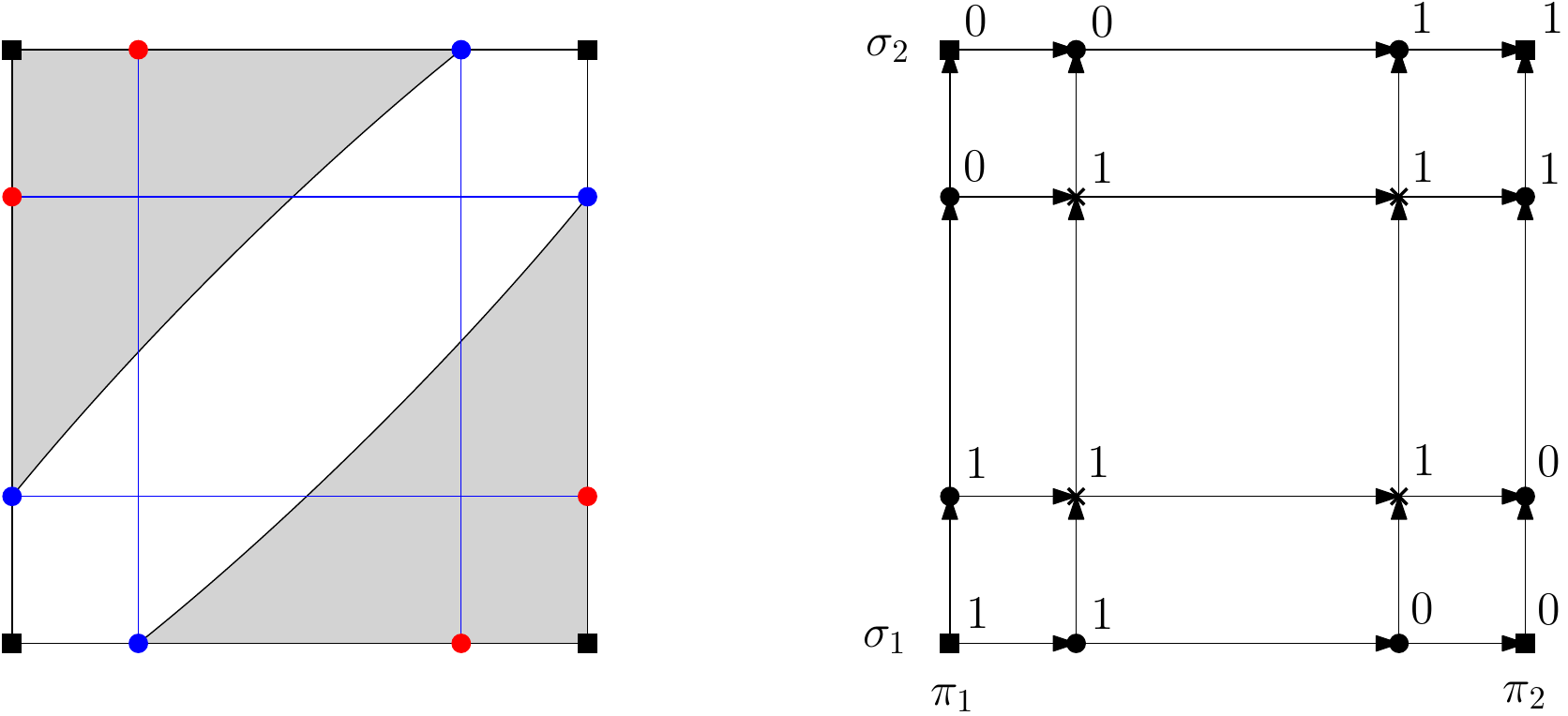}
    \caption{From the refined freespace diagram to the refined freespace graph. On the left freespace diagram (FSD), the corner points, critical points, and propagated critical points are marked by squares, blue circles, and red circles, respectively. On the right freespace diagram graph (FSG), the corner, boundary, and interior vertices are marked by squares, circles, and crosses, respectively.}
    \label{fig:fsd-to-fsg}
\end{figure}

Using the refined FSD $\fd{\pi, \sigma}{\delta}$, we construct a \emph{refined freespace graph (refined FSG or FSG for short)} $\fsgraph = \fsgraph_\delta(\pi, \sigma)$ as follows (see Figure~\ref{fig:fsd-to-fsg}, right). For every intersection between a grid line and a FSD boundary, add a \emph{boundary vertex}. For every intersection between two grid lines, add an \emph{interior vertex}. For every intersection between two freespace boundaries, add a \emph{corner vertex}. Note that each vertex in $\fsgraph$ is uniquely defined by an ordered pair $(p, q)$, where $l(p)$ is vertical and $l(q)$ is horizontal; $p$ (resp. $q$) is either a vertex of $\pi$ (resp. $\sigma$) or a critical point in $\D$ --- let $\vertex{p, q}$ be such vertex in $\fsgraph$. Each vertex is assigned a weight that is either 1 or 0. For every vertex $u = \vertex{p, q}$, if $a = l(p) \cap l(q)$ lies on a boundary of the freespace diagram, we set $\weight{u} = 1$ if $a$ lies in the freespace or $\weight{u} = 0$ if otherwise. We set the weights of the rest of the vertices, the interior vertices, to $1$. We say a vertex $u$ is \emph{activated} if $\weight{u} = 1$ or \emph{deactivated} if $\weight{u} = 0$. We say a vertex $\vertex{\pi_i, \cdot}$ (resp. $\vertex{\cdot, \sigma_j}$) lies on the freespace graph boundary defined by $\pi_i$ (resp. $\sigma_j$). 

To construct edges, we use the geometric positions of the grid lines and the FSG boundaries. For vertices in $\fsgraph$ defined by every grid line or FSG boundary $l(p)$, add a directed edge $(\vertex{p, q}, \vertex{p, q'})$ to $\fsgraph$ if $l(q)$ is immediately below $l(q')$. If $l(q)$ and $l(q')$ overlap, break ties arbitrarily. Analogously, add a directed edge $(\vertex{p, q}, \vertex{p', q})$ if $l(p)$ is immediately to the left of $l(p')$. A path $P \subseteq \fsgraph$ is an ordered subset $\{(a_1, b_1), ..., (a_m, b_m)\}$ of edges where $b_i = a_{i + 1}$, $\forall 1 \leq i < m$. We say $P$ is a \emph{feasible path} if $a_1 = \vertex{\pi_1, \sigma_1}$, $b_m = \vertex{\pi_n, \sigma_n}$, and $\weight{a_i} = \weight{b_i} = 1$ for all values of $i$. When $a_1$ and $b_m$ are specified, $P$ is a feasible path if $P$ uses exclusively activated vertices. We say a FSG $\fsgraph$ is \emph{$st$-reachable} if there exists a feasible path in $\fsgraph$ from $\vertex{\pi_1, \sigma_1}$ to $\vertex{\sigma_n, \pi_n}$. The following lemma naturally is derived from the properties of the freespace diagram. 
\begin{restatable}{lemma}{fsgraphpath} \label{lem:fsgraph-path}
    The FSG $\fsgraph = \fsgraph_\delta(\pi, \sigma)$ is $st$-reachable if and only if $\df{\pi, \sigma} \leq \delta$. 
\end{restatable}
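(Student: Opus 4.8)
The plan is to reduce the statement to the classical characterization of Alt and Godau~\cite{altComputingFrechetDistance1995}: $\df{\pi,\sigma}\le\delta$ holds if and only if there is a bi-monotone path from $(1,1)$ to $(n,n)$ in $\D$ that stays in the freespace $\freespace{\pi,\sigma}{\delta}$. It then suffices to show that such a bi-monotone freespace path exists exactly when $\fsgraph$ is $st$-reachable. Before doing either direction I would record two simple structural facts. First, an \emph{all-or-nothing property}: on every FSD-boundary edge of a refined cell, the freespace is either the entire edge or empty. This is because inside an original cell $\cell{i,j}$ the freespace equals the convex set $E_{i,j}\cap S_{i,j}$, so its intersection with the vertical edge on $l(\pi_i)$ (or the horizontal edge on $l(\sigma_j)$) is a single interval whose endpoints, when they lie in the relative interior of the edge, are by definition critical points; since we drew a grid line through every critical point, no refined-cell FSD-boundary edge contains such an endpoint in its relative interior. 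Second, every grid vertex lying strictly inside an original cell is the intersection of two grid lines, hence an interior vertex, hence activated; consequently only vertices on FSD boundaries can be deactivated.

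For the direction \doublequote{$st$-reachable implies $\df{\pi,\sigma}\le\delta$}: a feasible path $P$ is a monotone staircase in the refined grid running from $\vertex{\pi_1,\sigma_1}$ to $\vertex{\pi_n,\sigma_n}$ through activated vertices. Since the refined grid contains every FSD-boundary line, each maximal portion of $P$ between two consecutive FSD-boundary vertices it visits stays in the closure of a single original cell $\cell{i,j}$, and the endpoints of that portion are activated, hence freespace points, and appear in monotone order. Replacing each such portion by the straight segment between its endpoints yields a path that is bi-monotone (by monotonicity of $P$) and contained in the convex set $E_{i,j}\cap S_{i,j}$ within each cell; its global endpoints $\vertex{\pi_1,\sigma_1}$ and $\vertex{\pi_n,\sigma_n}$ are corner vertices and are activated, so $(1,1)$ and $(n,n)$ are freespace points. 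Concatenating over all visited cells produces a bi-monotone freespace path from $(1,1)$ to $(n,n)$, so $\df{\pi,\sigma}\le\delta$ by Alt--Godau.

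For the converse, take a bi-monotone freespace path $\gamma$ from $(1,1)$ to $(n,n)$ and read off the sequence of original cells it traverses; on the $k$-th cell $\gamma$ enters at a point $e_k$ and leaves at a point $f_k$, both freespace points on FSD boundaries, with $f_k = e_{k+1}$. Round each of these crossing points \emph{down} to the nearest grid vertex on the refined-cell FSD-boundary edge containing it; by the all-or-nothing property that whole edge lies in the freespace, so the rounded vertex is activated, and rounding down preserves the coordinatewise order, so in each cell the rounded entry vertex is still dominated by the rounded exit vertex and the rounded points of consecutive cells remain consistent. Then connect the rounded entry and exit vertices of $\cell{i,j}$ by an arbitrary monotone lattice staircase through grid vertices inside $\cell{i,j}$: every vertex it uses other than its two endpoints lies strictly inside the cell and is therefore an interior vertex, hence activated, while the two endpoints are activated by construction. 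Concatenating these staircases over all cells gives a feasible path in $\fsgraph$ from $\vertex{\pi_1,\sigma_1}$ to $\vertex{\pi_n,\sigma_n}$, so $\fsgraph$ is $st$-reachable.

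The high-level reduction is straightforward; the main obstacle I expect lies in the bookkeeping of the converse direction. One has to handle uniformly the several cases for which side of a cell $\gamma$ enters and leaves through, the degenerate situations where a crossing point is already a grid vertex or where $\gamma$ runs along a grid line or an FSD boundary, and the arbitrary tie-breaking required when two grid lines overlap, checking in each case that the \doublequote{round down} rule keeps the constructed staircase monotone, keeps the rounded points of consecutive cells consistent, and never places the staircase on a deactivated vertex.
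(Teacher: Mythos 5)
Your overall plan is the same as the paper's: invoke Alt--Godau, then show the two notions of path existence coincide cell by cell, using that interior vertices are always activated and that the free space is convex per cell. The forward direction and your all-or-nothing observation are fine. The problem is in the converse, in the sentence claiming that the monotone lattice staircase connecting the rounded entry to the rounded exit can be chosen so that ``every vertex it uses other than its two endpoints lies strictly inside the cell and is therefore an interior vertex.'' That claim is false as stated, and it is load-bearing. If the rounded entry is a cell corner, say $(i,j)$ (which happens precisely when the corner is free and no grid line lies between $j$ and the true crossing height), then the first step of \emph{any} monotone lattice staircase lands on a boundary vertex of the form $\vertex{p,\sigma_j}$ or $\vertex{\pi_i,q}$, not an interior vertex, so ``interior $\Rightarrow$ activated'' does not apply. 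More generally, even when neither rounded endpoint is a corner, a staircase that avoids all boundary vertices need not exist: if the rounded entry is on $l(\pi_i)$, the rounded exit is on $l(\pi_{i+1})$, and no vertical grid line crosses the cell strictly between them, then every staircase spends some steps on $l(\pi_i)$ or $l(\pi_{i+1})$ at heights where the free space interval may be empty, hence on deactivated boundary vertices.

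The lemma is still true, but for a reason your argument does not surface: when the free space in a cell is, say, a tilted thin strip whose left and right intervals are disjoint in $y$, the strip must exit through the top and bottom cell edges and thereby generate column critical points, whose grid lines are exactly the vertical lines one needs to route through interior vertices. So the correct argument has to use convexity to show that the specific boundary vertices a path is forced to touch are activated (for instance, the first grid vertex reached when stepping off a free corner lies in the same free-space interval because any earlier interval endpoint would itself be a grid line), rather than asserting the path never touches the boundary. You flag the ``bookkeeping of the converse direction'' as the main obstacle, which is the right instinct, but what's needed is not a case check of the stated claim --- the claim itself has to be replaced. For context, the paper's own proof of this lemma takes the same shortcut (it replaces the segment $ab$ in a cell by the L-shaped staircase through $\vertex{p',q}$ and appeals to ``all interior vertices have weight 1''), and the leg of that L lying on an FSD boundary consists of boundary vertices, so the published proof is vulnerable to the same objection.
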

\begin{proof}
    It is well-known that $\df{\pi, \sigma} \leq \delta$ if and only if there exists an $xy$-monotone path through the freespace from $(1, 1)$ to $(n, n)$ in $\fd{\pi, \sigma}{\delta}$. Such a monotone path can always be transformed into a feasible path $P$. Every subpath $P_{i, j} = P \cap \cell{i, j}$ within cell $\cell{i, j}$ is a line segment $ab$, where both $a$ and $b$ is either a critical point or a corner point. We will argue that for every subpath $P_{i, j}$ from critical point $a = l(p) \cap l(q)$ to $b = l(p') \cap l(q')$, there exists a path $P^{f}_{i, j} \subseteq \fsgraph$ from $\vertex{p, q}$ to $\vertex{p', q'}$ using only vertices of weight $1$. By construction, $\weight{\vertex{p, q}} = \weight{\vertex{p', q'}} = 1$ and all interior vertices have weight $1$. Therefore, a path starting at $\vertex{p, q}$, turning at $\vertex{p', q}$, and finally arriving at $\vertex{p', q'}$ is a valid candidate for $P^{f}_{i, j}$, as required. 

    Analogously, if a subpath $P^{f}_{i, j}$ from $\vertex{p, q}$ to $\vertex{p', q'}$ exists, then we use the segment $ab$ as $P_{i, j}$. The segment $ab$ must lie in the freespace, because the freespace is convex within every cell $\cell{i, j}$, as shown by Alt and Godau~\cite{altComputingFrechetDistance1995}. 
\end{proof}

Consider translating $\sigma$ along $\Vec{v}$. As $\sigma$ translates, the FSG also changes, and we would like to know how these changes affect the $st$-reachability of $\fsgraph$. To do this, we track the following freespace events. See Figure~\ref{fig:fs-events} for visualizations of these row freespace events. 

\begin{definition} \label{def:fs-events}
    We define the following \emph{row freespace events}. Column freespace events are defined analogously. 
    \begin{enumerate}
        \item A \emph{Vertex-edge (VE) event} is defined by a tuple $(p, \cseg{\sigma_w})$, where $p$ is either a vertex of $\pi$ or a row critical point in row~$w$. 
        \begin{enumerate}
            \item Entering/leaving event: the corner point $p$ enters or leaves the freespace. 
            \item Appearing/disappearing event: the grid line $l(p)$ appears or disappears as the critical point $p$ appears or disappears. 
        \end{enumerate}
        \item A \emph{Vertex-vertex-edge (VVE) event} is defined by a triplet $(p, q, \cseg{\sigma_w})$, where $p$ and $q$ are row critical points in row~$w$. 
        \begin{enumerate}
            \item Overlapping event: two grid lines $l(p)$ and $l(q)$ overlap. 
            \item Separating event: two overlapping grid lines $l(p)$ and $l(q)$ no longer overlap. 
        \end{enumerate}
    \end{enumerate}

    \begin{figure}[tbh]
        \centering
        \includegraphics[width=\textwidth]{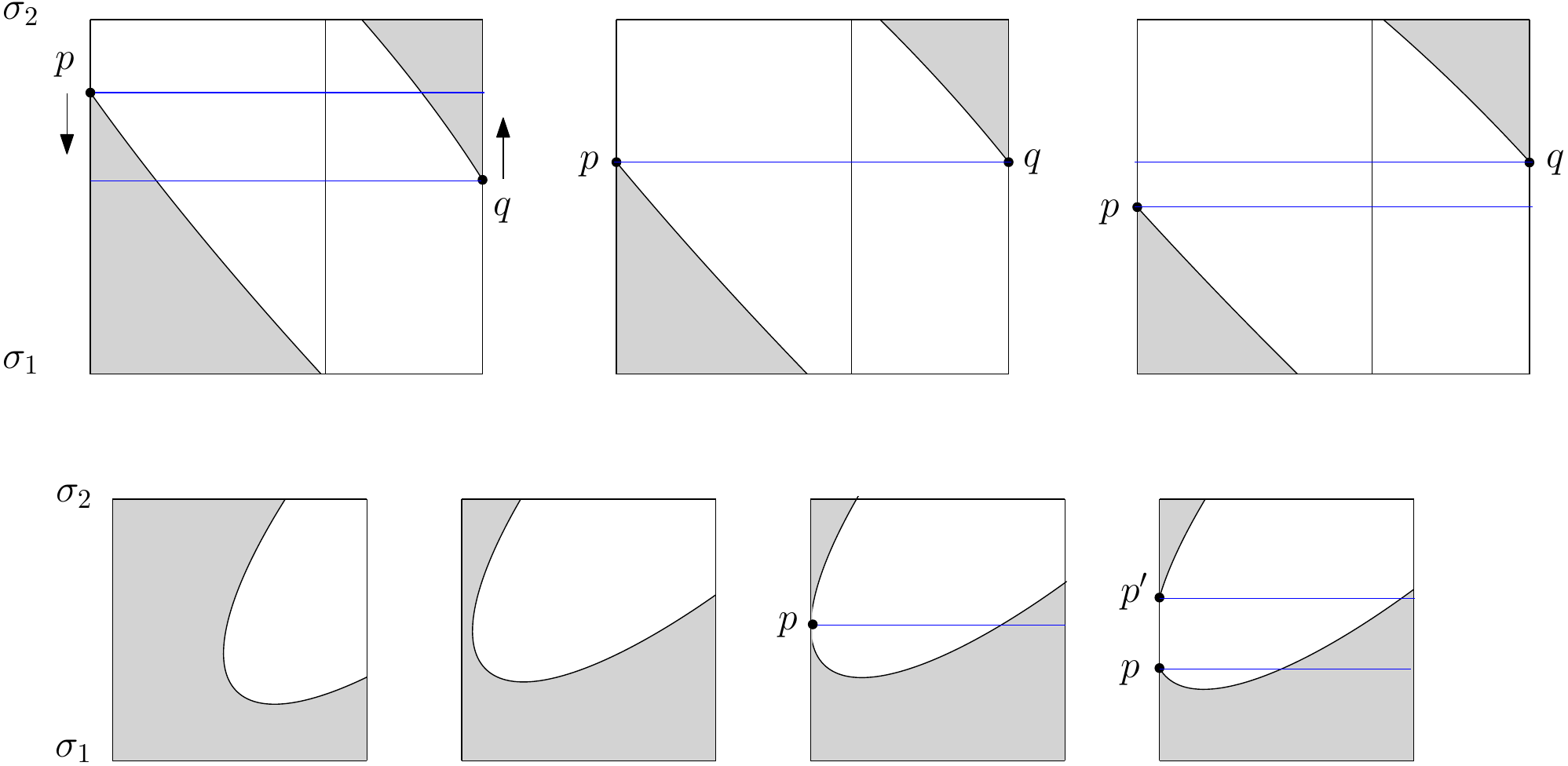}
        \caption{In the top row of figures, as the segment $\sigma_1 \sigma_2$ translates, the critical points $p$ and $q$ moves down and up, respectively. As $l(p)$ and $l(q)$ (colored in blue) move, they overlap and then separate. In the bottom row, as $\sigma_1 \sigma_2$ translates, a new critical point $p$ appears, and then a new critical point $p'$ appears. }
        \label{fig:fs-events}
    \end{figure}
\end{definition}

Let $\sigma + t$ be the curve by applying translation $t$ to $\sigma$. For now, we assume that we know the ordered set $T = \{t_1, t_2, ..., t_m\}$ of freespace events (translations), where $t_i = \lambda_i \Vec{v}$, $\lambda_i \in \reals_{\geq 0}$, and $\lambda_i \leq \lambda_{i + 1}$. We further assume that $T$ is \emph{complete}, which is defined as follows. 
\begin{definition} \label{def:complete-event}
    We say the ordered set $T = \{t_1, ..., t_m\}$ of freespace events is \emph{complete} if $\fsgraph_\delta(\pi, \sigma + t_i)$ and $\fsgraph_\delta(\pi, \sigma + t_{i + 1})$ differ by exactly one freespace event, for all $1 \leq i < m$. 
\end{definition}

To update the freespace graph to reflect the state of the freespace diagram, we define the following freespace graph operations. Let $\row{p}$ be the set of vertices defined by $l(p)$, and let $\row{p}[i]$ denote the $i$th vertex, where $i \geq 1$. 
\begin{definition} \label{def:fsg-operations}
    We define the following \emph{freespace graph operations}. 
    \begin{itemize}
        \item Vertex operation: either activate or deactivate a vertex of $\fsgraph$. 
        \item Row operation: either insert or delete a row of $\fsgraph$. Column operations are defined analogously. 
        \begin{itemize}
            \item To insert a row $\row{p}$ of vertices between adjacent rows $\row{p_a}$ and $\row{p_b}$, add a vertex $\vertex{q, p}$ for every $q$, where $q$ is either a critical point on a horizontal FSG boundary or a vertex of $\pi$. For every $1 \leq i < \abs{\row{p_a}}$, 
            \begin{enumerate}
                \item remove the edge $(\row{p_b}[i], \row{p_a}[i])$, 
                \item add horizontal edge $(\row{p}[i], \row{p}[i + 1])$, and
                \item add vertical edges $(\row{p_b}[i], \row{p}[i])$ and $(\row{p}[i], \row{p_a}[i])$. 
            \end{enumerate}
            \item To remove the row $\row{p}$ of vertices, remove $\row{p}$ and their adjacent edges. For every $1 \leq i \leq \abs{\row{p_a}}$, add edges $(\row{p_b}[i], \row{p_a}[i])$.
        \end{itemize}
    \end{itemize}
\end{definition}

We show that we can update the FSG $\fsgraph_\delta(\pi, \sigma + t_{i - 1})$ to $\fsgraph_\delta(\pi, \sigma + t_{i})$ using a constant number of freespace graph operations, plus processing time. For these updates, we will distinguish between a corner vertex operation and a boundary vertex operation.  

\begin{lemma} \label{lem:fsgraph-event-updates}
    Let $\fsgraph_i = \fsgraph_\delta(\pi, \sigma + t_i)$. Given $\fsgraph_{i - 1}$, to compute $\fsgraph_i$, it takes 
    \begin{itemize}
        \item $\Oh(1)$ time and $\Oh(1)$ corner vertex operations if $t_{i}$ is an entering/leaving event, 
        \item $\Oh(1)$ time and $\Oh(1)$ boundary vertex operations if $t_i$ is an overlapping or separating event, 
        \item $\Oh(n)$ time plus $\Oh(1)$ row operations, if $t_i$ is an appearing/disappearing event. 
    \end{itemize}
\end{lemma}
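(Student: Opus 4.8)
The plan is a case analysis on the type of the event $t_i$. By completeness of $T$, the graphs $\fsgraph_{i-1}$ and $\fsgraph_i$ differ by exactly one combinatorial change of the refined freespace diagram, so it suffices to identify that change and express it through the operations of Definition~\ref{def:fsg-operations}. I assume that alongside $\fsgraph$ we maintain the sorted lists of horizontal and vertical grid lines and an index into the corner vertices, so that the objects named by an event tuple are found in $\Oh(1)$ (or $\Oh(\log n)$) time; this counts as processing time.

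\emph{Entering/leaving event $(p,\cseg{\sigma_w})$.} The only change is that one corner point of $\D$ — the one determined by $p$ and $\cseg{\sigma_w}$ — crosses the freespace boundary. No grid line is created, destroyed, or reordered, so the vertex set, the edge set, and all boundary- and interior-vertex weights are unchanged: nearby boundary vertices carry distances $\abse{\pi_j-\sigma(y)}$ with $y$ bounded away from the corner's height, hence flip only at their own events; and co-located combinatorial changes of a different type (e.g. a critical point merging into that corner) are listed in $T$ as separate events, which is allowed since $\lambda_i\le\lambda_{i+1}$ permits ties. Hence the update is a single weight toggle on the corresponding corner vertex: $\Oh(1)$ corner vertex operations, $\Oh(1)$ time.

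\emph{Overlapping/separating event $(p,q,\cseg{\sigma_w})$.} The horizontal grid lines $l(p),l(q)$ of two row critical points exchange their vertical order; since they coincide at the event they are consecutive among the horizontal lines (no line lies strictly between two coincident ones), so the change is the transposition of the two adjacent rows $\row{p},\row{q}$ together with the weight flip of the $\Oh(1)$ boundary vertices through which the moving lines sweep past $q$ and $p$. The crux — and the main obstacle — is that the transposition is reachability-invariant and thus needs no edge updates: at the event the two rows carry identical activation patterns, since every interior vertex has weight $1$ and, for each vertical freespace boundary $x=j$, the vertices $\vertex{\pi_j,p}$ and $\vertex{\pi_j,q}$ lie at the same point (as $l(p),l(q)$ coincide) and hence have equal weight; a feasible staircase through the band formed by the two rows enters the lower row at some column, steps up at some column, and leaves the upper row at a column to the right, and after swapping which row is ``lower'' the same column triple remains realizable because the two equal-pattern rows merely trade roles. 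So the update is to swap the two lines in the sorted list ($\Oh(1)$ time) and flip the $\Oh(1)$ boundary vertices $\vertex{\pi_{j_q},p},\vertex{\pi_{j_p},q}$, where $j_p,j_q$ are the freespace boundaries carrying $p,q$: $\Oh(1)$ boundary vertex operations, $\Oh(1)$ time. What still needs care is pinning down exactly which constantly many boundary vertices flip, and in which order relative to the relabelling, depending on the crossing direction.

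\emph{Appearing/disappearing event $(p,\cseg{\sigma_w})$.} A single horizontal grid line $l(p)$ appears or disappears, so the change is exactly the insertion or deletion of one row $\row{p}$ — one row operation. The extra $\Oh(n)$ time is the bookkeeping it needs: locating the position of $l(p)$ among the horizontal lines from its height and the feature it emerges from / vanishes into, and, on an insertion, computing the weights of the $\Oh(n)$ boundary vertices $\vertex{\pi_j,p}$ of the new row from $\abse{\pi_j-\sigma(y_p)}$ (interior vertices of the row are born activated, and their incident edges are fixed by the row operation itself). So this case is $\Oh(1)$ row operations plus $\Oh(n)$ time; I expect the overlapping/separating argument to be the delicate part of the whole proof.
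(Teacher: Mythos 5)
Your case analysis matches the paper's exactly, and the entering/leaving and appearing/disappearing cases are argued essentially the same way (toggle one corner weight; locate the new/vanished grid line in $\Oh(n)$ time and perform $\Oh(1)$ row operations, with the new row's $\Oh(n)$ boundary weights computed from the geometry). For the overlapping/separating case, the paper is terser: it states that a VVE event affects only the four boundary vertices $\vertex{\pi_i,p}$, $\vertex{\pi_j,p}$, $\vertex{\pi_i,q}$, $\vertex{\pi_j,q}$ and recomputes their weights in $\Oh(1)$ time by recomputing cells $\cell{i,w}$ and $\cell{j,w}$ from scratch, without explicitly discussing the fact that $l(p)$ and $l(q)$ exchange vertical order. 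You make this reordering explicit and argue it is free because, at the coincidence moment, the two rows carry identical activation patterns (interior vertices are all weight $1$, and the boundary vertices lie at the same geometric point), so transposing them is a graph isomorphism; this is the right reason and fills in something the paper leaves implicit, relying on the arbitrary tie-breaking at coincidence. Your observation that only two of the four vertices (namely $\vertex{\pi_j,p}$ and $\vertex{\pi_i,q}$; the other two being critical points themselves always carry weight $1$) can actually change is a tightening rather than a deviation. The self-identified loose end — nailing down exactly which of these flip in which crossing direction — is resolved in the paper simply by recomputing the two cells from scratch rather than reasoning about flip directions, which is a cleaner way to close your last step.
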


\begin{proof}
    If $t_i$ is an entering/leaving event, we simply set the weight of the respective corner vertex to either 1 or 0. 

    If $t_i$ is a VVE event defined by $(p, q, \cseg{w})$, let $p$ and $q$ lie on the $i$th and $j$th FSD boundary, respectively. A VVE event affects only the vertices $\vertex{\pi_i, p}$, $\vertex{\pi_j, p}$, $\vertex{\pi_i, q}$, and $\vertex{\pi_j, q}$. It takes $O(1)$ time to determine their weights by computing cells $\cell{i, w}$ and $\cell{j, w}$ from scratch. 

    If $t_i$ is an appearing/disappearing event in the $j$th row, it takes linear time to recompute the row critical points in this row. Then, it takes linear time to compute the grid lines $l(p_a)$ and $l(p_b)$ that $l(p)$ lie between. Once $l(p_a)$ and $l(p_b)$ is identified, it takes exactly one row insertions or deletions to update $\fsgraph_{i - 1}$ to $\fsgraph_i$. Therefore, it takes $\Oh(n)$ time plus a constant number of row operations.
\end{proof}

Using a naive implementation, in the worst case, a row operation can take $\Omega(n^2)$ time since there are $\Omega(n^2)$ vertical grid lines. Determining $st$-reachability takes $\Omega(n^4)$ time, since there are~$\Omega(n^4)$ vertices in the FSG.
Hence, using the FSG directly would be infeasible.
In the next section, we fix these issues by defining an \doublequote{equivalent} grid graph in which updates and queries can be done more efficiently than the naive implementation. 

\section{From freespace graph reachability to grid graph reachability} \label{sec:fsg-to-gg}
In this section, using the refined FSG $\fsgraph = \fsgraph_\delta(\pi, \sigma)$, we define a \emph{grid graph} $\ggraph = \ggraph_\delta(\pi, \sigma)$. We then show that $\fsgraph$ is $st$-reachable if and only if $\ggraph$ is $st$-reachable. An advantage of the newly defined grid graph is that its number of vertices does not change under updates. We show that the structural changes implied by the FSG operations can be simulated by simply modifying the weights in the grid graph, without needing to add or remove vertices. These features of the grid graph allow us to use the offline dynamic grid reachability result in~\cite{bringmannFrechetDistanceTranslation2021} to obtain a faster update and query time.

The grid graph $\ggraph$ contains all vertices of the the freespace graph $\fsgraph$. Additionally, to maintain the same number of vertices in the grid graph under updates, we require a set of placeholder vertices, which we define as follows. Refer to Figure~\ref{fig:grid-graph} for an illustration. Let $m^r_i$ (resp. $m^c_i)$ be the number of row critical points in the $i$th row (resp. column) of $\D = \fd{\pi, \sigma}{\delta}$. We define a family of ordered sets $\{H^r_1, ..., H^r_{n - 1}\}$ of row placeholder points. Each set $H^r_i$ contains exactly $2n - m^r_i$ points, and let $H^r_i[j]$ denote the $j$th point for $1 \leq j \leq 2n - m^r_i$. The family of ordered sets of column placeholder points $\{H^c_1, ..., H^c_{n - 1}\}$ are analogously defined. For a row (resp. column) placeholder point $h$, the \emph{placeholder line} $l(h)$ is horizontal (resp. vertical). Note that the placeholder points and lines are abstractly defined as they do not exist in $\D$. 

\begin{figure}[tbh]
    \centering
    \includegraphics[scale=0.75]{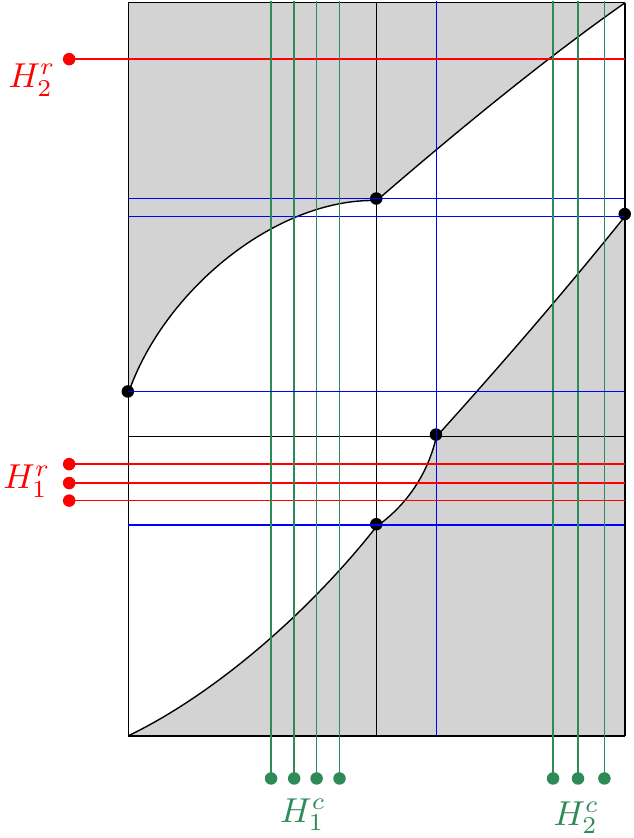}
    \caption{A visual illustration of the placeholder lines and their relative positions among the grid lines and the freespace diagram boundaries. The row (resp. column) placeholder points and lines are colored in red (resp. green). The grid lines are colored in blue.}
    \label{fig:grid-graph}
\end{figure}

In addition to vertices in $\fsgraph$, add a \emph{placeholder vertex} $\vertex{p, q}$ to $\ggraph$ where $p$ (resp. $q$) is either a vertex of $\pi$ (resp. $\sigma$), a critical point, or a placeholder point. The weight of a placeholder vertex on a boundary matches the weight of the adjacent corner vertex either directly above or to its right. Specifically, for every $h$ in every $H^r_i$ and every point $p$ in the union of the vertices of $\pi$ and row critical points. For $h \in H^r_j$, we set $\weight{\vertex{p, h}} = \weight{\vertex{\pi_i, \sigma_{j + 1}}}$ if $p = \pi_i$. Otherwise, we set $\weight{\vertex{p, h}} = 1$. Analogously, for every $h \in H^c_i$, we set $\weight{\vertex{h, q}} = \weight{\vertex{\pi_{i + 1}, \sigma_j}}$ if $q = \sigma_j$. Otherwise, we set $\weight{\vertex{h, q}} = 1$. 

To define the edges in $\ggraph$, we define the ordering of the placeholder lines among the grid lines and the FSD boundaries. The lowest placeholder line $l(H^r_i[1])$ lies above all grid lines $l(p)$, where $p$ is a row critical point on the $i$th row. The placeholder line $l(H^r_i[j + 1])$ is above $l(H^r_i[j])$. The FSD boundary $l(\sigma_{i + 1})$ is above the highest placeholder line $l(H^r_i[2n - m_i^r])$. The ordering involving vertical placeholder lines is analogously defined. The set of placeholder lines defined by $H^c_i$ are between $l(\pi_{i + 1})$ and the rightmost grid line in the $i$th column. 

For vertices in $\ggraph$ defined by $l(p)$, add a directed edge $(\vertex{p, q}, \vertex{p, q'})$ to $\ggraph$ if $l(q)$ is immediately below $l(q')$ and add a directed edge $(\vertex{q, p}, \vertex{q', p})$ to $\ggraph$ if $l(q)$ is immediately to the left of $l(q')$. If $l(q)$ and $l(q')$ overlap, then break ties using the same ordering as in $\fsgraph$. Add additional diagonal directed edge $(\vertex{p, q}, \vertex{p', q'})$ if $p$ is immediately to the left of $p'$ and $q$ is immediately below $q'$. We say a vertex defined by two placeholder points is a placeholder vertex as well as an interior vertex. 

Next, we check that our construction fits the grid graph definition of~\cite{bringmannFrechetDistanceTranslation2021}. An $N \times N$ grid graph consists of vertices numbered from $(1, 1)$ to $(N, N)$ and edges from vertex $(i, j)$ to each of vertices $(i + 1, j)$, $(i, j + 1)$, and $(i + 1, j + 1)$, where the weight of a vertex is either $1$ or $0$. With $\ggraph_{\delta}(\pi, \sigma)$ defined, we prove the following. By construction, every of $n - 1$ rows in $\fd{\pi, \sigma}{\delta}$ contains exactly $2n$ critical points and placeholder points combined. Together with $n$ horizontal boundaries, there are $N = 2n \cdot (n - 1) + n = 2n^2 - n$ horizontal lines. For the same reasons, there are $N$ vertical lines. Clearly, $\ggraph_{\delta}(\pi, \sigma)$ is an $N \times N$ grid graph. 

Before proving that $\fsgraph_\delta(\pi, \sigma)$ and $\ggraph_\delta(\pi, \sigma)$ are equivalent in terms of $st$-reachability, we first show that a feasible path $P \subseteq \ggraph$ has several desired properties. By the monotonicity of a feasible path and the convexity of the freespace in a cell, we observe the following. 
\begin{restatable}{observation}{corneradjacentboundary} \label{obs:corner-adjacent-to-boundary-1}
    Let $P$ be a feasible path in $\ggraph$. If $P$ contains a corner vertex $\vertex{\pi_i, \sigma_j}$, then let $l(p_l)$ and $l(p_r)$ be the first grid lines that are not placeholder lines to the left and right of $l(\pi_i)$, respectively. Similarly, let $l(p_a)$ and $l(p_b)$ be the first grid line above and below $l(\sigma_j)$ respectively. Then, we have either $\weight{\vertex{p_l, \sigma_j}} = 1$ or $\weight{\vertex{\pi_i, p_b}} = 1$. Analogously, we have either $\weight{\vertex{\pi_i, p_a}} = 1$ or $\weight{\vertex{p_r, \sigma_j}} = 1$. 
\end{restatable}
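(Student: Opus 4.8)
The plan is to translate the claim into a statement about the freespace near the corner point $(i,j)$. Since $c=\vertex{\pi_i,\sigma_j}$ lies on the feasible path $P$, it is activated, so $(i,j)$ lies in the freespace and in particular in the convex regions $F_{i,j}=E_{i,j}\cap S_{i,j}$, $F_{i,j-1}=E_{i,j-1}\cap S_{i,j-1}$, and $F_{i-1,j}=E_{i-1,j}\cap S_{i-1,j}$. I would prove the second disjunctive claim (about $p_a$ and $p_r$) and obtain the first by the symmetric argument applied to the cells $\cell{i-1,j-1},\cell{i-1,j},\cell{i,j-1}$ and to the part of $P$ that enters $c$.

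First I would record the dictionary between weights and freespace: $\weight{\vertex{p_r,\sigma_j}}=1$ exactly when the point $(x_{p_r},j)$ (with $x_{p_r}$ the $x$-coordinate of $p_r$) lies in $F_{i,j}\cup F_{i,j-1}$; moreover $(F_{i,j}\cup F_{i,j-1})\cap l(\sigma_j)$ restricted to column $i$ is a single closed interval $[(i,j),(\beta,j)]$ starting at $(i,j)$ (a union of two subintervals sharing that endpoint), and likewise $(F_{i,j}\cup F_{i-1,j})\cap l(\pi_i)$ restricted to row $j$ is an interval $[(i,j),(i,\alpha)]$. The central step is then: if $\beta>i$ then $\weight{\vertex{p_r,\sigma_j}}=1$. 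Indeed, when $\beta<i+1$ the endpoint $(\beta,j)$ lies on $\partial E_{i,j}$ or $\partial E_{i,j-1}$ and hence is a column critical point of column $i$, so the leftmost column-critical grid line of column $i$ — which is $l(p_r)$ when $m^c_i>0$ — has $x$-coordinate at most $\beta$, whence $(x_{p_r},j)$ lies in the free interval; the cases $\beta=i+1$ and $m^c_i=0$ I would handle using that then the whole bottom edge of $\cell{i,j}$ is free (no critical point crosses it), that $p_r=\pi_{i+1}$ when $m^c_i=0$, and the placeholder-weight identity $\weight{\vertex{H^c_i[\cdot],\sigma_j}}=\weight{\vertex{\pi_{i+1},\sigma_j}}$. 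Symmetrically, $\alpha>j$ implies $\weight{\vertex{\pi_i,p_a}}=1$. So the claim holds as soon as the freespace extends from $(i,j)$ rightward along $l(\sigma_j)$ inside column $i$ or upward along $l(\pi_i)$ inside row $j$.

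The remaining case — the freespace meeting $l(\sigma_j)$ and $l(\pi_i)$ only at $(i,j)$ — is the one I expect to be the main obstacle. By strict convexity of the ellipses one checks that then the outward normal to $\partial E_{i,j}$ at $(i,j)$ points into the open quadrant away from $S_{i,j}$, so in fact $F_{i,j}=\{(i,j)\}$ and the freespace extends from $(i,j)$ in none of the four axis directions; this can happen only when $\abse{\pi_i-\sigma_j}=\delta$, i.e.\ precisely at an entering/leaving event. I would rule it out via the genericity hypothesis maintained between events (no corner point on an ellipse boundary): then an activated corner vertex has $(i,j)$ in the \emph{interior} of $E_{i,j}$, the freespace extends from $(i,j)$ in every direction, and both disjuncts hold simultaneously. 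What must be stressed is that one cannot avoid this case by a direct path argument: because $\ggraph$ carries diagonal edges, $P$ may leave $c$ diagonally into the (possibly forbidden) interior of $\cell{i,j}$, bypassing both $\vertex{p_r,\sigma_j}$ and $\vertex{\pi_i,p_a}$; this is exactly why the intended proof invokes monotonicity of $P$ together with convexity of the cell — to exclude this degenerate configuration — and, failing genericity, one falls back on the contributions of the neighbouring cells $\cell{i-1,j}$ and $\cell{i,j-1}$ already built into the dictionary above.
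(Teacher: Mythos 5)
Your argument for the non-degenerate case (freespace interval extends beyond $(i,j)$ along $l(\sigma_j)$ or $l(\pi_i)$) is essentially correct, and you correctly identify the problematic boundary case. However, there is a genuine gap in how you handle it. You propose to dismiss the case where the freespace in $\cell{i,j}$ is exactly $\{(i,j)\}$ (equivalently $\abse{\pi_i-\sigma_j}=\delta$ with the ellipse not cutting into the cell) by invoking ``the genericity hypothesis maintained between events.'' No such hypothesis exists in this paper. On the contrary, the algorithm evaluates the freespace graph precisely \emph{at} the critical transformations, and $\abse{\pi_i-\sigma_j}=\delta$ is exactly the configuration at a VE entering/leaving event, so the degenerate case must be handled, not excluded by assumption.

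The paper's proof handles it by contraposition using the hypothesis that $P$ is a \emph{feasible} monotone path, which you state but never actually exploit. Concretely: suppose for contradiction $\weight{\vertex{p_l,\sigma_j}}=\weight{\vertex{\pi_i,p_b}}=0$. Because these are the grid lines adjacent to $l(\pi_i)$ and $l(\sigma_j)$ and the freespace restricted to a cell boundary is a single interval, the freespace on the top and right edges of $\cell{i-1,j-1}$ must be just $\{(i,j)\}$; combined with convexity of the freespace within the cell (and the tangent-line case analysis on the ellipse, which you also sketch), the freespace in all of $\cell{i-1,j-1}$ is $\{(i,j)\}$. Hence every vertex of $\ggraph$ on $l(\pi_{i-1})$ with $y\in[j-1,j]$ and on $l(\sigma_{j-1})$ with $x\in[i-1,i]$ (including placeholder vertices, whose weights are defined to mirror the adjacent corner) is deactivated. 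A monotone path from $\vertex{\pi_1,\sigma_1}$ reaching $\vertex{\pi_i,\sigma_j}$ must cross one of these boundaries, and since diagonal edges of $\ggraph$ only connect consecutive grid lines they cannot skip over $l(\pi_{i-1})$ or $l(\sigma_{j-1})$; so $P$ uses a deactivated vertex, a contradiction. The second disjunct is symmetric using $\cell{i,j}$. Your closing suggestion to ``fall back on contributions of the neighbouring cells $\cell{i-1,j}$ and $\cell{i,j-1}$'' does not rescue the argument: the freespace along a shared cell boundary is a single condition on $\pi$ and $\sigma$ and does not differ by cell, so the neighbouring cells provide no additional free vertices in the degenerate case. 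Without the path-monotonicity contradiction, the proposal does not establish the claim.
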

\begin{proof}
    \begin{figure}[tbh]
        \centering
        \includegraphics[scale=0.75]{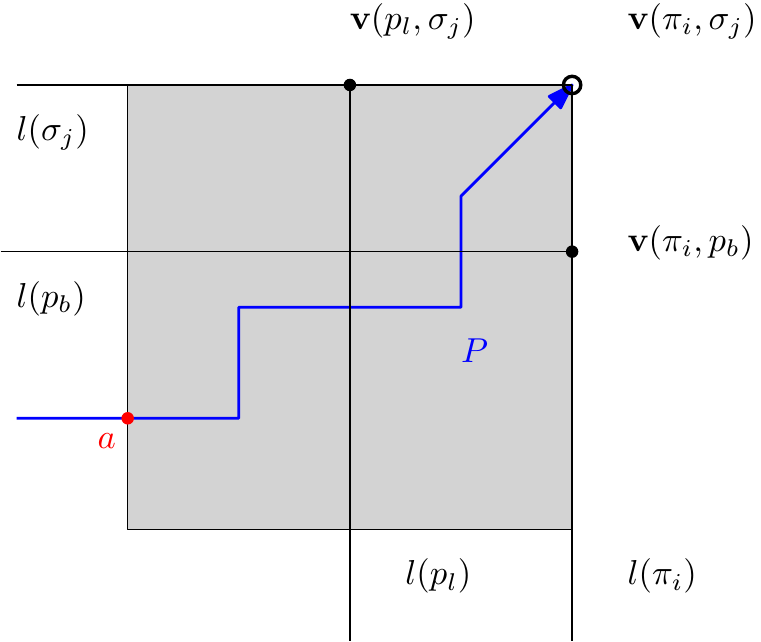}
        \caption{If vertices $\vertex{\pi_i, p_b}$ and $\vertex{p_l, \sigma_j}$ are deactivated, point $(i, j)$ is the only point that lies in the freespace in $\cell{i - 1, j - 1}$. Due to monotonicity, the path $P$ must use a deactivated vertex (say $a$).}
        \label{fig:free-adjacent-to-corner}
    \end{figure}

    See Figure~\ref{fig:free-adjacent-to-corner} for an illustration. For the sake of contradiction, assume that the weight of both $\vertex{p_l, \sigma_j}$ and $\vertex{\pi_i, p_b}$ are $0$. Since the freespace is convex within a cell, $l(\pi_i) \cap l(\sigma_j)$ is the only point in $\cell{i - 1, j - 1}$ that lies in the freespace. By construction, this implies that for all grid lines $l(p)$ lying between $l(\pi_{i - 1})$ and $l(\pi_{i})$, $\weight{\vertex{p, \sigma_{j - 1}}} = 0$. For all $l(q)$ lying between $l(\sigma_{j -  1})$ and $l(\sigma_{j})$, $\weight{\vertex{\pi_{i - 1}, q}} = 0$. There exists at least one vertex with weight $0$ in $P$, contradicting the assumption that $P$ uses exclusively vertices of weight $1$. An analogous argument holds for the case where the weights of both $\vertex{p_r, \sigma_j}$ and $\vertex{\pi_i, p_a}$ are $0$. 
\end{proof}

Due to the properties of $\ggraph$, a diagonal edge in a path $P$ can be replaced by using the rectilinear edges in the same \doublequote{cube}. Furthermore, if the final vertex $b$ of a subpath $P_j \subseteq P$ is a placeholder vertex, we can transform $P_j$ such that it ends at a non-placeholder vertex. We have the following lemma. 

\begin{restatable}{lemma}{feasiblepathproperties} \label{lem:feasible-path-properties}
    If there is a feasible path $P$ in a grid graph $\ggraph$, there is a feasible path $P'$ in $\ggraph$ with the following properties. 
    \begin{enumerate}
        \item $P'$ does not contain diagonal edges. 
        \item For every $1 \leq j \leq n$, the first vertex of $P'$ partly defined by $\sigma_j$ is not a placeholder vertex.
    \end{enumerate}    
\end{restatable}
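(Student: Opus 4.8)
The plan is to establish the two properties separately, each by a local rewriting argument on the feasible path, and then to combine them. For Property~1 (removing diagonal edges), I would process the diagonal edges of $P$ one at a time. Suppose $P$ uses a diagonal edge $(\vertex{p, q}, \vertex{p', q'})$, where $p$ is immediately to the left of $p'$ and $q$ is immediately below $q'$ in the ordering of lines. This edge lives in a single ``cube'' whose four corners are $\vertex{p, q}$, $\vertex{p', q}$, $\vertex{p, q'}$, $\vertex{p', q'}$; the grid graph also contains the rectilinear edges along the sides of this cube. The diagonal edge can be replaced by the two-edge rectilinear path $\vertex{p, q} \to \vertex{p', q} \to \vertex{p', q'}$ (or the other way around the cube) provided the intermediate vertex is activated. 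So the crux is to argue that at least one of $\vertex{p', q}$, $\vertex{p, q'}$ has weight~$1$. When all four corners are interior/placeholder vertices this is automatic, since those have weight~$1$ by construction. The only delicate case is when the cube is incident to the freespace-diagram boundaries, i.e.\ when $\vertex{p', q'}$ (or another corner) is a corner vertex; there I would invoke Observation~\ref{obs:corner-adjacent-to-boundary-1}, which is precisely the statement that around a corner vertex on a feasible path, one of the two ``outgoing'' rectilinear neighbours is activated. Applying this rewriting to every diagonal edge (and noting each rewrite strictly decreases the number of diagonal edges without introducing new ones) yields a feasible path with no diagonal edges.

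For Property~2, I would argue that the first vertex of $P'$ partly defined by $\sigma_j$ — that is, the first vertex on the path lying on the horizontal line $l(\sigma_j)$ — can be taken to be a non-placeholder vertex. Here I use the ordering fixed in the grid-graph construction: the column placeholder lines $l(H^c_i[\cdot])$ sit between the rightmost grid line in the $i$th column and $l(\pi_{i+1})$, and likewise for rows. When the path first reaches the horizontal boundary line $l(\sigma_j)$, it does so at some vertex $\vertex{p, \sigma_j}$; if this $p$ is a column placeholder point in some $H^c_i$, then by the weight definition $\weight{\vertex{p, \sigma_j}} = \weight{\vertex{\pi_{i+1}, \sigma_j}}$, and moreover all vertices $\vertex{h, \sigma_j}$ for $h \in H^c_i$ share this same weight, so the path can be ``slid forward'' along $l(\sigma_j)$ through these placeholder vertices until it reaches the non-placeholder vertex $\vertex{\pi_{i+1}, \sigma_j}$ at the end of that block. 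Conversely, if the path reaches $l(\sigma_j)$ at a placeholder point that precedes all grid lines of a column, I would instead slide backward, using that the relevant placeholder weights equal the adjacent corner-vertex weight. In either direction, monotonicity of the path is preserved because we move only rightward along a single row, and feasibility is preserved because every vertex traversed in the slide is activated.

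Finally I would combine the two steps: first apply the diagonal-removal rewriting to obtain a diagonal-free feasible path, then apply the placeholder-sliding argument. The sliding step only inserts or removes rectilinear horizontal edges along boundary rows, so it does not reintroduce diagonal edges, and the two properties hold simultaneously for the resulting $P'$. A small bookkeeping point is that Property~2 must hold for \emph{every} $j$, so the sliding argument should be applied at each boundary line $l(\sigma_j)$ that $P$ meets; since the slides at distinct $\sigma_j$ affect disjoint rows of the grid graph, they do not interfere with one another.

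The main obstacle I expect is the diagonal-removal step in the cases where the cube touches the freespace-diagram boundary. There the intermediate vertex $\vertex{p', q}$ or $\vertex{p, q'}$ can genuinely be deactivated (it may lie in the forbidden space), so one cannot simply assert that either rectilinear detour works; one has to pick the correct detour, and the justification that \emph{some} detour works is exactly the content of Observation~\ref{obs:corner-adjacent-to-boundary-1}. Getting the case analysis right — matching the two conclusions of that observation (``$\weight{\vertex{p_l, \sigma_j}} = 1$ or $\weight{\vertex{\pi_i, p_b}} = 1$'' on one side, and the analogous statement on the other) to the direction in which the path enters and leaves the corner vertex — is the part that needs care; the rest is routine local surgery on the path.
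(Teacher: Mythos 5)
Your argument for Property~1 is essentially the paper's: replace each diagonal edge by a rectilinear detour, observing that interior/placeholder intermediate vertices have weight~$1$ by construction and invoking Observation~\ref{obs:corner-adjacent-to-boundary-1} (together with convexity of the freespace within a cell) when the cube touches the freespace-diagram boundary. That part is fine.

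Your argument for Property~2 has a genuine gap. Sliding ``forward along $l(\sigma_j)$ through the placeholder vertices until it reaches $\vertex{\pi_{i+1},\sigma_j}$'' does not change which vertex of the path is the \emph{first} one on $l(\sigma_j)$: you only append vertices to the path after the placeholder crossing, so the first vertex partly defined by $\sigma_j$ is still the placeholder $\vertex{h,\sigma_j}$. To actually satisfy Property~2 you must reroute the portion of the path \emph{below} $l(\sigma_j)$ so that it crosses into the next row at a non-placeholder vertex. If you try to push that crossing rightward to $\vertex{\pi_{i+1},\sigma_j}$, the rerouted path in row $j-1$ must pass through a vertex $\vertex{\pi_{i+1},q}$ with $l(q)$ a grid line below $l(\sigma_j)$; this is a boundary vertex on $l(\pi_{i+1})$ whose weight can be $0$, so that detour is not guaranteed to be feasible. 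The paper instead slides the crossing \emph{leftward}, to $c=\vertex{p',\sigma_j}$ where $l(p')$ is the first non-placeholder grid line to the left of the $H^c_i$ block. This works for two reasons that your proposal does not supply: (i) Observation~\ref{obs:corner-adjacent-to-boundary-1} (applied to the corner $\vertex{\pi_{i+1},\sigma_j}$, which is activated because its weight equals that of the placeholder vertex on the path) guarantees $\weight{c}=1$; and (ii) since $l(p')$ is a grid line and not an FSD boundary, every vertex of the vertical detour along $l(p')$ in row $j-1$ is an interior vertex with weight~$1$. Your ``conversely, if the placeholder precedes all grid lines of the column, slide backward'' branch is vacuous: by construction the placeholder lines $H^c_i$ always lie between the rightmost grid line of column $i$ and $l(\pi_{i+1})$, so a column placeholder never precedes the grid lines. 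In short, Property~2 requires rerouting below the boundary line in the leftward direction and a second invocation of Observation~\ref{obs:corner-adjacent-to-boundary-1}; your proposal uses neither.
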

\begin{proof}
    We first replace the diagonal edges in $P$. Consider a diagonal edge $(a, b) \in P$ in the \doublequote{cube}, where $(a, c_t)$ and $(c_b, b)$ are the vertical edges, and $(a, c_b)$ and $(c_t, b)$ are horizontal edges. Observe that by construction, either both $a$ and $b$ lie on the boundaries, or at least one of them is an interior vertex. If both $a$ and $b$ lie on the boundaries, either $c_b$ or $c_t$ must be activated, since the opposite would contradict either Observation~\ref{obs:corner-adjacent-to-boundary-1} or the convexity of the freespace in a cell. 
    
    If $a$ is an interior vertex, we consider the following cases. If $b$ is also an interior vertex, then so are $c_t$ and $c_b$ with weight 1, and we replace $(a, b)$ with $(a, c_t)$ and $(c_t, b)$. If $b$ is a boundary vertex, then either $c_b$ or $c_t$ is an interior vertex, and we replace $(a, b)$ with either $(a, c_b)$ and $(c_b, b)$ or $(a, c_t)$ and $(c_t, b)$. The more interesting case is when $b = \vertex{\pi_{i + 1}, \sigma_{j + 1}}$ is a corner vertex. In this case, $c_b = \vertex{\pi_{i + 1}, h \in H^r_i}$, and by construction, $\weight{c_b} = \weight{b} = 1$. We replace $(a, b)$ with $(a, c_b)$ and $(c_b, b)$. 

    If $b$ is an interior vertex, we use similar case distinctions. If $a$ is an interior vertex, $c_b$ must also be an interior vertex. If $a$ is a boundary vertex lying on the left (resp. bottom) boundary, then $c_b$ (resp. $c_t$) is an interior vertex. The more interesting case is where $a = \vertex{\pi_i, \sigma_j}$ is a corner vertex. In this case, both $c_b$ and $c_t$ are boundary vertices. By Observation~\ref{obs:corner-adjacent-to-boundary-1}, at least one of them (say $c_b$) has weight $1$, and we replace $(a, b)$ with $(a, c_b)$ and $(c_b, b)$. 

    \begin{figure}[tbh]
        \centering
        \includegraphics[scale=0.75]{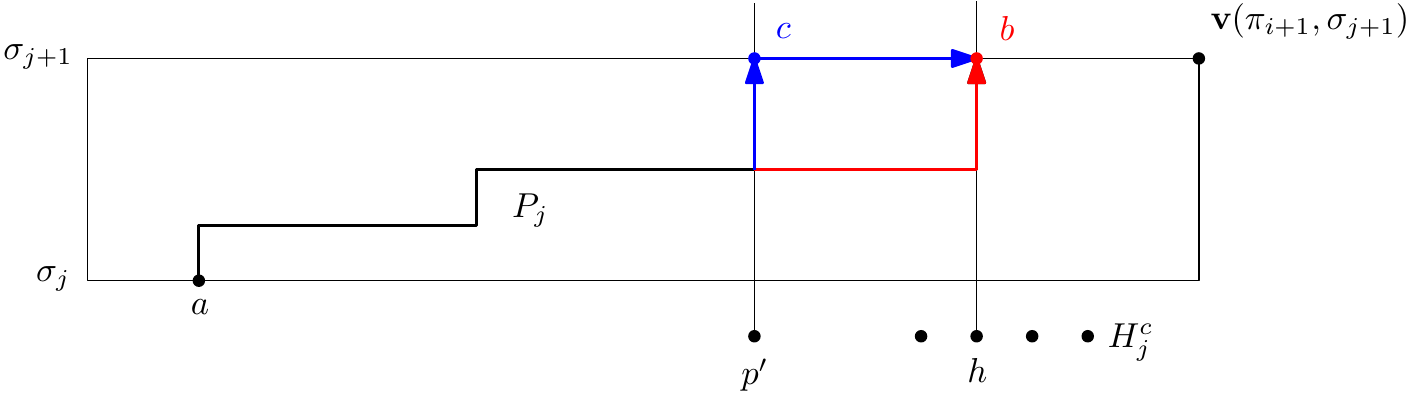}
        \caption{A path $P_{j - 1}$ ending at a placeholder vertex $b$ can be transformed to end at a non-placeholder vertex $c$.}
        \label{fig:path-starts-non-placeholder}
    \end{figure}

    With $P$ containing exclusively rectilinear edges, we partition $P$ into subpaths lying on different rows (see Figure~\ref{fig:path-starts-non-placeholder}). Specifically, let $P_j \subseteq P$ be the subpath containing the first vertex $a$ defined by $\sigma_j$ and the first vertex $b$ defined by $\sigma_{j + 1}$. We first transform $P_j$ to guarantee that $b$ is not defined by a placeholder point. Let $b = \vertex{h \in H^c_i, \sigma_{j + 1}}$, and note that $\weight{b} = \weight{\pi_{i + 1}, \sigma_{j + 1}} = 1$. Let $l(p')$ be immediately to the left of $l(H^c_i[1])$. By Observation~\ref{obs:corner-adjacent-to-boundary-1}, $\weight{c = \vertex{p', \sigma_{j + 1}}} = 1$. Combining this argument with the fact that all interior vertices have weight $1$ and the rectilinearity of $P$, we can transform $P_j$ to end at $c$, and $P_{j + 1}$ to start at $c$. Since the first vertex $\vertex{\pi_1, \sigma_1}$ and the final vertex $\vertex{\pi_n, \sigma_n}$ of $P$ are not defined by a placeholder vertex, once we apply the transformation above, the first vertex of every subpath $P_j$ is not defined by a placeholder vertex. The proof is complete. 
\end{proof}

We next observe that if there is a path in $\ggraph$ that does not use a placeholder vertex, the path also exists in $\fsgraph$. Indeed, excluding the placeholder lines, $\ggraph$ and $\fsgraph$ use the same set of grid lines and FSG boundaries, and the same ordering. 
\begin{observation} \label{obs:gg-fs-path-in-a-cell}
    If there is a path $P$ in $\ggraph = \ggraph_\delta(\pi, \sigma)$ such that $P$ does not use any placeholder vertices or diagonal edges, then $P$ also exists in $\fsgraph = \fsgraph_\delta(\pi, \sigma)$. 
\end{observation}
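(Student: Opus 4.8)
The plan is to show that, after deleting the placeholder lines, $\ggraph$ and $\fsgraph$ are literally the same graph, so a path of $\ggraph$ that never touches a placeholder vertex already lies inside this common subgraph. First I would pin down the vertex sets. By construction a vertex $\vertex{p,q}$ of $\ggraph$ is a placeholder vertex precisely when $p$ or $q$ is a placeholder point, so a path $P$ avoiding placeholder vertices uses only vertices $\vertex{p,q}$ in which $p$ is a vertex of $\pi$ or a row critical point and $q$ is a vertex of $\sigma$ or a column critical point; these are exactly the vertices of $\fsgraph$. Hence every vertex of $P$ is already a vertex of $\fsgraph$, and with the same weight: boundary vertices get weight $1$ or $0$ according to whether $l(p)\cap l(q)$ lies in the freespace in both graphs, and interior vertices have weight $1$ in both. (In particular, if $P$ is feasible in $\ggraph$ it will be feasible in $\fsgraph$.)

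Next I would check that each edge of $P$ is also an edge of $\fsgraph$. Since $P$ contains no diagonal edges, every edge is horizontal or vertical; take a horizontal edge $(\vertex{p,q},\vertex{p',q})$ of $\ggraph$. It is present exactly because, in the total left-to-right order that $\ggraph$ imposes on all vertical lines (FSD boundaries, grid lines, and column placeholder lines), $l(p)$ is immediately to the left of $l(p')$, with overlapping lines ordered by the fixed tie-breaking rule. Neither $p$ nor $p'$ is a placeholder point, and deleting elements from a total order cannot separate two elements that were already consecutive, so $l(p)$ remains immediately to the left of $l(p')$ in the order $\fsgraph$ imposes on the surviving (non-placeholder) vertical lines; moreover the tie-breaking among overlapping lines in $\ggraph$ is by construction the one inherited from $\fsgraph$, so the edge $(\vertex{p,q},\vertex{p',q})$ is present in $\fsgraph$. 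The argument for vertical edges is symmetric. Combining the two points, every vertex and every edge of $P$ lies in $\fsgraph$, so $P$ is a path in $\fsgraph$ (with the same endpoints, since $\vertex{\pi_1,\sigma_1}$ and $\vertex{\pi_n,\sigma_n}$ are not placeholder vertices).

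I do not expect a real obstacle here; the only point needing a little care is the ``consecutive elements stay consecutive'' step together with the bookkeeping that placeholder lines $H^c_i$ (resp. $H^r_i$) only ever occupy a contiguous block between the last grid line of the $i$th column (resp. row) and the next FSD boundary $l(\pi_{i+1})$ (resp. $l(\sigma_{i+1})$), so that removing them neither reorders nor merges any non-placeholder lines. This is immediate from the placeholder-line ordering fixed in the construction of $\ggraph$, so the whole observation reduces to these straightforward adjacency bookkeeping checks.
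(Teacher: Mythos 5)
Your proposal is correct and matches the paper's justification, which is given in a single sentence ("excluding the placeholder lines, $\ggraph$ and $\fsgraph$ use the same set of grid lines and FSG boundaries, and the same ordering"); you simply spell out the vertex, weight, and adjacency bookkeeping in full. One tiny slip: for a vertex $\vertex{p,q}$ the line $l(p)$ is vertical, so $p$ is a vertex of $\pi$ or a \emph{column} critical point, and $q$ is a vertex of $\sigma$ or a \emph{row} critical point (you have the two reversed), but this does not affect the argument.
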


To demonstrate that $\fsgraph$ and $\ggraph$ are equivalent with respect to $st$-reachability, we first note that any feasible path in $\ggraph$ corresponds to a feasible path in $\fsgraph$. Specifically, given a subpath $P \subseteq \ggraph$ that traverses the \doublequote{strip} defined by a single set of placeholder points, we can always construct a corresponding subpath $Q \subseteq \fsgraph$ such that $Q$ starts and ends at the same vertices as $P$. We have Lemma~\ref{lem:ggraph-column-cross-path} and~\ref{lem:ggraph-placeholder-row-path}. 

\begin{restatable}{lemma}{ggraphcolumncrosspath} \label{lem:ggraph-column-cross-path}
    Let $P$ be a path in $\ggraph = \ggraph_{\delta}(\pi, \sigma)$. Let $P$ start at a non-placeholder vertex $\vertex{p, \sigma_j}$. Let $(\vertex{p', q'}, \vertex{p', H^r_j[1]})$ be the last edge of $P$, where $q' \neq H^r_j[1]$. There exists a path $Q$ from $\vertex{p, \sigma_j}$ to $\vertex{p', q'}$ in $\fsgraph = \fsgraph_{\delta}(\pi, \sigma)$.  
\end{restatable}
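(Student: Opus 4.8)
The plan is to construct the path $Q$ directly in $\fsgraph$, tracing the overall shape of $P$ but detouring around the vertical placeholder lines, which $\fsgraph$ does not contain. Note first that $\vertex{p',q'}$ is necessarily a non-placeholder vertex, since it must exist in $\fsgraph$; in particular $l(q')$ is the topmost row-$j$ grid line, or $l(\sigma_j)$ if there are none, as it is immediately below the lowest horizontal placeholder line $l(H^r_j[1])$. By the $xy$-monotonicity of paths in a grid graph, every vertex of $P$ other than its final vertex $\vertex{p',H^r_j[1]}$ has its horizontal grid line weakly between $l(\sigma_j)$ and $l(q')$; since the row-$j$ placeholder lines lie above every row-$j$ grid line, all of $P$ except its last vertex therefore lives in the part of $\ggraph$ belonging to row $j$. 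By the diagonal-elimination argument in the proof of Lemma~\ref{lem:feasible-path-properties} we may moreover assume $P$ is rectilinear.

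Recall that the vertical placeholder lines $H^c_i$ of a column $i$ occur consecutively, immediately to the right of the rightmost real grid line of column $i$ and immediately to the left of $l(\pi_{i+1})$, so in $\fsgraph$ the rightmost real grid line of column $i$ is immediately left of $l(\pi_{i+1})$ (and $l(\pi_i)$ itself is, if column $i$ has no real grid line). I would decompose $P$ into the stretch it makes up each boundary line $l(\pi_i)$ and the piece by which it crosses each column: $P$ reaches $l(\pi_i)$ at some activated vertex, runs up $l(\pi_i)$, leaves it at some activated vertex $\vertex{\pi_i,b}$, weaves up and right through column $i$ using only interior and placeholder vertices, and first touches $l(\pi_{i+1})$ at some activated vertex $\vertex{\pi_{i+1},b'}$ with $l(b)$ weakly below $l(b')$. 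The up-stretches along the $l(\pi_i)$ are copied verbatim into $Q$ (their vertices lie on $P$, hence are activated and real). For the crossing of a column $i$ that contains a real grid line, I let $l(c)$ be its rightmost one and route in $\fsgraph$ by: right from $\vertex{\pi_i,b}$ to $l(c)$ along $l(b)$, up $l(c)$ to $l(b')$, then the single $\fsgraph$-edge $\vertex{c,b'}\to\vertex{\pi_{i+1},b'}$. Away from the endpoints the new vertices are interior vertices, of weight $1$, except possibly some on $l(\sigma_j)$; for those I invoke the weight-coupling rule that assigns a $\sigma_j$-level placeholder vertex the weight of the corner vertex immediately to its right, which is exactly what is needed to certify that the real $l(\sigma_j)$-vertices used by $Q$ are activated. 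This crossing uses no placeholder vertex.

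The remaining, and crucial, case is the crossing of a column $i$ that contains no real grid line: then $\fsgraph$ has no vertex strictly between $l(\pi_i)$ and $l(\pi_{i+1})$ inside row $j$, so $Q$ must run up $l(\pi_i)$, take a single crossing edge $\vertex{\pi_i,h^*}\to\vertex{\pi_{i+1},h^*}$, and run up $l(\pi_{i+1})$, with $l(h^*)$ between $l(b)$ and $l(b')$ and both endpoints of the crossing edge activated. To produce such an $h^*$ I would use that the free space inside the cell $\cell{i,j}$ is convex, so the heights at which $\vertex{\pi_i,\cdot}$ (resp. $\vertex{\pi_{i+1},\cdot}$) is activated form a contiguous interval $J_L\ni b$ (resp. $J_R\ni b'$); any $h^*\in J_L\cap J_R$ lying weakly between $l(b)$ and $l(b')$ then works, the vertical stretches being activated automatically because $J_L$ and $J_R$ are intervals. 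The hard part is showing that $J_L\cap J_R$ really does contain such a height: besides the convexity of $\cell{i,j}$ this should use the fact that $P$ itself crosses the cell monotonically (so $l(b)$ lies below $l(b')$ and the straight segment between the corresponding free-space points lies in $\cell{i,j}$) and, whenever $P$ meets a corner of $\cell{i,j}$, Observation~\ref{obs:corner-adjacent-to-boundary-1}. Concatenating all the up-stretches and column crossings yields a path $Q$ in $\fsgraph$ from $\vertex{p,\sigma_j}$ to $\vertex{p',q'}$ through activated vertices only, as required.
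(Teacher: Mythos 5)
Your construction broadly parallels the paper's: both decompose the $\ggraph$-path into pieces and reassemble the corresponding pieces inside $\fsgraph$, using that interior vertices have weight $1$. The difference is granularity. The paper simply deletes all placeholder vertices from $P$, obtaining subpaths $P_1,\dots,P_u$, and then asserts that consecutive endpoints $b_i,a_{i+1}$ can be rejoined in $\fsgraph$ ``since the interior vertices have weight $1$.'' You instead organise the reconstruction column by column, routing up $l(\pi_i)$, across the column via its rightmost real grid line if one exists, and then across a single $\fsgraph$-edge to $l(\pi_{i+1})$.

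The crucial case you isolate --- a column $i$ containing no real vertical grid line, so that $l(\pi_i)$ and $l(\pi_{i+1})$ are adjacent in $\fsgraph$ and $Q$ must cross $\cell{i,j}$ in a single L-shaped staircase at some height $h^*\in J_L\cap J_R$ --- is exactly the case the paper's argument does not cover. When $b_i$ and $a_{i+1}$ both sit on FSD boundaries, the rejoining path $Q_i$ is forced to traverse boundary vertices, whose activation is \emph{not} guaranteed by ``interior vertices have weight $1$.'' You recognise this and honestly say the hard part is showing $J_L\cap J_R$ actually contains a usable height. That concern is well founded, and convexity plus the monotonicity of $P$ do not seem to close it: the free-space ellipse in $\cell{i,j}$ can be thin and tilted so that it touches $l(\pi_i)$ and $l(\pi_{i+1})$ in two \emph{disjoint} height intervals while never touching $l(\sigma_j)$ or $l(\sigma_{j+1})$; then column $i$ indeed has no vertical grid line, $J_L\cap J_R=\emptyset$, and no rectilinear $\fsgraph$-staircase crosses the cell even though the diagonal segment does lie in the free space. (In $\ggraph$, $P$ can still cross because the column placeholder vertices are weight $1$ irrespective of the ellipse, which is precisely what makes this dangerous.) In other words, the gap you flag is a real gap in the lemma as the paper proves it, not merely a hole in your own write-up.

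Two smaller points on your text. First, you assert that $\vertex{p',q'}$ is ``necessarily a non-placeholder vertex''; your argument covers $q'$ but not $p'$, which in principle could be a column placeholder point --- this is an unstated assumption shared with the paper's statement. Second, your appeal to ``the weight-coupling rule'' to certify real $l(\sigma_j)$-vertices in $\fsgraph$ is off: that coupling is a property of $\ggraph$'s placeholder vertices, not of genuine boundary vertices in $\fsgraph$; what you actually need there is convexity of the free space on the boundary segment. Neither of these affects the main comparison: your proof takes essentially the same route as the paper with a finer decomposition, correctly surfaces the one genuinely nontrivial step, and leaves it open --- which the paper's proof also does, only silently.
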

\begin{proof}
    \begin{figure}[tbh]
        \centering
        \includegraphics[width=0.8\textwidth]{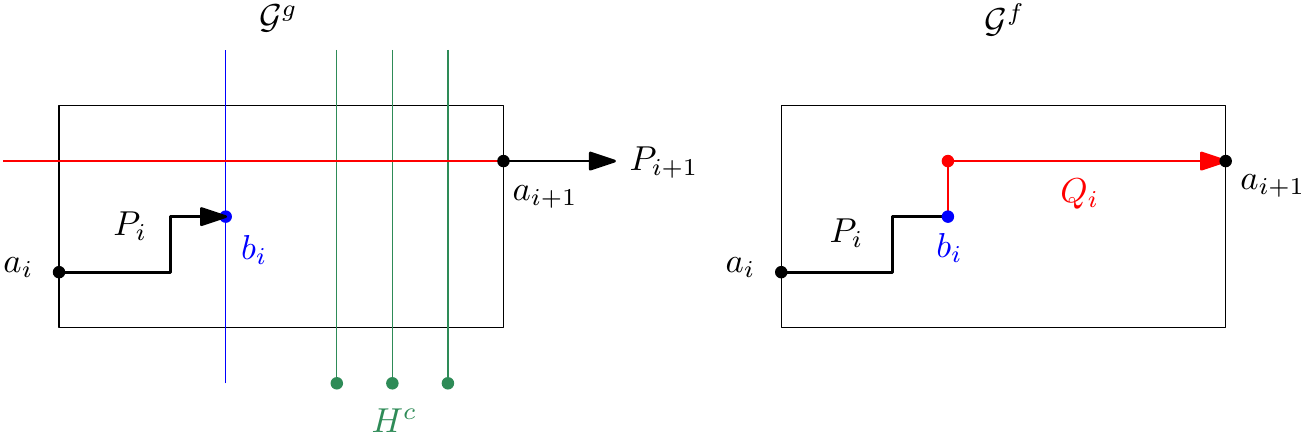}
        \caption{A path $Q_i$ in $\fsgraph$ can be constructed by connecting the vertex $b_i$ where $P_i$ ends and the vertex $a_{i + 1}$ where $P_{i + 1}$ starts.}
        \label{fig:path-cross-vert-placeholder-line}
    \end{figure}

    Let $\{P_1, ..., P_u\}$ be the subpaths of $P$ generated by removing all placeholder vertices from $P$ (see Figure~\ref{fig:path-cross-vert-placeholder-line}). For $1 \leq i \leq u$, let $P_i$ be a path starting from $a_i$ and ending at $b_i$. By Observation~\ref{obs:gg-fs-path-in-a-cell}, the path $P_i$ also exists in $\fsgraph$. Since the placeholder vertices are removed, every $a_i$ or $b_i$ is either a boundary vertex or an interior vertex. Since the interior vertices have weight $1$, a path $Q_i$ from $b_i$ to $a_{i + 1}$ exists in $\fsgraph$. We set  $Q = (\bigcup_{1 \leq i \leq u - 1} P_i \cup Q_i) \cup P_u$ to complete the proof. 
\end{proof}

\begin{restatable}{lemma}{ggraphrowcrosspath} \label{lem:ggraph-placeholder-row-path}
    Let $P$ be a path in $\ggraph = \ggraph_{\delta}(\pi, \sigma)$. Let $(\vertex{p, q}, \vertex{p, H^r_j[1]})$ be the first edge of $P$, and let $P$ ends at a non-placeholder vertex $\vertex{p', \sigma_{j + 1}}$. There exists a path $Q$ from $\vertex{p, q}$ to $\vertex{p', \sigma_{j + 1}}$ in $\fsgraph = \fsgraph_{\delta}(\pi, \sigma)$.  
\end{restatable}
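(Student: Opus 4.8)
The plan is to follow the template of the proof of Lemma~\ref{lem:ggraph-column-cross-path}, but with one genuine new difficulty: deleting the placeholder vertices from $P$ leaves only its two endpoints $\vertex{p,q}$ and $\vertex{p',\sigma_{j+1}}$, since every other vertex of $P$ either lies on a placeholder line of $H^r_j$ or lies on $l(\sigma_{j+1})$. So, instead of stitching together $\fsgraph$-subpaths of $P$ as in Lemma~\ref{lem:ggraph-column-cross-path}, I have to build $Q$ from scratch, and the task is to traverse the horizontal slab lying between $l(q)$ and the boundary $l(\sigma_{j+1})$ --- a slab that in $\fsgraph$ contains \emph{no} horizontal grid line at all.

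First I would pin down the shape of $P$. Since the first edge $(\vertex{p,q},\vertex{p,H^r_j[1]})$ takes $P$ above every grid line of row~$j$ into the placeholder strip $H^r_j$, the line $l(q)$ is the topmost grid line of row~$j$ (or $l(\sigma_j)$ if row~$j$ has no critical point); consequently, in $\fsgraph$ the line immediately above $l(q)$ is $l(\sigma_{j+1})$. Because $P$ is monotone and ends on $l(\sigma_{j+1})$, we have $p_x\le p'_x$, and, since each edge of $P$ moves to the same or to the immediately adjacent vertical line, $P$ visits every vertical line of $\ggraph$ lying between $l(p)$ and $l(p')$; in particular it visits every FSD boundary $l(\pi_i)$ with $p_x\le i\le p'_x$, and when it does so it sits at an activated vertex in column $\pi_i$ that either lies on a placeholder line of $H^r_j$ or equals $\vertex{\pi_i,\sigma_{j+1}}$.

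Next I would extract the relevant activations. As $P$ is a feasible path, all of its vertices are activated, so by the weight rule for placeholder-row vertices (or by membership of $\vertex{\pi_i,\sigma_{j+1}}$ in $P$) we get $\weight{\vertex{\pi_i,\sigma_{j+1}}}=1$ for every $i$ with $p_x\le i\le p'_x$. I would then use the geometric fact that turns the ``topmost grid line'' hypothesis into reachability: for any FSD boundary $l(\pi_i)$ the set $\{y\in[j,j+1]:\abse{\pi_i-\sigma(y)}\le\delta\}$ is an interval (distance from a point is convex along the segment $\cseg{\sigma_j}$), and if this interval contains $j+1$ then its lower endpoint is at most the height of $l(q)$ --- otherwise that endpoint would be a row critical point of row~$j$ strictly above $l(q)$, contradicting the maximality of~$q$. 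Hence $\vertex{\pi_i,q}=l(\pi_i)\cap l(q)$ is in the free space, i.e.\ it is an activated vertex of $\fsgraph$, for every $i$ with $p_x\le i\le p'_x$.

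Finally I would take $Q$ to run along the grid line $l(q)$ from $\vertex{p,q}$ rightward to $\vertex{p',q}$, and then take the single edge $\vertex{p',q}\to\vertex{p',\sigma_{j+1}}$, which exists because $l(\sigma_{j+1})$ is immediately above $l(q)$. Each intermediate vertex of $Q$ on $l(q)$ is either of the form $\vertex{\pi_i,q}$ with $p_x\le i\le p'_x$ --- activated by the previous step --- or the intersection of two grid lines, hence an interior vertex, activated by construction; moreover $\vertex{p,q}$ and $\vertex{p',\sigma_{j+1}}$ are activated as endpoints of $P$, and $\vertex{p',q}$ is activated as either an interior vertex or one of the $\vertex{\pi_i,q}$. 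Thus $Q$ is a path in $\fsgraph$ through activated vertices only, as required. The main obstacle, as noted, is the slab-crossing: one cannot route through interior vertices alone, because any $\fsgraph$-path from $l(q)$ to $l(\sigma_{j+1})$ must pass through boundary vertices of $l(\sigma_{j+1})$ and of the FSD boundaries, whose weights are not a priori~$1$; it is precisely the two ingredients above --- the placeholder-row vertices of $P$ certify that the corner vertices on $l(\sigma_{j+1})$ are free, and the ``topmost grid line'' property forces the free portion of each $l(\pi_i)$ in row~$j$ to reach from at or below $l(q)$ up to $l(\sigma_{j+1})$ --- that make the route ``along $l(q)$, then jump straight up'' legal. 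The degenerate case in which row~$j$ has no row critical point (so $l(q)=l(\sigma_j)$) needs a small extra convexity argument within the cells of row~$j$, in the spirit of Observation~\ref{obs:corner-adjacent-to-boundary-1}.
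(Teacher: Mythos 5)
There is a genuine gap in the step where you argue that each $\vertex{\pi_i,q}$ is activated. You claim: if the free interval on $l(\pi_i)\cap[j,j+1]$ contains $j+1$, then its lower endpoint lies at or below the height of $l(q)$, ``otherwise that endpoint would be a row critical point of row~$j$ strictly above $l(q)$.'' This misses one configuration: the lower endpoint can equal $j+1$ itself, i.e.\ the free set on $l(\pi_i)$ within row~$j$ is exactly the single corner $\{(i,j+1)\}$. This happens precisely when $\abse{\pi_i-\sigma_{j+1}}=\delta$ and $\abse{\pi_i-\sigma(y)}>\delta$ for all $y\in[j,j+1)$, which is a perfectly legitimate state of the free space. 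In that case the endpoint is a \emph{corner} point, not a critical point, so it generates no grid line, and no contradiction with ``$l(q)$ is the topmost grid line of row~$j$'' arises. Nevertheless $\weight{\vertex{\pi_i,\sigma_{j+1}}}=1$ (so the placeholder vertices $\vertex{\pi_i,h}$, $h\in H^r_j$, are activated and $P$ can legally cross column $\pi_i$ in $\ggraph$), while $\weight{\vertex{\pi_i,q}}=0$ — so your route ``along $l(q)$, then one edge up'' passes through a deactivated boundary vertex and is not feasible.

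This is not a peripheral degeneracy you can wave off (and it is distinct from the $l(q)=l(\sigma_j)$ case you flag at the end): it shows that a path crossing a column strictly above $l(q)$ need not be able to cross that column \emph{on} $l(q)$. The fix is to do most of the horizontal travel on $l(\sigma_{j+1})$ rather than on $l(q)$, which is exactly what the paper does: it lets $\{\pi_u,\dots,\pi_w\}$ be the FSD boundaries that $P$ meets, argues (via the placeholder weight rule) that every corner $\vertex{\pi_i,\sigma_{j+1}}$ with $u\le i\le w$ is activated, uses the interval property of the free space on $l(\sigma_{j+1})$ to connect consecutive activated corners \emph{along $l(\sigma_{j+1})$}, and uses Observation~\ref{obs:corner-adjacent-to-boundary-1} together with in-cell convexity only for the two end-pieces, from $\vertex{p,q}$ up to $\vertex{\pi_u,\sigma_{j+1}}$ and from $\vertex{\pi_w,\sigma_{j+1}}$ to $\vertex{p',\sigma_{j+1}}$. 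Your first two paragraphs (in particular, extracting that all the relevant corner vertices on $l(\sigma_{j+1})$ are free) are correct and match the key ingredient of the paper's proof; it is only the final routing step that needs to be moved from $l(q)$ up to $l(\sigma_{j+1})$.
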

\begin{proof}
    If $P$ does not contain a vertex defined by any vertical boundary, then the lemma trivially holds as the interior vertices have weight $1$. Otherwise, $P$ contains a set of vertices defined by $\{\pi_u, ..., \pi_w\}$ in increasing order of indices. By Observation~\ref{obs:corner-adjacent-to-boundary-1} and convexity of the freespace in a cell, there is a path $P_1$ in $\fsgraph$ from $\vertex{p, q}$ to $\vertex{\pi_u, \sigma_{j + 1}}$, and a path $P_2$ from $\vertex{\pi_w, \sigma_{j + 1}}$ to $\vertex{p', \sigma_{j + 1}}$. 
    
    Since $P$ uses only activated vertices, for every $u \leq i \leq w$, $\weight{\vertex{\pi_i, h}} = 1$ for some $h \in H^r_j$. By construction of $\ggraph$, $\vertex{\pi_i, h}$ is activated if and only if $\vertex{\pi_i, \sigma_{j + 1}}$ is activated, whose weight must also be $1$. By the convexity of the freespace within a cell, since $\vertex{\pi_i, \sigma_{j + 1}}$ and $\vertex{\pi_{i + 1}, \sigma_{j + 1}}$ are activated, for every $l(p)$ lying between $l(\pi_i)$ and $l(\pi_{i + 1})$, the vertex $\vertex{p, \sigma_{j + 1}}$ is activated. Therefore, $\forall u \leq i \leq w$, there is a path $P_i \subset \fsgraph$ from $\vertex{\pi_i, \sigma_{j + 1}}$ to $\vertex{\pi_{i + 1}, \sigma_{j + 1}}$ using activated vertices. We set $Q = P_1 \cup (\bigcup_{u \leq i \leq w} P_i) \cup P_2$ to complete the proof. 
\end{proof}

We are finally ready to show that $\fsgraph$ and $\ggraph$ are equivalent in terms of $st$-reachability. 
\begin{lemma} \label{lem:fs-gg-path-equivalent}
    There exists a feasible path $P^f$ in the freespace graph $\fsgraph_\delta(\pi, \sigma)$ if and only if there exists a feasible path $P^g$ in the grid graph $\ggraph_\delta(\pi, \sigma)$. 
\end{lemma}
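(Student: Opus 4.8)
The plan is to prove both directions of the equivalence separately, leaning on the lemmas and observations already established.

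\textbf{Direction ($\Rightarrow$): from $\fsgraph$ to $\ggraph$.} Suppose there is a feasible path $P^f$ in $\fsgraph_\delta(\pi, \sigma)$ from $\vertex{\pi_1, \sigma_1}$ to $\vertex{\pi_n, \sigma_n}$. Since $\ggraph$ contains every vertex of $\fsgraph$ (the placeholder vertices are extra), and since the relative ordering of all non-placeholder grid lines and FSG boundaries is preserved, every edge of $\fsgraph$ is realized in $\ggraph$ possibly as a chain through placeholder vertices: a horizontal edge $(\vertex{p,q},\vertex{p,q'})$ of $\fsgraph$, where $l(q)$ is immediately below $l(q')$ among non-placeholder lines, corresponds in $\ggraph$ to walking up through the (activated-by-construction, hence weight-$1$) placeholder vertices sitting between $l(q)$ and $l(q')$, and similarly for vertical edges. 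Here I use that a placeholder vertex $\vertex{p,h}$ that is \emph{not} on a boundary has weight $1$ by definition, and a placeholder vertex on a boundary, say $\vertex{\pi_i, h}$ with $h \in H^r_j$, has weight equal to $\weight{\vertex{\pi_i,\sigma_{j+1}}}$; but such a boundary placeholder vertex is only traversed when the path is moving along $l(\pi_i)$ from somewhere below $l(\sigma_j)$ up across the placeholder strip toward $\vertex{\pi_i,\sigma_{j+1}}$, and feasibility of $P^f$ at the endpoints guarantees the needed weight-$1$ condition. Concatenating these local replacements yields a feasible path $P^g$ in $\ggraph$. I expect this direction to be routine.

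\textbf{Direction ($\Leftarrow$): from $\ggraph$ to $\fsgraph$.} Suppose there is a feasible path $P^g$ in $\ggraph_\delta(\pi,\sigma)$. By Lemma~\ref{lem:feasible-path-properties}, we may assume $P^g$ is rectilinear and that for every $1 \le j \le n$ the first vertex of $P^g$ partly defined by $\sigma_j$ is a non-placeholder vertex. Now partition $P^g$ according to which ``row strip'' of placeholder lines it is crossing. Concretely, between the first vertex defined by $\sigma_j$ and the first vertex defined by $\sigma_{j+1}$, the path either stays among real grid lines and FSG boundaries (in which case Observation~\ref{obs:gg-fs-path-in-a-cell} already gives the corresponding piece in $\fsgraph$), or it dips into the horizontal placeholder strip $H^r_j$, entering via an edge of the form $(\vertex{p,q},\vertex{p,H^r_j[1]})$ with $q \neq H^r_j[1]$ and eventually leaving to reach the non-placeholder vertex $\vertex{p',\sigma_{j+1}}$. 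I would isolate the maximal sub-path $P'$ of $P^g$ that begins with such an entering edge into $H^r_j$ and ends at $\vertex{p',\sigma_{j+1}}$; Lemma~\ref{lem:ggraph-placeholder-row-path} produces a path $Q$ in $\fsgraph$ from $\vertex{p,q}$ to $\vertex{p',\sigma_{j+1}}$. For the portion that starts at a real vertex and crosses a \emph{vertical} placeholder strip before a corner, Lemma~\ref{lem:ggraph-column-cross-path} plays the symmetric role. Stitching together all the $\fsgraph$-pieces obtained this way — the Observation~\ref{obs:gg-fs-path-in-a-cell} pieces on real grid lines and the Lemma~\ref{lem:ggraph-column-cross-path}/\ref{lem:ggraph-placeholder-row-path} replacements across placeholder strips — and noting consecutive pieces share their shared endpoint, yields a path in $\fsgraph$ from $\vertex{\pi_1,\sigma_1}$ to $\vertex{\pi_n,\sigma_n}$; feasibility (weight $1$ throughout) is inherited because every replacement lemma only uses activated vertices.

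\textbf{Main obstacle.} The delicate point is the bookkeeping of the decomposition of $P^g$ into the three kinds of segments and verifying that the endpoints of consecutive segments genuinely coincide, so the $\fsgraph$-pieces concatenate into a single path. In particular, one must check that after the normalization of Lemma~\ref{lem:feasible-path-properties} the path only ever enters a placeholder strip through the ``$[1]$'' line in the monotone fashion required by Lemmas~\ref{lem:ggraph-column-cross-path} and~\ref{lem:ggraph-placeholder-row-path}, and that a column-strip crossing always lands on a corner vertex (so that Lemma~\ref{lem:ggraph-column-cross-path} applies). Handling the boundary cases at the global start $\vertex{\pi_1,\sigma_1}$ and end $\vertex{\pi_n,\sigma_n}$, which by Lemma~\ref{lem:feasible-path-properties} are already non-placeholder, closes the argument.
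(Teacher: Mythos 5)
Your proof is correct, and the backward direction is essentially the paper's argument: apply Lemma~\ref{lem:feasible-path-properties} to normalize $P^g$, partition it into per-row subpaths $P^g_j$ from the first non-placeholder vertex on $l(\sigma_j)$ to the first non-placeholder vertex on $l(\sigma_{j+1})$, split each $P^g_j$ at the first edge into $l(H^r_j[1])$, and stitch together the $\fsgraph$-pieces produced by Lemmas~\ref{lem:ggraph-column-cross-path} and~\ref{lem:ggraph-placeholder-row-path} at the shared split vertex. Your forward direction differs slightly from the paper's: you replace each $\fsgraph$ edge by the corresponding rectilinear chain through the placeholder vertices lying between consecutive $\fsgraph$ lines, and observe that interior placeholders always carry weight~$1$ while boundary placeholders inherit the weight of the neighbouring corner vertex, which is $1$ by feasibility of $P^f$. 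The paper instead partitions $P^f$ cell-by-cell and routes through interior vertices, using a diagonal edge to land on a corner; both work, and your edge-level replacement is arguably more direct since it never needs the diagonal edges.

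One small misreading worth flagging: you worry whether ``a column-strip crossing always lands on a corner vertex (so that Lemma~\ref{lem:ggraph-column-cross-path} applies).'' That lemma imposes no such requirement. It applies to the entire prefix of $P^g_j$ from the starting vertex $\vertex{p,\sigma_j}$ up to (and including) the first edge into $l(H^r_j[1])$; any crossings of vertical placeholder strips $H^c_i$ that occur in this prefix are handled \emph{inside} the lemma's proof by deleting placeholder vertices and reconnecting the resulting fragments through interior (weight-$1$) vertices. Also note a minor typo in your forward direction: the boundary placeholder vertices $\vertex{\pi_i,h}$ with $h\in H^r_j$ sit between the top grid line of row~$j$ and $l(\sigma_{j+1})$, so the relevant corner is $\vertex{\pi_i,\sigma_{j+1}}$ and the path climbs from below $l(\sigma_{j+1})$, not from below $l(\sigma_j)$.
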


\begin{proof}
    First, we observe that if there is a feasible path $P^f \subseteq \fsgraph$, then there is a feasible path $P^g \subseteq \ggraph$. 
    Consider a partition of $P^f$ into subpaths lying in different cells of the freespace diagram. Let the subpath $P^f_{i, j} \subseteq P$ start from the vertex $a$ to the vertex $b$, where $a$ lies on the bottom or left boundary of $\cell{i, j}$, and $b$ lies on the top or right boundary of $\cell{i, j}$. Since the interior vertices have weight 1, and a diagonal edge can be used if $b$ is a corner vertex, there is a path $P^g_{i, j} \subseteq \ggraph$ from $a$ to $b$ using activated vertices. 

    Second, we show that if there is a feasible path $P^g \subseteq \ggraph$, then there is a feasible path $P^f \subseteq \fsgraph$. We use an analogous argument where we construct a feasible path $P^f$ using subpaths in $P^g$. By Lemma~\ref{lem:feasible-path-properties}, we know that $P^g$ uses exclusively rectilinear edges. Furthermore, by the same Lemma~\ref{lem:feasible-path-properties}, $P^g$ can be partitioned into subpaths $\{P^g_1, ..., P^g_{n - 1}\}$ such that $\forall 1 \leq j \leq n - 1$, $P^g_j$ starts at a non-placeholder vertex $a = \vertex{p, \sigma_j}$, and ends at another non-placeholder vertex $b = \vertex{p', \sigma_{j + 1}}$. We partition $P^g_j$ using its first edge $(a', b')$, where vertex $a'$ is partly defined by a placeholder point $H^r_j[1]$. By Lemma~\ref{lem:ggraph-column-cross-path}, there is a path $P_j^{fa}$ in $\fsgraph$ from $a$ to $a'$. By Lemma~\ref{lem:ggraph-placeholder-row-path}, there is a path $P_j^{fb}$ in $\fsgraph$ from $a'$ to $b$. We set $P^f_j = P^{fa}_j \cup P^{fb}_j$ and $P^f = \bigcup_{j} P^f_j$ to complete the proof.  
\end{proof}

Now, we show that freespace graph operations can be implemented in the grid graph efficiently. 
\begin{lemma} \label{lem:ggraph-number-of-update-per-op}
    Let $\ggraph_i$ be the associated grid graph of the freespace graph $\fsgraph_i$. Let $u$ be a freespace graph operation that updates $\fsgraph_1$ to $\fsgraph_2$. To update $\ggraph_1$ to $\ggraph_2$, it is sufficient to update the weights of at most: 
    \begin{itemize}
        \item $\Oh(n)$ vertices if $u$ is a corner vertex operation,
        \item $\Oh(1)$ vertices if $u$ is a boundary vertex operation, or
        \item $\Oh(n)$ vertices if $u$ is a row operation. 
    \end{itemize}
\end{lemma}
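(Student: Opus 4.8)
The plan is to treat the three kinds of freespace graph operations separately, and for each one to pin down exactly which vertices of $\ggraph$ carry a different weight in $\ggraph_2$ than in $\ggraph_1$. The two vertex operations will be routine; the row operation is the crux.

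First I would handle the corner vertex operation: suppose $u$ activates or deactivates $\vertex{\pi_i, \sigma_j}$. Apart from this corner vertex itself, the only weights in $\ggraph$ that are defined in terms of $\weight{\vertex{\pi_i, \sigma_j}}$ are those of the placeholder vertices for which $\vertex{\pi_i, \sigma_j}$ is the adjacent corner vertex above or to the right, namely $\vertex{\pi_i, h}$ for $h \in H^r_{j-1}$ and $\vertex{h, \sigma_j}$ for $h \in H^c_{i-1}$. There are $|H^r_{j-1}| + |H^c_{i-1}| = (2n - m^r_{j-1}) + (2n - m^c_{i-1}) = \Oh(n)$ of these, every interior weight equals $1$, and all remaining boundary and corner weights are untouched, so $\Oh(n)$ weight updates suffice. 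For the boundary vertex operation, $u$ flips a single boundary vertex; since no placeholder weight is defined through a boundary vertex (they are defined only through corner vertices) and interior weights are fixed, flipping that one vertex is all that is needed, i.e.\ $\Oh(1)$.

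The bulk of the argument is the row operation. Consider inserting the row $\row{p}$ of $\fsgraph$, where $l(p)$ is the horizontal grid line of a newly appearing row critical point of row $j$ that sits at rank $t$ among the $m^r_j$ row critical lines $c_1, \dots, c_{m^r_j}$ of that row, listed by increasing height (deletion is symmetric). In $\ggraph$ the part between the horizontal FSD boundaries $l(\sigma_j)$ and $l(\sigma_{j+1})$ is a fixed block of $2n$ consecutive rows, with the $m^r_j$ critical lines at the lower positions and the $2n - m^r_j$ placeholder lines $H^r_j$ above them; $\ggraph_2$ differs from $\ggraph_1$ only in that $l(p)$ now occupies position $t$ of this block, with $c_t, \dots, c_{m^r_j}$ shifted up by one position and the placeholder block shifted up by one (now $2n - m^r_j - 1$ lines). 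A priori this relabels $\Theta(n)$ rows of $\ggraph$, each carrying $\Theta(n)$ boundary vertices, so one fears $\Theta(n^2)$ weight changes --- this is the obstacle. I would dispose of it with three observations. (i) Interior vertices always have weight $1$, so the shift does not affect them. (ii) A placeholder boundary vertex $\vertex{\pi_a, h}$ with $h \in H^r_j$ has weight $\weight{\vertex{\pi_a, \sigma_{j+1}}}$ no matter which placeholder line it lies on, so the placeholder-to-placeholder part of the shift changes nothing. (iii) A boundary vertex of the form $\vertex{\pi_a, \cdot}$ lying on a horizontal grid line of height $y$ is activated exactly when $(a, y)$ is in the freespace, i.e.\ when $y$ lies in the freespace of the vertical segment $\{a\} \times [j, j+1]$; since $\sigma$ restricted to $[j, j+1]$ parametrizes a line segment and intersecting that segment with a disk of radius $\delta$ yields a connected subsegment, this freespace is a single interval $[\alpha_a, \beta_a]$. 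The shift replaces, along the affected rows of $\ggraph$, the height sequence $(y_{c_t}, \dots, y_{c_{m^r_j}}, j+1)$ by $(y_p, y_{c_t}, \dots, y_{c_{m^r_j}})$ --- that is, at each affected row the height is replaced by its predecessor in the sorted list $y_p < y_{c_t} < \dots < y_{c_{m^r_j}} < j+1$. Because the gaps between consecutive entries of this list are pairwise disjoint, each of $\alpha_a, \beta_a$ falls into at most one gap, so for each fixed column $a$ at most two of the affected rows flip the weight of their column-$a$ vertex. Summing over the $n$ columns, and adding the at most $n$ boundary weights that still have to be set on $l(p)$ itself, yields $\Oh(n)$ updates.

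I expect observation (iii) --- the interval argument that collapses the naive $\Theta(n^2)$ count for a row operation down to $\Oh(n)$ --- to be the main obstacle, since it is the one place where the geometry of the freespace (convexity within a cell, hence the interval structure of the freespace along a grid line restricted to a row) is genuinely used; the vertex-operation cases and the bookkeeping behind (i) and (ii) are routine.
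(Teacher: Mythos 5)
Your corner-vertex and boundary-vertex cases are essentially the paper's. The row operation, however, is handled by a genuinely different mechanism. The paper does \emph{not} shift the rows $c_t,\dots,c_{m^r_j}$ up: it parks the new row at the position currently held by $H^r_j[1]$ (i.e.\ above \emph{all} grid lines in the block), consumes that placeholder, and uses the interval property only to set the $n$ weights $\weight{\vertex{\pi_i,p}}$ from the two neighbouring rows $\row{p_a}$, $\row{p_b}$; deletion is the reverse, converting the topmost grid line back into a placeholder. So the paper never faces the apparent $\Theta(n^2)$ shift cost at all, whereas you insert $l(p)$ at its geometric rank~$t$, shift the block, and then invoke the interval structure to show that the shift touches only $\Oh(1)$ weights per column. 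Both arguments are correct and give $\Oh(n)$, and both lean on the convexity/interval fact, but for different purposes. What your version buys is that it visibly preserves the invariant that the grid lines inside each row block of $\ggraph$ appear in geometric order — the invariant that Observation~\ref{obs:gg-fs-path-in-a-cell} and hence Lemma~\ref{lem:fs-gg-path-equivalent} rest on. The paper's reuse-the-placeholder trick is cheaper bookkeeping, but as written it leaves the new grid line out of geometric order within the block, which strictly speaking calls for an extra remark explaining why monotone reachability across the row is unaffected; your shift-and-count argument sidesteps that subtlety, at the price of being the harder bound to prove (your observation~(iii), which is indeed the crux as you anticipated).
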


\begin{proof}
    If $u$ is a corner vertex operation activating a corner vertex $\vertex{\pi_i, \sigma_j}$, we activate $\vertex{\pi_i, \sigma_j}$ in $\ggraph$. Then, we activate $\vertex{\pi_i, h_1}$ for every $h_1 \in H^r_{j - 1}$, and activate $\vertex{h_2, \sigma_j}$ for every $h_2 \in H^c_{i - 1}$. Since there are at most a linear number of placeholder points defined per row and column, this operation requires us to change the weights of $\Oh(n)$ vertices in $\ggraph$. If $u$ activates a boundary vertex $a$, we simply activate $a$ in $\ggraph$. If $u$ deactivates a vertex, we use analogous procedure. 
    
    If $u$ is a boundary vertex operation, to insert a row of vertices while maintain the properties of the grid graph, we take advantage of the fact that the intersection between the freespace and a cell boundary is a single interval. Specifically, to insert $\row{p}$ of vertices below $\row{p_a}$ and above $\row{p_b}$ in row $j$, let the critical point $p$ lie on the $i'$th vertical boundary. For $1 \leq i \leq n$, if $i = i'$, then set $\weight{\vertex{\pi_{i'}, H^r_j[1]}} = 1$. For every other value of $i$, set $\weight{\vertex{\pi_i}, H^r_j[1]} = 1$, if the weights of both $\vertex{\pi_i, p_a}$ and $\vertex{\pi_i, p_b}$ are 1. Otherwise, set $\weight{\vertex{\pi_i}, H^r_i[1]} = 0$. Reduce the size of $H^r_j$ by one by removing $H^r_j[1]$. 
    
    We prove the correctness of the boundary vertex operation by showing that this insertion process maintains the properties of the grid graph. First, the total number of row critical points plus the placeholder points stays the same. Second, it is sufficient to determine the weight of $\vertex{\pi_i, p}$ by checking the weights of $\vertex{\pi_i, p_a}$ and $\vertex{\pi_i, p_a}$. Both intersections $l(\pi_i) \cap l(p_a)$ and $l(\pi_i) \cap l(p_b)$ need to lie in the freespace for $l(\pi_i) \cap l(p)$ to lie in the freespace, since the opposite suggests that there is a grid line between $l(p_a)$ and $l(p_b)$, contradicting the assumption that $l(p_a)$ and $l(p_b)$ are adjacent.
    
    If $u$ is a row operation, to delete $\row{p}$ lying in the $j$th row, let $l(q)$ be the first grid line below $l(H^r_j[1])$. For all $1 \leq i \leq n$, set $\weight{\vertex{\pi_i, q}} = \weight{\vertex{\pi_i, \sigma_j}}$. Insert a new placeholder point at the beginning of $H^r_j$ by setting $H^r_j[1] = q$. This process also maintains the properties of the grid graph. Column operations uses analogous arguments. 
\end{proof}

Given $\ggraph_{i - 1} = \ggraph_\delta(\pi, \sigma + t_{i - 1})$ and event $t_{i}$, we can now transform $\ggraph_{i - 1}$ to $\ggraph_{i}$. Specifically, in Lemma~\ref{lem:fsgraph-event-updates}, we have shown that if $t_{i}$ is a VE event, it takes $\Oh(n)$ time plus $\Oh(1)$ corner vertex operations or row operations. If $t_i$ is a VVE event, it takes $\Oh(1)$ time plus $\Oh(1)$ boundary vertex operations. By combining Lemma~\ref{lem:fsgraph-event-updates} and Lemma~\ref{lem:ggraph-number-of-update-per-op}, we can bound the number of vertex weight changes for each freespace event type. 
\begin{lemma} \label{lem:event-to-weight-changes}
    Given $\ggraph_\delta(\pi, \sigma + t_{i - 1})$ and the next event $t_i$, to compute $\ggraph_\delta(\pi, \sigma + t_{i})$, it takes 
    \begin{itemize}
        \item $\Oh(n)$ vertex weight changes if $t_i$ is a VE event, or
        \item $\Oh(1)$ vertex weight changes if $t_i$ is a VVE event. 
    \end{itemize}
\end{lemma}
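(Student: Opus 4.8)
The plan is to combine the two preceding lemmas in the obvious bookkeeping way. Lemma~\ref{lem:fsgraph-event-updates} tells us, for each event type, how many \emph{freespace graph operations} (and how much auxiliary processing time) are needed to go from $\fsgraph_{i-1}$ to $\fsgraph_i$; Lemma~\ref{lem:ggraph-number-of-update-per-op} tells us, for each \emph{freespace graph operation type}, how many vertex-weight changes in $\ggraph$ suffice to realize it. So the proof is essentially a case analysis over the event type $t_i$, composing the two bounds multiplicatively (a constant number of operations, each costing at most $\Oh(n)$ weight changes, yields $\Oh(n)$ weight changes total).

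\begin{proof}
    Suppose first that $t_i$ is a VE event. By Lemma~\ref{lem:fsgraph-event-updates}, updating $\fsgraph_{i-1}$ to $\fsgraph_i$ requires, in the worst case, $\Oh(1)$ row operations (for an appearing/disappearing event) or $\Oh(1)$ corner vertex operations (for an entering/leaving event), together with $\Oh(n)$ processing time to identify the affected row and the neighbouring grid lines. By Lemma~\ref{lem:ggraph-number-of-update-per-op}, each such corner vertex operation or row operation can be simulated in $\ggraph$ by changing the weights of at most $\Oh(n)$ vertices (a corner vertex operation touches the corner vertex together with the $\Oh(n)$ adjacent row- and column-placeholder vertices that copy its weight; a row operation rewrites a linear number of vertices along the boundary). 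Composing a constant number of operations, each costing $\Oh(n)$ weight changes, yields $\Oh(n)$ weight changes in total for the VE case.

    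Now suppose $t_i$ is a VVE event. By Lemma~\ref{lem:fsgraph-event-updates}, updating $\fsgraph_{i-1}$ to $\fsgraph_i$ requires $\Oh(1)$ boundary vertex operations and $\Oh(1)$ processing time, since only the four vertices $\vertex{\pi_i, p}, \vertex{\pi_j, p}, \vertex{\pi_i, q}, \vertex{\pi_j, q}$ (in the notation of that proof) are affected by the overlap or separation of the two grid lines. By Lemma~\ref{lem:ggraph-number-of-update-per-op}, each boundary vertex operation is simulated in $\ggraph$ by $\Oh(1)$ vertex-weight changes. Hence the VVE case costs $\Oh(1)$ weight changes in total. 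This establishes both items of the lemma.
\end{proof}

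**The only mildly delicate point** — and the one I would double-check rather than call routine — is the matching of terminology between the two lemmas: Lemma~\ref{lem:fsgraph-event-updates} is phrased in terms of ``corner vertex operations'', ``boundary vertex operations'', and ``row operations'', while Lemma~\ref{lem:ggraph-number-of-update-per-op} quantifies ``corner vertex operation'', ``boundary vertex operation'', and ``row operation'' separately; one must make sure the appearing/disappearing event, which Lemma~\ref{lem:fsgraph-event-updates} charges to row operations, is indeed covered by the $\Oh(n)$-vertex row-operation branch of Lemma~\ref{lem:ggraph-number-of-update-per-op}, and that the overlapping/separating events map to the $\Oh(1)$-vertex boundary-vertex-operation branch. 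Once that correspondence is pinned down, nothing further is needed; there is no genuine obstacle here, as this lemma is purely the arithmetic glue between the FSG-level and grid-graph-level accounting.
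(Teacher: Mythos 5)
Your proof is correct and follows exactly the paper's approach: the paper itself does not give a separate proof for this lemma but instead states, immediately before it, that it follows by combining Lemma~\ref{lem:fsgraph-event-updates} (mapping events to FSG operation counts) with Lemma~\ref{lem:ggraph-number-of-update-per-op} (mapping FSG operations to grid-graph weight-change counts). The terminology correspondence you flagged as the one delicate point indeed lines up: entering/leaving and appearing/disappearing (VE events) go through the $\Oh(n)$ corner-vertex and row-operation branches respectively, while overlapping/separating (VVE events) go through the $\Oh(1)$ boundary-vertex branch.
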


We can now summarize Sections~\ref{sec:fs-reach-to-fsg-reach} and~\ref{sec:fsg-to-gg} and state the main lemma of this section. In Lemma~\ref{lem:fsgraph-path}, we have shown that the Fréchet distance $\df{\pi, \sigma}$ is at most $\delta$ if and only if the refined freespace graph $\fsgraph = \fsgraph_\delta(\pi, \sigma)$ is $st$-reachable. In Section~\ref{sec:fsg-to-gg}, for each $\fsgraph$, we have defined an associate grid graph $\ggraph = \ggraph_\delta(\pi, \sigma)$. In Lemma~\ref{lem:fs-gg-path-equivalent}, we have shown that $\ggraph$ is $st$-reachable if and only if $\fsgraph$ is $st$-reachable. Combining the above with Lemma~\ref{lem:event-to-weight-changes}, we have the following.

\begin{lemma} \label{lem:ggraph-summarized}
    Let $T = \{t_1, ..., t_m\}$ be a complete set of freespace events containing exclusively $m_{vve}$ VVE events and $m_{ve}$ VE events. Let $N = 2n^2 - n$, and let $T_u(N)$ (resp. $T_q(N)$) be the time complexity to update (resp. query $st$-reachability) in an $N \times N$ grid graph. It takes
    \begin{align*}
        \Oh(N^2 + (m_{ve} \cdot n + m_{vve}) \cdot T_u(N) + m \cdot T_q(N))
    \end{align*}
    time to determine if there exists $t_i \in T$ such that $\df{\pi, \sigma + t_i} \leq \delta$. 
\end{lemma}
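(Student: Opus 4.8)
The plan is to process the events of $T$ in order while maintaining, inside the offline dynamic grid-reachability data structure of~\cite{bringmannFrechetDistanceTranslation2021}, the grid graph $\ggraph_i = \ggraph_\delta(\pi, \sigma + t_i)$ of the current configuration. First I would build $\fsgraph_1 = \fsgraph_\delta(\pi, \sigma + t_1)$ and its associated grid graph $\ggraph_1$ explicitly and initialize the data structure on $\ggraph_1$. Since $\ggraph_1$ is an $N \times N$ grid graph with $N = 2n^2 - n$, and all of its critical points, grid lines, boundaries and placeholder lines, together with the $\Oh(1)$-time weight computation per vertex, can be enumerated in $\Oh(N^2)$ time, this costs $\Oh(N^2)$. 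Then I would sweep: for $i = 2, \dots, m$, turn $\ggraph_{i-1}$ into $\ggraph_i$ by applying the weight changes dictated by event $t_i$, and once all weight changes for $t_i$ are in place, issue one $st$-reachability query; I would also issue one query for $\ggraph_1$. The algorithm answers ``yes'' iff some query returns ``yes''.

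Correctness chains the earlier results: by Lemma~\ref{lem:fsgraph-path}, $\fsgraph_i$ is $st$-reachable iff $\df{\pi, \sigma + t_i} \le \delta$, and by Lemma~\ref{lem:fs-gg-path-equivalent} this is equivalent to $\ggraph_i$ being $st$-reachable, which is precisely what the $i$-th query reports, so returning ``yes'' iff one of the $m$ queries is ``yes'' decides whether some $t_i \in T$ achieves $\df{\pi, \sigma + t_i} \le \delta$. For the running time, initialization is $\Oh(N^2)$. For each event $t_i$, Lemma~\ref{lem:fsgraph-event-updates} bounds the work to identify the induced freespace-graph operations by $\Oh(n)$ (VE) or $\Oh(1)$ (VVE), Lemma~\ref{lem:ggraph-number-of-update-per-op} translates these operations into grid-graph weight changes with the same per-event overhead, and Lemma~\ref{lem:event-to-weight-changes} bounds the number of weight changes by $\Oh(n)$ per VE event and $\Oh(1)$ per VVE event. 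Each weight change is a single update costing $T_u(N)$, so the updates cost $\Oh\bigl((m_{ve}\cdot n + m_{vve})\cdot T_u(N)\bigr)$ in total, and the $\Oh(m_{ve}\cdot n + m)$ of combined bookkeeping is subsumed; the $m$ queries cost $\Oh(m\cdot T_q(N))$. Summing yields the claimed $\Oh\bigl(N^2 + (m_{ve}\cdot n + m_{vve})\cdot T_u(N) + m\cdot T_q(N)\bigr)$.

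The main obstacle is offline-ness and bookkeeping rather than anything conceptual. The data structure of~\cite{bringmannFrechetDistanceTranslation2021} is offline, so the entire sequence of weight-change updates and the positions of the $m$ queries among them must be known in advance; this is fine because $T$ is given and the map from an event to its $\Oh(n)$ concrete weight changes is explicit via Lemmas~\ref{lem:fsgraph-event-updates}, \ref{lem:ggraph-number-of-update-per-op} and~\ref{lem:event-to-weight-changes}, so the whole update script can be precomputed within the stated $\Oh(N^2 + m_{ve}\cdot n + m)$ preprocessing before the data structure is invoked. The remaining care is maintaining, under appearing/disappearing row and column events, the correspondence between an FSG vertex $\vertex{p,q}$ and its fixed coordinate in the $N \times N$ grid: following the proof of Lemma~\ref{lem:ggraph-number-of-update-per-op}, a row insertion merely reinterprets one placeholder line as a genuine grid line (a deletion does the reverse) without moving any coordinate, so the bijection changes in only $\Oh(1)$ slots of one strip per event and is maintained in $\Oh(n)$ time, again within budget.
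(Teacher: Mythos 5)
Your proposal is correct and follows essentially the same approach the paper takes: initialize the grid graph $\ggraph_1$ in $\Oh(N^2)$ time, sweep over the complete event set applying the per-event weight changes bounded by Lemmas~\ref{lem:fsgraph-event-updates}, \ref{lem:ggraph-number-of-update-per-op}, and~\ref{lem:event-to-weight-changes}, issue one $st$-reachability query per event, and chain Lemmas~\ref{lem:fsgraph-path} and~\ref{lem:fs-gg-path-equivalent} for correctness. The paper states the lemma as a summary with only a brief pointer to these same lemmas; your write-up makes explicit two points the paper leaves implicit but which are worth flagging, namely that the whole update script can be simulated offline (needed because the data structure of Fact~\ref{fac:bkn-ggraph-theorem} is offline) and that row/column appearance and disappearance merely reinterpret placeholder lines without moving grid coordinates, so the FSG-vertex-to-grid-coordinate bijection stays consistent.
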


Using the results by Alt, Knauer and Wenk~\cite{altMatchingPolygonalCurves2001}, we can build an arrangement in the translation space. Using this arrangement, we can compute a set of complete events (translations) $T$ containing exclusively $\Oh(n^6)$ VVE events and $\Oh(n^5)$ VE events. They have shown that it is sufficient to consider only translations in $T$ to determine if there is a translation $t$ such that $\df{\pi, \sigma + t} \leq \delta$. 

We analyze the running time. In total, we require $\Oh(n^6)$ vertex updates for $\ggraph$. Using the result of Bringmann, K\"unnemann and Nusser~\cite{bringmannFrechetDistanceTranslation2021}, we can update a vertex and perform $st$-reachability queries in amortized $\Oh(N^{2/3} \cdot \log^2 N) = \Oh(n^{4/3} \cdot \log^2 n)$ time. We obtain the following theorem. We defer the proof of Theorem~\ref{theorem:seven-point-threethree}, and its generalization, to Section~\ref{sec:together}. 
\begin{theorem}
    \label{theorem:seven-point-threethree}
    The Fréchet distance under translation in $\reals^2$ can be decided in $\Oh(n^{7+ \frac{1}{3}} \log^2 n)$ time. 
\end{theorem}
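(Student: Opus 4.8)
The plan is to run the incremental machinery of Sections~\ref{sec:fs-reach-to-fsg-reach}--\ref{sec:fsg-to-gg} along a traversal of an arrangement in the two-dimensional space of translation vectors, and then invoke Lemma~\ref{lem:ggraph-summarized} with the offline dynamic grid-graph reachability structure of Bringmann, K\"unnemann and Nusser~\cite{bringmannFrechetDistanceTranslation2021}. First I would set up the arrangement. Following Alt, Knauer and Wenk~\cite{altMatchingPolygonalCurves2001,wenkShapeMatchingHigher2003}, the combinatorial type of the refined freespace graph $\fsgraph_\delta(\pi,\sigma+t)$ --- its vertex set, its vertex weights, and the left-to-right and bottom-to-top ordering of its grid lines and FSD boundaries --- changes only when $t$ crosses one of a family of bounded-degree algebraic curves in $\reals^2$: the $\Oh(n^2)$ circles $\{t:\abse{\pi_i-\sigma_j-t}=\delta\}$ (entering/leaving VE events), the $\Oh(n^2)$ tangency curves on which a cell ellipse becomes tangent to a cell-boundary line (appearing/disappearing VE events), and the $\Oh(n^3)$ curves on which two grid lines of row critical points of a common row, or of column critical points of a common column, coincide (VVE events, which also drive the weight changes of boundary vertices; cf.\ the proof of Lemma~\ref{lem:fsgraph-event-updates}). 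The arrangement $\mathcal A$ of these $\Oh(n^3)$ curves has complexity $\Oh(n^6)$ and can be built in $\Oh(n^6\log n)$ time. On the relative interior of each cell of $\mathcal A$ the combinatorial type of $\fsgraph_\delta(\pi,\sigma+t)$ is fixed, so by Lemma~\ref{lem:fsgraph-path} the predicate ``$\df{\pi,\sigma+t}\le\delta$'' is constant there, and (as established in~\cite{altMatchingPolygonalCurves2001}) it suffices to evaluate this predicate at one representative translation per cell of $\mathcal A$ of every dimension.

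Next I would order these representatives into a complete event sequence. Build the adjacency graph of the faces (two faces adjacent iff they share an edge of $\mathcal A$); it is connected, so a depth-first tour visits every face while crossing each arrangement edge $\Oh(1)$ times. I refine the tour so that whenever it crosses an edge $e$ from face $F_1$ to $F_2$ it passes through a representative of $e$ in between, and so that each vertex of $\mathcal A$ is visited once via a detour along one of its incident edges; this adds only $\Oh(1)$ representatives per edge and per vertex of $\mathcal A$, so the resulting ordered set $T=\{t_1,\dots,t_m\}$ still has $m=\Oh(n^6)$. Any two consecutive representatives now lie in cells that are incident in the face lattice of $\mathcal A$ (a face and a bounding edge, or an edge and a bounding vertex), hence their cells differ by a single defining curve, i.e.\ $\fsgraph_\delta(\pi,\sigma+t_i)$ and $\fsgraph_\delta(\pi,\sigma+t_{i+1})$ differ by exactly one freespace event --- so $T$ is complete in the sense of Definition~\ref{def:complete-event}. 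Counting the steps by the type of curve crossed, and using that any one of the $\Oh(n^2)$ VE curves is cut into only $\Oh(n^3)$ edges by $\mathcal A$ (bounded-degree curves cross pairwise $\Oh(1)$ times), we get $m_{ve}=\Oh(n^5)$ VE events and $m_{vve}=\Oh(n^6)$ VVE events, and Lemmas~\ref{lem:fsgraph-event-updates} and~\ref{lem:ggraph-number-of-update-per-op} say each is realized by $\Oh(n)$, respectively $\Oh(1)$, grid-graph vertex-weight changes.

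It remains to plug into Lemma~\ref{lem:ggraph-summarized} with $N=2n^2-n=\Theta(n^2)$, using the data structure of~\cite{bringmannFrechetDistanceTranslation2021} with amortized update and query time $T_u(N)=T_q(N)=\Oh(N^{2/3}\log^2 N)=\Oh(n^{4/3}\log^2 n)$ over the $\Oh(n^6)$ operations (the whole update/query schedule is fixed once the tour of $\mathcal A$ is computed, so offline dynamism is fine). This gives running time
\[
\Oh\!\left(N^2+\bigl(m_{ve}\cdot n+m_{vve}\bigr)\,T_u(N)+m\cdot T_q(N)\right)=\Oh\!\left(n^4+n^6\cdot n^{4/3}\log^2 n\right)=\Oh\!\left(n^{7+\frac{1}{3}}\log^2 n\right),
\]
and building $\mathcal A$, computing the tour, and initializing $\ggraph_\delta(\pi,\sigma+t_1)$ are subsumed.

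I expect the main obstacle to be the first step: checking that the three listed curve families capture \emph{all} combinatorial changes of the refined FSG --- most delicately, that every weight change of a boundary vertex is witnessed by a grid-line coincidence (a VVE event) rather than by some transition outside the list --- and that crossing a single edge of $\mathcal A$ amounts to exactly one of the events classified in Definition~\ref{def:fs-events}, so that $T$ is genuinely complete and the per-step update bounds apply. Once that correspondence is pinned down, the remainder is bookkeeping together with the cited reachability structure.
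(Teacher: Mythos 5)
Your proposal is correct and follows essentially the same route as the paper: an arrangement of the $\Oh(n^3)$ critical curves/transformations in the translation plane of complexity $\Oh(n^6)$, a tour of its face-adjacency structure producing a complete event sequence with $\Oh(n^5)$ VE and $\Oh(n^6)$ VVE events (the paper does this via an Euler tour of a spanning tree, and proves your flagged "all changes are captured" claim in Lemma~\ref{lem:events-from-arrangement}, with the VE-face count in Observation~\ref{obs:number-vve-ve-faces}), and then Lemma~\ref{lem:ggraph-summarized} combined with the offline grid-reachability structure of~\cite{bringmannFrechetDistanceTranslation2021}. The only cosmetic difference is that the paper derives the translation case as the $k=2$ instance of Wenk's rationally parameterized framework rather than listing the explicit curve families.
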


\section{Fréchet distance under transformation} \label{sec:together}

In this section, we consider a class $\trans$ of transformations that is \emph{rationally parameterized} or \emph{rationally represented with $k$ degrees of freedom} as defined by Wenk~\cite{wenkShapeMatchingHigher2003}.

\begin{definition}[{{\cite[Definition~25]{wenkShapeMatchingHigher2003}}}] \label{def:para-trans}
    Let $1 \leq k \leq d^2 + d$, and let $p_1, \dots, p_{d^2+d}, q_1, \dots, q_{d^2+d} \in \mathbb{R}[X_1, \dots, X_k]$ be $2(d^2 + d)$ polynomials of constant degree in $k$ variables, such that $q_i(x) \neq 0$ for all $1 \leq i \leq d^2 + d$ and for all $x \in \mathbb{R}^k$. Let $r_i := p_i / q_i$ for all $1 \leq i \leq d^2 + d$, such that $r_i(x) := p_i(x) / q_i(x)$ for all $x \in \mathbb{R}^k$. If 
        \[
        \mathcal{T} = 
        \left\{
        \begin{pmatrix}
        r_1(x) & \dots & r_d(x) \\
        r_{d+1}(x) & \dots & r_{2d}(x) \\
        \vdots & \ddots & \vdots \\
        r_{d^2-d+1}(x) & \dots & r_{d^2}(x)
        \end{pmatrix}, 
        \begin{pmatrix}
            r_{d^2 + 1}(x) \\
            r_{d^2 + 2}(x) \\
            \vdots \\
            r_{d^2 + d}(x)
        \end{pmatrix}
        \Bigg| \; x \in \mathbb{R}^k
        \right\},
        \]
    then we call $\mathcal{T}$ \textit{rationally parameterized} or \textit{rationally represented with $k$ degrees of freedom} (dof). $\mathbb{R}^k$ is called the parameter space of $\mathcal{T}$.
\end{definition}

Let $\reals^k$ be the parameter space of $\trans$. For $x \in \reals^k$, let $\tau_x$ denote the transformation defined by the $k$-tuple $x$ of parameters. Let $\pi$ and $\sigma$ be two $d$-dimensional polygonal curves. Let $\tau_x(\sigma)$ be the resulting curve by applying the transformation $\tau_x$ to $\sigma$. 

In $\reals^k$, a \emph{vertex-vertex-edge (VVE) critical transformation} $\vvetrans(\pi_i, \pi_j, \cseg{\sigma_w})$ is the union of every point $x \in \reals^k$ such that the segment $\tau_x(\cseg{\sigma_w})$ lies on the intersection of the boundary of the $d$-spheres centered at $\pi_i$ and $\pi_j$, respectively, and it is formally defined as follows. 
\begin{align*}
    \vvetrans(\pi_i, \pi_j, \cseg{\sigma_w}) = \{x \in \reals^k \mid \exists z \in \cseg{\sigma_w}, \abse{\tau_x(z) - \pi_i} = \abse{\tau_x(z) - \pi_j} = \delta \}
\end{align*}

Analogously, a \emph{vertex-edge (VE) critical transformation} $\vetrans(\pi_i, \cseg{\sigma_w})$ is the union of every point $x \in \reals^k$ such that the segment $\tau_x(\cseg{\sigma_w})$ lie on the boundary of the $d$-ball centered at $\pi_i$, and it is formally defined as follows. 
\begin{align*}
    \vetrans(\pi_i, \cseg{\sigma_w}) = \{x \in \reals^k \mid \exists z \in \cseg{\sigma_w}, \abse{\tau_x(z) - \pi_i} = \delta\}
\end{align*}

\newcommand{\arrangement}[0]{\mathcal{A_\delta}}
Every critical transformation is a semi-algebraic set. Using $\Oh(n^3)$ VVE critical transformations and $\Oh(n^2)$ VE critical transformations, Wenk~\cite[Proof of Theorem~8]{wenkShapeMatchingHigher2003} showed that they can build an arrangement $\arrangement$ in $\Oh(n^{3k})$ time using $\Oh(n^{3k})$ space. Furthermore, $\arrangement$ contains at most $\Oh(n^{3k})$ $k'$-dimensional faces for $0 \leq k' \leq k - 1$. Let $\tau_F = \tau_x$ be the transformation that is represented by an arbitrary parameter $x \in F$. In \cite[Lemma~24]{wenkShapeMatchingHigher2003}, Wenk~showed that if there exists some transformation $\tau$ such that $\df{\pi, \tau(\sigma)} < \delta$, then there exists some $k'$-dimensional face $F \in \arrangement$ such that for any $\tau_F$, $\df{\pi, \tau_F(\sigma)} \leq \delta$. Their results are summarized as follows. 
\begin{fact} \label{fac:wenk-summarized}
    Given a pair of polygonal curve $\pi$ and $\sigma$, a real number $\delta \geq 0$, and a class $\trans$ of transformations that is rationally represented with $k$ degrees of freedom, one can build an arrangement $\arrangement = \arrangement(\pi, \sigma, \trans)$ using at most $\Oh(n^3)$ VVE critical transformations and $\Oh(n^2)$ VE critical transformations. The arrangement $\arrangement$ has in total $\Oh(n^{3k})$ complexity, and it can be constructed in $\Oh(n^{3k})$ time using $\Oh(n^{3k})$ space. To determine if there exists a transformation $\tau \in \trans$ such that $\df{\pi, \tau(\sigma)} \leq \delta$, it is sufficient to check exactly one transformation $\tau_F$ for every $k'$-dimensional face $F \in \arrangement$, where $0 \leq k ' \leq k - 1$. 
\end{fact}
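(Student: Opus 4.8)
The plan is to derive this fact from Wenk's analysis~\cite{wenkShapeMatchingHigher2003}, of which it is essentially a restatement, by assembling three ingredients. First, I would bound the number of critical transformations. A VVE critical transformation $\vvetrans(\pi_i, \pi_j, \cseg{\sigma_w})$ is determined by an unordered pair of vertices of $\pi$ and a single edge of $\sigma$, giving $\Oh(n^3)$ of them, while a VE critical transformation $\vetrans(\pi_i, \cseg{\sigma_w})$ is determined by one vertex of $\pi$ and one edge of $\sigma$, giving $\Oh(n^2)$ of them. Because the entries of $\tau_x$ are rational functions of constant degree in the $k$ parameters and the defining constraints merely equate squared Euclidean distances with $\delta^2$, each critical transformation is cut out by a constant number of polynomial (in)equalities of constant degree, i.e., it is a semi-algebraic set of constant description complexity.

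Second, I would appeal to the standard theory of arrangements of bounded-complexity semi-algebraic sets. With $\Oh(n^3)$ such sets in the $k$-dimensional parameter space, the arrangement $\arrangement$ has combinatorial complexity $\Oh((n^3)^k) = \Oh(n^{3k})$ and can be built --- together with one representative point in the relative interior of each face of every dimension $0 \le k' \le k-1$ --- in $\Oh(n^{3k})$ time and space; this is exactly the construction carried out in the proof of~\cite[Theorem~8]{wenkShapeMatchingHigher2003}, and I would simply cite it.

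Third, and this is the heart of the matter, I would argue that one representative transformation per face decides feasibility. The key point is that crossing a VVE (resp.\ VE) critical transformation is precisely the event at which the order of critical points along some freespace-diagram boundary changes, or a corner point enters or leaves the freespace, or a critical point appears or disappears --- that is, exactly the freespace events of Section~\ref{sec:fs-reach-to-fsg-reach}. Hence for all $x$ in the relative interior of a fixed face $F$ of $\arrangement$, the sign pattern of all defining polynomials is constant, so the refined freespace diagram $\fd{\pi, \tau_x(\sigma)}{\delta}$ has a fixed combinatorial structure; in particular the existence of a bi-monotone path through the freespace from $(1,1)$ to $(n,n)$ --- equivalently $\df{\pi, \tau_x(\sigma)} \le \delta$ --- is constant on $F$, and is therefore correctly decided by testing any single $\tau_F$. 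To see that faces of dimension at most $k-1$ suffice, I would use that $\df{\pi, \tau_x(\sigma)}$ is continuous in $x$: the feasible set $\{x : \df{\pi, \tau_x(\sigma)} \le \delta\}$ is then closed, and being a union of relatively open faces it must, if nonempty, contain a face on whose closure lies a face of dimension at most $k-1$ --- the boundary of any full-dimensional feasible cell contains such a face --- which is then feasible as well by continuity. Equivalently, one invokes~\cite[Lemma~24]{wenkShapeMatchingHigher2003} directly.

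The step I expect to be the main obstacle is this last one: pinning down that it is enough to test one point per face of dimension at most $k-1$. It needs (i) the invariance of the freespace-diagram combinatorics within a face, which is where the algebraic critical transformations must be matched against the combinatorial freespace events, and (ii) a limiting argument based on the continuity of the \Frechet distance in the transformation, moving a hypothetical witness from the interior of the parameter space onto the arrangement. The counting and the arrangement bounds are routine by comparison.
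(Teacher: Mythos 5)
Your proposal is correct and takes the same approach as the paper: the paper treats this as a citation of Wenk's results (the arrangement construction in the proof of Theorem~8 and the ``one representative per lower-dimensional face suffices'' claim of Lemma~24 in~\cite{wenkShapeMatchingHigher2003}), and your proposal reconstructs exactly those ingredients---the $\Oh(n^3)/\Oh(n^2)$ counts, the constant-description-complexity semi-algebraic sets yielding $\Oh(n^{3k})$ arrangement complexity, the combinatorial invariance of the refined freespace diagram within a face, and the closedness-of-the-feasible-set limiting argument---while correctly falling back to Wenk's Lemma~24 for the last step.
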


Once the arrangement~$\arrangement$ is constructed, the remainder of the previous algorithm in Wenk~\cite{wenkShapeMatchingHigher2003} is straightforward. For every face $F \in \arrangement$, sample a point $x \in F$, and determine if $\df{\pi, \tau_x(\sigma)} \leq \delta$ using classic algorithms (Alt and Godau~\cite{altComputingFrechetDistance1995} for example). In total, this takes $\tOh(n^{3k} \cdot n^2)$ time. 

To obtain a running time improvement, we use a similar approach to previous sections. We generate a complete set of events as follows. Initialize an empty graph $\graph = (\vertices, \edges)$. For every face $F \in \arrangement$, add a vertex $v_F$ to $\vertices$. For every two adjacent faces $F$ and $F'$, add an edge $(v_F, v_{F'})$ to $\edges$. For each vertex $v_F$, record a transformation $\tau_F$. Next, we compute a complete set of events using $\graph$. Initialize an empty set of events~$T$. Then, use a DFS to compute a spanning tree of $\graph$, and perform a Euler tour over the spanning tree starting from an arbitrary vertex. For each directed edge $e = (v_{F'}, v_{F})$ in the tour, we add an event only if we enter or leave a critical transformation. More specifically, let $B(F)$ be the set of critical transformations adjacent to face $F$. If $B(F) \setminus B(F') = \{\trans_{c}\}$, we say $e$ \emph{traverses onto} the critical transformation $\trans_{c}$ via $F$. If $B(F') \setminus B(F) = \{\trans_c\}$, we say $e$ \emph{traverses out of} the critical transformation $\trans_{c}$ via $F$. 

Let $\disk{p}$ be the $d$-sphere of radius $\delta$ centered at $p$. Depending on the cases where $e$ traverse onto or out of a VVE or VE critical transformation, we add the respective freespace events defined in Definition~\ref{def:fs-events}. 
\begin{enumerate}
    \item If $\trans_c$ is a VVE critical transformation $\vvetrans = \vvetrans(\pi_i, \pi_j, \cseg{\sigma_w})$, we compute $p = \disk{\pi_i} \cap \tau_F(\cseg{\sigma_w})$ and $q = \disk{\pi_j} \cap \tau_F(\cseg{\sigma_w})$, and append an event $t$ defined by $(p, q, \cseg{\sigma_w})$ to $T$. If $e$ traverses onto $\trans_c$, $t$ is an overlapping event. If $e$ traverses out of $\vvetrans$, $t$ is a separating event. 
    \item If $\trans_c$ is a VE critical transformation $\vetrans = \vetrans(\pi_i, \cseg{\sigma_w})$, we compute $p = \disk{\pi_i} \cap \cseg{\sigma_w}$, and we append an event $t$ represented by $(p, \cseg{\sigma_w})$ to $T$. If $e$ traverses onto $\trans_c$, $t$ is an entering event if $p = \pi_i$, or an appearing event if $p \neq \pi_i$. Analogously, if $e$ traverses out of $\trans_c$, $t_e$ is a leaving event if $p = \pi_i$, or a disappearing event if $p \neq \pi_i$. 
\end{enumerate}

We say an event $t$ is associated with the critical transformation $\trans_c$ and the edge $e$, if $t$ is computed from an edge $e$ traversing onto or out of $\trans_c$. We show that $T$ has two desired property.

\begin{lemma} \label{lem:events-from-arrangement}
    Given the arrangement $\arrangement = \arrangement(\pi, \sigma, \trans)$, one can compute a complete set $T = \{t_1, ..., t_{\Oh(n^{3k})}\}$ of freespace events in $\Oh(n^{3k})$ time with the following properties.
    \begin{enumerate}
        \item Every face in $\arrangement$ is associated with at least one event in $T$. 
        \item For each event $t_i$ associated with edge $(v_F, v_{F'}) \in \edges$, $\fsgraph_\delta(\pi, \tau_F(\sigma))$ and $\fsgraph_\delta(\pi, \tau_{F'}(\sigma))$ differ by exactly one freespace event. 
    \end{enumerate}
\end{lemma}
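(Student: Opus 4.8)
The plan is to realise the construction described above and then verify, in order, the size of $T$, the running time, Property~1 together with completeness, and finally Property~2. Throughout I inherit from Wenk the standing assumption that $\arrangement$ is in general position --- otherwise a symbolic perturbation of the $\Oh(n^3)$ critical transformations is applied, as in~\cite{wenkShapeMatchingHigher2003} --- so that every face of $\arrangement$ of codimension $c$ lies on exactly $c$ critical transformations, and hence any two incident faces $F,F'$ with $|\dim F-\dim F'|=1$ satisfy $|B(F)\triangle B(F')|=1$. For the construction itself: by Fact~\ref{fac:wenk-summarized} the arrangement $\arrangement$ has total complexity $\Oh(n^{3k})$, so $\graph$ has $\Oh(n^{3k})$ vertices and edges, and $\graph$ is connected because $\arrangement$ subdivides the connected space $\reals^k$ with its cells joined through shared facets and every lower-dimensional face lying in the closure of a cell. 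A DFS spanning tree and an Euler tour of it take $\Oh(n^{3k})$ time, and the tour consists of $2(|\vertices|-1)=\Oh(n^{3k})$ edges; for each tour edge $e=(v_{F'},v_F)$ we identify in $\Oh(1)$ time the single critical transformation $\trans_c\in B(F)\triangle B(F')$ and whether $e$ traverses onto or out of it, and then compute the point $p$ (and $q$, for a VVE transformation) and append the corresponding event of Definition~\ref{def:fs-events}. This gives $|T|=\Oh(n^{3k})$ in $\Oh(n^{3k})$ time.

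For Property~1 and completeness: general position makes every edge of $\graph$ --- hence every edge of the Euler tour --- traverse onto or out of exactly one critical transformation, so every tour edge contributes an event, which I regard as associated with that edge and with both of its endpoint faces. Since the spanning tree spans $\vertices$, every $v_F$ is incident to some tour edge, so every face $F$ is associated with at least one event, which is Property~1. Ordering $t_1,\dots,t_m$ as produced along the tour and letting $F_0,\dots,F_m$ be the faces visited (so that $t_i$ is produced when passing from $F_{i-1}$ to $F_i$), Property~2 says exactly that $\fsgraph_\delta(\pi,\tau_{F_{i-1}}(\sigma))$ and $\fsgraph_\delta(\pi,\tau_{F_i}(\sigma))$ differ by the single event $t_i$ --- which is precisely the statement that $T$ is complete. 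So everything reduces to Property~2.

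For Property~2: fix an event $t$ produced by a tour edge $e=(v_{F'},v_F)$ traversing onto or out of $\trans_c$. Because $F$ and $F'$ are separated in $\arrangement$ by $\trans_c$ alone and (by Wenk's construction, cf.\ Section~\ref{sec:fs-reach-to-fsg-reach}) $\arrangement$ refines the partition of parameter space into regions of constant freespace-diagram combinatorics, the only combinatorial change in going from a representative of $F'$ to one of $F$ is the one dictated by $\trans_c$, and it remains to name it. If $\trans_c=\vvetrans(\pi_i,\pi_j,\cseg{\sigma_w})$ then on $\trans_c$ a point of $\tau(\cseg{\sigma_w})$ lies on both spheres $\disk{\pi_i}$ and $\disk{\pi_j}$, i.e.\ the row-$w$ critical points $p=\disk{\pi_i}\cap\tau(\cseg{\sigma_w})$ and $q=\disk{\pi_j}\cap\tau(\cseg{\sigma_w})$ coincide so that the grid lines $l(p),l(q)$ overlap, while just off $\trans_c$ they are distinct and their order flips across $\trans_c$; crossing $\trans_c$ is thus exactly the overlapping or separating event $(p,q,\cseg{\sigma_w})$. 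If $\trans_c=\vetrans(\pi_i,\cseg{\sigma_w})$ then either an endpoint of $\tau(\cseg{\sigma_w})$ crosses $\disk{\pi_i}$, so that the corresponding corner point of the freespace diagram enters or leaves the freespace --- the entering/leaving event $(\pi_i,\cseg{\sigma_w})$ --- or $\tau(\cseg{\sigma_w})$ passes through tangency with $\disk{\pi_i}$, so that a row critical point $p$ and its grid line $l(p)$ appear or disappear --- the appearing/disappearing event $(p,\cseg{\sigma_w})$ --- with the direction (appear/enter versus disappear/leave) matching whether $e$ traverses onto or out of $\trans_c$; the column events are symmetric. In every case $\fsgraph_\delta(\pi,\tau_{F'}(\sigma))$ and $\fsgraph_\delta(\pi,\tau_F(\sigma))$ differ by exactly the single event $t$.

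The main obstacle is making this last case analysis rigorous: one has to argue, from the convexity of the freespace inside each cell and the definition of (propagated) critical points from Section~\ref{sec:fs-reach-to-fsg-reach}, that crossing one critical transformation of either type produces exactly one freespace event of the asserted type and orientation --- in particular that the arrangement really does capture all combinatorial changes of the freespace diagram, so that nothing else happens on the way from $F'$ to $F$. A secondary point is verifying that the general-position assumption (which is what forces neighbouring faces to differ by a single critical transformation, so that every tour edge is an event) can be maintained without changing the answer to the decision problem.
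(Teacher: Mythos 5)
Your construction, bookkeeping, and running-time analysis all match the paper's, and you correctly reduce everything to Property~2. But you then explicitly leave the central step open, writing that the ``main obstacle'' is to argue that crossing a single critical transformation changes the freespace graph by exactly the one asserted event and nothing else. That is not a loose end --- it is the substance of Property~2, and without it the lemma is unproven. Your case analysis only checks that crossing $\trans_c$ \emph{does} cause the named event; it does not rule out that some \emph{other} vertex weight in $\fsgraph$ flips simultaneously, in which case $T$ would fail to be complete in the sense of Definition~\ref{def:complete-event}.

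The paper closes this gap with a short contradiction argument that you should supply. Suppose some intersection $a = l(p)\cap l(q)$ enters or leaves the freespace while no event occurs. Then $\vertex{p,q}$ cannot be a corner vertex (that change is exactly an entering/leaving event), nor an interior vertex (those have weight $1$ unconditionally), so it must be a boundary vertex, say with $p$ a vertex of $\pi$. On the cell boundary $l(p)$ the freespace is a single interval whose endpoints are critical points; for $a$ to change status it must first coincide with such a critical point $b = l(p')\cap l(q)$, i.e.\ the grid lines $l(p)$ and $l(p')$ must overlap and then separate (or vice versa). But that is precisely an overlapping or separating event --- contradiction. Adding this argument, together with the one-line observation that no event occurs unless the tour edge crosses a critical transformation (true by the definitions of $\vvetrans$ and $\vetrans$), completes the proof along the same lines as the paper.
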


\begin{proof}
    First, observe that each edge $(v_F, v_{F'})$ in the Euler tour traverses onto or out of at most one critical transformation. The opposite suggests that the faces $F$ and $F'$ are not adjacent. It is also clear that $\graph$ is a connected component containing a vertex $v_F$ for every face $F \in \arrangement$, since the parameter space $\reals^k$ itself is also a face. An Euler tour over the spanning tree visits every vertex, and hence every face at least once. 

    We next argue that the freespace graph does not change unless a freespace event occurs. It is clear that unless an appearing or disappearing event occurs, neither the vertices nor the edges in a freespace graph $\fsgraph$ change. What remains is to argue that the vertex weights do not change unless an event occurs. More specifically, unless a freespace event occurs, no intersection $a = l(p) \cap l(q)$ enter or leave the freespace. 

    For the sake of contradiction, say that $a$ either enters or leaves the freespace and a freespace event does not occur. Clearly, $\vertex{p, q}$ cannot be a corner vertex, as the weight change is explicitly captured by the entering/leaving event. $\vertex{p, q}$ cannot be an interior vertex, as every interior vertex has weight $1$ regardless. 
    
    Therefore, $\vertex{p, q}$ is a boundary vertex and $a$ is a critical point. Without loss of generality, let $p$ be a vertex of $\pi$. If $a$ enters the freespace, $a$ must coincide with a critical point $b = l(p') \cap l(q)$, so grid lines $l(p)$ and $l(p')$ overlap. If $a$ leaves the freespace, $a$ must first coincide with (again say) $b$, so the grid lines $l(p)$ and $l(p')$ separate. In both cases, either an overlapping event occurs or a separating event occurs, contradicting the assumption. 

    Let $p$ (resp. $q$) be a critical point on the $i$th (resp. $j$th) FSD boundary. We next argue that no freespace event occurs unless we traverses onto or out of a critical transformation. This is true by the definition. An entering/appearing (resp. leaving/disappearing) event defined by $(p, \cseg{\pi_w})$ occurs only when the associated edge $(v_F, v_{F'})$ traverses onto (resp. out of) the critical transformation $\vetrans(\pi_i, \cseg{\sigma_w})$. An overlapping (resp. separating) event defined by $(p, q, \cseg{\sigma_w})$ occurs only when $(v_F, v_{F'})$ traverses onto (resp. out of) the critical transformation $\vvetrans(\pi_i, \pi_j, \cseg{\sigma_w})$.  
    
    Observe that exactly one freespace event occurs every time we traverses onto or out of a critical transformation, which is captured by the Euler tour. Therefore, every face is associated with at least one event in $T$, and for every edge $(v_F, v_{F'})$ associated with an event $t \in T$, $\fsgraph(\pi, \tau_F(\sigma))$ and $\fsgraph(\pi, \tau_{F'}(\sigma))$ differ by exactly one event. 
\end{proof}

We also observe that fewer faces are adjacent to VE critical transformations than to VVE critical transformations. 
\begin{restatable}{observation}{numVEfaces} \label{obs:number-vve-ve-faces}
    In the arrangement $\arrangement = \arrangement(\pi, \sigma, \trans)$, there are at most $\Oh(n^{3k})$ faces adjacent to VVE critical transformations, and $\Oh(n^{3k - 1})$ faces adjacent to VE critical transformations. 
\end{restatable}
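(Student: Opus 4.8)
The plan is to treat the two bounds separately. The bound on faces adjacent to VVE critical transformations is immediate: by Fact~\ref{fac:wenk-summarized} the arrangement $\arrangement$ has $\Oh(n^{3k})$ faces in total, so at most $\Oh(n^{3k})$ of them can be adjacent to any critical transformations at all, VVE ones in particular. The substance of the observation is therefore the VE bound, and here I would exploit that there are only $\Oh(n^2)$ VE critical transformations (again by Fact~\ref{fac:wenk-summarized}) and that each one is a constant-complexity semi-algebraic hypersurface in the $k$-dimensional parameter space — it is the locus of transformations at which $\tau_x(\cseg{\sigma_w})$ makes critical (tangential or endpoint) contact with the $\delta$-sphere around $\pi_i$, a condition of codimension one cut out by constant-degree polynomials thanks to Definition~\ref{def:para-trans}.

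Fix one VE critical transformation $\trans_c$. The first step is to restrict the remaining $\Oh(n^3)$ critical transformations to $\trans_c$: this produces an arrangement $\arrangement|_{\trans_c}$ of $\Oh(n^3)$ constant-complexity surfaces inside a $(k-1)$-dimensional set, which by the arrangement-complexity bound of Fact~\ref{fac:wenk-summarized} applied one dimension down has only $\Oh(n^{3(k-1)}) = \Oh(n^{3k-3})$ faces. The second step is to charge each face $F$ of $\arrangement$ whose closure meets $\trans_c$ to a face of $\arrangement|_{\trans_c}$ contained in $\overline{F} \cap \trans_c$ (say a lowest-dimensional one) and to argue that each face of $\arrangement|_{\trans_c}$ receives only $\Oh(1)$ charges. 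This gives $\Oh(n^{3k-3})$ faces of $\arrangement$ adjacent to $\trans_c$, and summing over the $\Oh(n^2)$ VE critical transformations yields the claimed $\Oh(n^2 \cdot n^{3k-3}) = \Oh(n^{3k-1})$ bound; note that the asymmetry with the VVE bound comes entirely from there being $\Oh(n^3)$ VVE but only $\Oh(n^2)$ VE critical transformations, not from any difference in dimension.

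The main obstacle is the charging step. Making it rigorous relies on the constant-degree hypothesis of Definition~\ref{def:para-trans}, which forces the local structure of $\arrangement$ near any point of $\trans_c$ to have $\Oh(1)$ complexity; this both validates the $\Oh(1)$-charges-per-face claim and lets one dispose of degenerate configurations (for instance, a full-dimensional cell of $\arrangement$ that carries an entire connected component of $\trans_c$ on its boundary — there are at most two such cells per component). An alternative to the hands-on charging is to invoke a zone-type bound for arrangements of bounded-degree surfaces directly; I would present whichever yields the cleaner write-up. Either way, the only combinatorial inputs are the counts $\Oh(n^3)$ and $\Oh(n^2)$ of VVE and VE critical transformations together with the dimension-$(k-1)$ instance of the arrangement-complexity bound already granted by Fact~\ref{fac:wenk-summarized}.
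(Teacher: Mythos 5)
Your argument is correct and reaches the $\Oh(n^{3k-1})$ bound via the same dimension-drop observation as the paper. The paper's proof is a direct combinatorial count: since each face is determined (up to constant multiplicity, by the bounded-degree hypothesis) by at most $k$ critical transformations, a face lying on a VE critical transformation arises from one of $\Oh(n^2)$ choices of VE surface together with $\Oh(n^{3(k-1)})$ choices for the remaining $\leq k-1$ surfaces, giving $\Oh(n^2 \cdot n^{3(k-1)}) = \Oh(n^{3k-1})$. You phrase the same count as restricting the $\Oh(n^3)$ critical transformations to the codimension-one set $\trans_c$ and invoking the arrangement-complexity bound one dimension lower -- which is precisely the source of the factor $n^{3(k-1)}$ in the paper's count. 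The two are equivalent, but your write-up then adds a charging scheme (charge each face $F$ with $\overline{F}\cap\trans_c\neq\emptyset$ to a lowest-dimensional face of $\arrangement|_{\trans_c}$, argue $\Oh(1)$ charges per target) that the paper does not need: the faces the paper counts are those lying on $\trans_c$, and these are themselves faces of the restricted arrangement, so no charging is required. Your charging step is not wrong -- it handles the strictly larger collection of faces that merely touch $\trans_c$ in their closure, and the $\Oh(1)$-charges claim is indeed justified by the constant-degree hypothesis -- but it is extra machinery beyond what the statement demands.
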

\begin{proof}
    Wenk~\cite{wenkShapeMatchingHigher2003} proved that there are at most $\Oh(n^{3k})$ faces in $\arrangement$, which upperbounds the number of faces adjacent to VVE critical transformations. They also showed that a critical transformation is a semi-algebraic set with constant description complexity~\cite[Lemma 24]{wenkShapeMatchingHigher2003}. Therefore, each face in $\arrangement$ is the intersection of at most $k$ critical transformations. If a face is adjacent to a VE critical transformation, there are at most $n^{3(k - 1)}$ possible ways to choose $k - 1$ from $n^3$ critical transformations. There are at most $n^2$ VE critical transformations, and in total, $\Oh(n^{3k - 1})$ faces are adjacent to VE critical transformations. 
\end{proof}

To obtain fast updates and queries in the grid graph, we use the offline dynamic grid reachability result by~\cite{bringmannFrechetDistanceTranslation2021}. The problem is defined as follows. We start from a directed $N \times N$-grid graph, and we are given a set $\{u_1, ..., u_U\}$ of updates such that each update $u_i$ is to either activate or deactivate a vertex. For $1 \leq i \leq U$, the goal is to compute after each update $u_i$ whether there is a feasible path from vertex $\vertex{1, 1}$ to $\vertex{N, N}$. Their result is as follows. 
\begin{fact}[Frechet distance under translation{\cite[Theorem~3.4]{bringmannFrechetDistanceTranslation2021}}] \label{fac:bkn-ggraph-theorem}
    Offline dynamic grid reachability can be solved in time $\Oh(N^2 + U N^{2/3} \log^2 N)$. 
\end{fact}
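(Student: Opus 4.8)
The claim is the offline dynamic grid‑reachability data structure of Bringmann, Künnemann and Nusser, so ``my'' proof is a reconstruction of theirs. The setting: an $N\times N$ \emph{monotone} grid graph -- every edge goes up, right, or diagonally up‑right -- with $0/1$ vertex weights, a batch of $U$ vertex toggles given in advance, and after each toggle we must report whether $(1,1)$ reaches $(N,N)$ through $1$‑weighted vertices; the target is amortized $\tOh(N^{2/3})$ per toggle‑and‑query after $\Oh(N^2)$ preprocessing. The first step is to extract the structural features that make a monotone grid much easier than a general DAG: (a) a path is a monotone lattice path, so reachability from a fixed source is computed by one left‑to‑right (equivalently anti‑diagonal) sweep maintaining an incrementally updatable \emph{frontier}; (b) for a rectangular sub‑block one can store a \emph{boundary summary} which, fed the set of reachable cells on the block's bottom/left boundary, returns the set of newly reachable cells on its top/right boundary; (c) such a summary is recomputed from scratch in time $\tOh(\text{area of the block})$ by one sweep, and it composes with an incoming frontier, and with a neighbouring block's summary, in time close to the relevant boundary length. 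I would establish (a)--(c) by the standard exchange argument on monotone lattice paths and implement the summaries as balanced search trees supporting the requisite monotone range operations; this last point is the source of the $\log^2 N$ in the bound.

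The data structure is then a block decomposition, made recursive and offline. Partition the grid into blocks and keep each block's boundary summary; a toggle lies in a single block, so it is handled by re‑sweeping that one block, and a reachability query pushes the source frontier through the block‑graph using the stored summaries, updated incrementally because after a toggle only the part of the block‑graph lying topologically \emph{downstream} of the modified block can change. Two ideas make both operations sublinear in $N$: recursion -- every block carries the same machinery at a finer scale, so ``re‑sweep a block'' is itself fast -- and the offline structure -- build a segment tree over the $U$ updates, so that at a node only the vertices toggled inside its time interval are ``live'', the rest of the grid is folded once into a fixed summary, and all expensive recomputation happens inside a live region whose size shrinks down the tree. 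Tuning the block sizes (and recursion depth) so that the per‑node re‑summarization cost and the amortized frontier‑propagation cost meet at the same exponent gives amortized $\tOh(N^{2/3})$ per operation, hence the total $\Oh(N^2 + U N^{2/3}\log^2 N)$.

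The main obstacle -- and the genuine content of the theorem -- is this last balancing. A single‑level, worst‑case version only yields $\tOh(UN)$: making blocks small enough that one re‑sweep is cheap makes the block‑graph so fine that a from‑scratch query is too slow, so the query cost has to be controlled \emph{amortized}, via a potential argument bounding how many block frontiers can change over the whole update sequence. That argument is precisely what the offline hypothesis and the time segment tree buy, and threading the parameters of the recursive decomposition so the update and query costs coincide at the $2/3$ exponent -- rather than the naive $1$, or the $4/3$ that a crude two‑level split gives -- is where essentially all the difficulty lies. By contrast I expect the monotone‑lattice structural lemmas and the static summary operations to be comparatively routine, so the bulk of the effort goes into setting up and analysing the recursive, offline decomposition and its amortized bound.
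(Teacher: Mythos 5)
This statement is not proved in the paper at all: it is imported verbatim as a Fact, with the proof delegated entirely to the citation \cite[Theorem~3.4]{bringmannFrechetDistanceTranslation2021}. So the only legitimate ``proof'' in the context of this paper is the citation itself, and what you have written has to be judged as a reconstruction of Bringmann, K\"unnemann and Nusser's argument. As such it is a plan, not a proof. You correctly name the high-level ingredients (monotone lattice paths, per-block boundary-to-boundary reachability summaries, an offline decomposition of the update sequence, and a parameter balance that is supposed to land at the exponent $2/3$), but the two steps that actually carry the theorem are asserted rather than established. First, you claim that a block's boundary summary can be composed with a frontier or with a neighbouring block's summary ``in time close to the relevant boundary length,'' attributing the $\log^2 N$ to balanced search trees; this is exactly the nontrivial part. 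Reachability across a subgrid is a relation between $\Theta(s)$ boundary vertices of an $s\times s$ block, and composing two such relations naively costs far more than $\tOh(s)$; making composition near-linear in the boundary requires exploiting the monotone/Monge structure of grid reachability in a precise way, and without proving such a composition lemma the whole cost accounting collapses (with quadratic-in-boundary composition the balancing cannot reach $N^{2/3}$). Second, you openly defer the amortized analysis --- the potential argument over the time segment tree and the choice of block sizes/recursion depth --- with the remark that this is ``where essentially all the difficulty lies.'' That is precisely the content of the theorem: without a concrete recurrence or potential function, nothing in your text distinguishes the claimed $\tOh(N^{2/3})$ from the $\tOh(N)$ or $\tOh(N^{4/3})$ bounds you mention as failed attempts.

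So there is a genuine gap: the composition-of-summaries lemma and the offline amortization/balancing argument are both missing, and these are the only parts of the statement that are not routine. For the purposes of this paper you should simply cite the result as the authors do; if you do want a self-contained proof, the work consists of (i) proving a compact representation and near-boundary-linear composition of block reachability summaries in monotone grid graphs, and (ii) setting up the offline time decomposition and verifying the recurrence that yields $\Oh(N^2 + U N^{2/3}\log^2 N)$, neither of which your sketch yet contains.
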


We summarize the results in this paper, which we will then combine to obtain a faster algorithm for computing the Fr\'echet distance under rationally parameterized transformations. We take as inputs a real number $\delta \geq 0$, a class $\trans$ of transformations rationally represented with $k$ degrees of freedom, and a pair of polygonal curves $\pi$ and $\sigma$, each with $n$ vertices. Our goal is to determine if there exists a transformation $\tau \in \trans$ such that $\df{\pi, \tau(\sigma)} \leq \delta$. 

To do this, Wenk~\cite{wenkShapeMatchingHigher2003} has shown that it is sufficient to construct an arrangement $\arrangement$ in the parameter space $\reals^k$ and sample a transformation from each face. In Lemma~\ref{lem:events-from-arrangement}, we have shown that we can traverse this arrangement and generate a complete set of $\Oh(n^{3k})$ freespace events in $\Oh(n^{3k})$ time. In Section~\ref{sec:fs-reach-to-fsg-reach} and~\ref{sec:fsg-to-gg}, we explained our ideas using translation. However, all results revolve around handling a complete set of freespace events, and hence apply to more general settings of transformations. We can therefore plug in the number of updates to Lemma~\ref{lem:ggraph-summarized}. Our algorithm takes $\Oh(n^4 + n^{3k} \cdot (T_u(n^2) + T_q(n^2))$ time, where $T_u(n^2)$ (resp. $T_q(n^2)$) is the time complexity to update a vertex (resp. query $st$-reachability) in an $\Oh(n^2) \times \Oh(n^2)$-grid graph. By Fact~\ref{fac:bkn-ggraph-theorem}, $T_u(n^2) + T_q(n^2)$ takes amortized $\Oh(n^{4/3} \log^2 n)$ time. For classes of rationally parameterized transformations with at least one degree of freedom, $k \geq 1$. 
We state our final result.
\begin{theorem}
    The Fréchet distance under transformations rationally represented with $k$ degrees of freedom can be decided in $\Oh(n^{3k + 4/3} \log^2 n)$ time. 
\end{theorem}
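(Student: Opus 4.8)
The plan is to assemble the final running time by composing the three main ingredients already established: Wenk's arrangement (Fact~\ref{fac:wenk-summarized}), the event-generation machinery (Lemma~\ref{lem:events-from-arrangement} together with Observation~\ref{obs:number-vve-ve-faces}), and the offline dynamic grid-reachability data structure (Fact~\ref{fac:bkn-ggraph-theorem}) combined with the update accounting of Lemma~\ref{lem:ggraph-summarized}. First I would invoke Fact~\ref{fac:wenk-summarized} to build the arrangement $\arrangement$ in the parameter space $\reals^k$ in $\Oh(n^{3k})$ time and space, and recall that it suffices to check one sampled transformation $\tau_F$ per $k'$-dimensional face. Then I would apply Lemma~\ref{lem:events-from-arrangement} to obtain, in $\Oh(n^{3k})$ time, a complete ordered set $T$ of $\Oh(n^{3k})$ freespace events such that every face is associated with at least one event and consecutive freespace graphs along the traversal differ by exactly one event; this is precisely the ``complete set of events'' hypothesis that Section~\ref{sec:fsg-to-gg} requires.

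Next I would count the vertex-weight updates to the grid graph $\ggraph = \ggraph_\delta(\pi, \sigma)$, which by the preliminary counting has dimension $N = 2n^2 - n = \Oh(n^2)$. By Lemma~\ref{lem:event-to-weight-changes}, processing a VVE event costs $\Oh(1)$ vertex-weight changes and processing a VE event costs $\Oh(n)$ vertex-weight changes. By Observation~\ref{obs:number-vve-ve-faces} there are $\Oh(n^{3k})$ faces (hence VVE events) and only $\Oh(n^{3k-1})$ faces adjacent to VE critical transformations, so $m_{vve} = \Oh(n^{3k})$ and $m_{ve} = \Oh(n^{3k-1})$. Plugging into Lemma~\ref{lem:ggraph-summarized}, the total number of vertex updates to $\ggraph$ is $\Oh(m_{ve}\cdot n + m_{vve}) = \Oh(n^{3k-1}\cdot n + n^{3k}) = \Oh(n^{3k})$, and the number of $st$-reachability queries is $\Oh(m) = \Oh(n^{3k})$.

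Finally I would feed this into the offline dynamic grid-reachability bound of Fact~\ref{fac:bkn-ggraph-theorem}: with $N = \Oh(n^2)$ and $U = \Oh(n^{3k})$ updates, the data structure runs in $\Oh(N^2 + U\, N^{2/3}\log^2 N) = \Oh(n^4 + n^{3k}\cdot n^{4/3}\log^2 n)$ time, which simplifies to $\Oh(n^{3k + 4/3}\log^2 n)$ since $k \geq 1$ forces $3k + 4/3 \geq 4 + 1/3 > 4$. Combining with the $\Oh(n^{3k})$ arrangement-construction and event-generation costs, which are dominated, yields the claimed bound; correctness follows because by Fact~\ref{fac:wenk-summarized} it suffices to test one transformation per face, by Lemma~\ref{lem:events-from-arrangement} every face is covered by some event in the tour, by Lemma~\ref{lem:fsgraph-path} each such test reduces to $st$-reachability in the refined FSG, and by Lemma~\ref{lem:fs-gg-path-equivalent} that is equivalent to $st$-reachability in $\ggraph$, which is exactly what the queries report. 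The main obstacle in writing this cleanly is bookkeeping the asymmetry between VE and VVE events: one must be careful that the $\Oh(n)$-per-VE-event blowup is absorbed by the smaller $\Oh(n^{3k-1})$ count of VE-adjacent faces from Observation~\ref{obs:number-vve-ve-faces}, rather than naively multiplying $\Oh(n)$ by $\Oh(n^{3k})$, which would cost an extra factor of $n$; everything else is a direct substitution into the stated lemmas.
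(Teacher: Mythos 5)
Your proof is correct and follows essentially the same route as the paper: build Wenk's arrangement (Fact~\ref{fac:wenk-summarized}), extract a complete event sequence via the Euler tour (Lemma~\ref{lem:events-from-arrangement}), reduce to grid-graph reachability (Lemmas~\ref{lem:fsgraph-path}, \ref{lem:fs-gg-path-equivalent}, and~\ref{lem:ggraph-summarized}), and apply Fact~\ref{fac:bkn-ggraph-theorem}. If anything, you are more explicit than the paper in the key accounting step, spelling out that $m_{ve}\cdot n + m_{vve} = \Oh(n^{3k-1}\cdot n + n^{3k}) = \Oh(n^{3k})$ via Observation~\ref{obs:number-vve-ve-faces}, whereas the paper compresses this into ``plug in the number of updates.''
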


\section{Conclusion}

Our algorithm provides the first progress in over 20 years for computing the (continuous) Fr\'echet distance under transformations. The running time comes from traversing an arrangement in the space of transformations, performing an update to a dynamic grid graph data structure in each step. 

We conclude with open questions. Can the update time be reduced? Improving the update time for the data structure of~\cite{bringmannFrechetDistanceTranslation2021} would directly improve our running time. But it may also be possible to tailor the data structure for our setting. For instance, we could prune the lower levels of the data structure, since reachability in the inside of free space cells does not carry any information.

Can we prove a non-trivial conditional lower bound for computing the Fr\'echet distance under translations? The complexity of the arrangement, $\Omega(n^6)$, would seem like a natural lower bound. However, even transferring the $n^{4-o(1)}$ conditional lower bound for the discrete Fr\'echet distance under translation~\cite{bringmannFrechetDistanceTranslation2021} to the (continuous) Fr\'echet distance seems difficult, since the lower bound construction crucially relies on the fact that the discrete Fréchet traversal can only stay on the vertices and not on the edges.

\bibliographystyle{plainurl}
\bibliography{references} 

\begin{thebibliography}{10}

\bibitem{Alt09}
Helmut Alt.
\newblock The computational geometry of comparing shapes.
\newblock In Susanne Albers, Helmut Alt, and Stefan N{\"{a}}her, editors, {\em Efficient Algorithms, Essays Dedicated to Kurt Mehlhorn on the Occasion of His 60th Birthday}, volume 5760 of {\em Lecture Notes in Computer Science}, pages 235--248. Springer, 2009.
\newblock \href {https://doi.org/10.1007/978-3-642-03456-5\_16} {\path{doi:10.1007/978-3-642-03456-5\_16}}.

\bibitem{altComputingFrechetDistance1995}
Helmut Alt and Michael Godau.
\newblock Computing the {Fréchet} distance between two polygonal curves.
\newblock {\em International Journal of Computational Geometry \& Applications}, 05:75--91, 1995.
\newblock URL: \url{https://www.worldscientific.com/doi/abs/10.1142/S0218195995000064}, \href {https://doi.org/10.1142/S0218195995000064} {\path{doi:10.1142/S0218195995000064}}.

\bibitem{AltG00}
Helmut Alt and Leonidas~J. Guibas.
\newblock Discrete geometric shapes: Matching, interpolation, and approximation.
\newblock In J{\"{o}}rg{-}R{\"{u}}diger Sack and Jorge Urrutia, editors, {\em Handbook of Computational Geometry}, chapter~3, pages 121--153. North Holland / Elsevier, 2000.
\newblock \href {https://doi.org/10.1016/B978-044482537-7/50004-8} {\path{doi:10.1016/B978-044482537-7/50004-8}}.

\bibitem{altMatchingPolygonalCurves2001}
Helmut Alt, Christian Knauer, and Carola Wenk.
\newblock Matching polygonal curves with respect to the {F}r{\'{e}}chet distance.
\newblock In {\em Proc.\ 18th Annual Symposium on Theoretical Aspects of Computer Science ({STACS})}, pages 63--74. Springer, 2001.
\newblock \href {https://doi.org/10.1007/3-540-44693-1\_6} {\path{doi:10.1007/3-540-44693-1\_6}}.

\bibitem{AltSS07}
Helmut Alt, Ludmila Scharf, and Sven Scholz.
\newblock Probabilistic matching and resemblance evaluation of shapes in trademark images.
\newblock In {\em Proc.\ 6th {ACM} International Conference on Image and Video Retrieval {(CIVR)}}, pages 533--540. {ACM}, 2007.
\newblock \href {https://doi.org/10.1145/1282280.1282357} {\path{doi:10.1145/1282280.1282357}}.

\bibitem{AronovFKR24}
Boris Aronov, Tsuri Farhana, Matthew~J. Katz, and Indu Ramesh.
\newblock Discrete {F}r{\'{e}}chet distance oracles.
\newblock In {\em Proc.\ 40th International Symposium on Computational Geometry ({SoCG})}, volume 293 of {\em LIPIcs}, pages 10:1--10:14. Schloss Dagstuhl - Leibniz-Zentrum f{\"{u}}r Informatik, 2024.
\newblock URL: \url{https://doi.org/10.4230/LIPIcs.SoCG.2024.10}, \href {https://doi.org/10.4230/LIPICS.SOCG.2024.10} {\path{doi:10.4230/LIPICS.SOCG.2024.10}}.

\bibitem{AvrahamKS15}
Rinat~Ben Avraham, Haim Kaplan, and Micha Sharir.
\newblock A faster algorithm for the discrete {F}r{\'{e}}chet distance under translation.
\newblock {\em CoRR}, abs/1501.03724, 2015.
\newblock URL: \url{http://arxiv.org/abs/1501.03724}, \href {https://arxiv.org/abs/1501.03724} {\path{arXiv:1501.03724}}.

\bibitem{sigspatial1}
Julian Baldus and Karl Bringmann.
\newblock A fast implementation of near neighbors queries for {Fréchet} distance ({GIS Cup}).
\newblock In {\em Proceedings of the 25th ACM SIGSPATIAL International Conference on Advances in Geographic Information Systems}, SIGSPATIAL'17, pages 99:1--99:4, New York, NY, USA, 2017. ACM.
\newblock URL: \url{http://doi.acm.org/10.1145/3139958.3140062}, \href {https://doi.org/10.1145/3139958.3140062} {\path{doi:10.1145/3139958.3140062}}.

\bibitem{BCDKNR25}
Lotte Blank, Jacobus Conradi, Anne Driemel, Benedikt Kolbe, André Nusser, and Marena Richter.
\newblock Transforming dogs on the line: On the {Fréchet} distance under translation or scaling in {1D}.
\newblock Under submission, January 2025.

\bibitem{BlankD24}
Lotte Blank and Anne Driemel.
\newblock A faster algorithm for the {F}r{\'{e}}chet distance in 1d for the imbalanced case.
\newblock In {\em Proc.\ 32nd Annual European Symposium on Algorithms ({ESA})}, volume 308 of {\em LIPIcs}, pages 28:1--28:15. Schloss Dagstuhl - Leibniz-Zentrum f{\"{u}}r Informatik, 2024.
\newblock URL: \url{https://doi.org/10.4230/LIPIcs.ESA.2024.28}, \href {https://doi.org/10.4230/LIPICS.ESA.2024.28} {\path{doi:10.4230/LIPICS.ESA.2024.28}}.

\bibitem{BringmannC20}
Karl Bringmann and Bhaskar~Ray Chaudhury.
\newblock Polyline simplification has cubic complexity.
\newblock {\em J. Comput. Geom.}, 11(2):94--130, 2020.
\newblock URL: \url{https://doi.org/10.20382/jocg.v11i2a5}, \href {https://doi.org/10.20382/JOCG.V11I2A5} {\path{doi:10.20382/JOCG.V11I2A5}}.

\bibitem{BringmannDNP22}
Karl Bringmann, Anne Driemel, Andr{\'{e}} Nusser, and Ioannis Psarros.
\newblock Tight bounds for approximate near neighbor searching for time series under the {F}r{\'{e}}chet distance.
\newblock In {\em Proc.\ {ACM-SIAM} Symposium on Discrete Algorithms ({SODA})}, pages 517--550. {SIAM}, 2022.
\newblock \href {https://doi.org/10.1137/1.9781611977073.25} {\path{doi:10.1137/1.9781611977073.25}}.

\bibitem{DBLP:conf/esa/BringmannKN20}
Karl Bringmann, Marvin K{\"{u}}nnemann, and Andr{\'{e}} Nusser.
\newblock When lipschitz walks your dog: Algorithm engineering of the discrete fr{\'{e}}chet distance under translation.
\newblock In Fabrizio Grandoni, Grzegorz Herman, and Peter Sanders, editors, {\em 28th Annual European Symposium on Algorithms, {ESA} 2020, September 7-9, 2020, Pisa, Italy (Virtual Conference)}, volume 173 of {\em LIPIcs}, pages 25:1--25:17. Schloss Dagstuhl - Leibniz-Zentrum f{\"{u}}r Informatik, 2020.
\newblock URL: \url{https://doi.org/10.4230/LIPIcs.ESA.2020.25}, \href {https://doi.org/10.4230/LIPICS.ESA.2020.25} {\path{doi:10.4230/LIPICS.ESA.2020.25}}.

\bibitem{bringmannFrechetDistanceTranslation2021}
Karl Bringmann, Marvin K{\"{u}}nnemann, and Andr{\'{e}} Nusser.
\newblock Discrete {F}r{\'{e}}chet distance under translation: Conditional hardness and an improved algorithm.
\newblock {\em {ACM} Trans. Algorithms}, 17(3):25:1--25:42, 2021.
\newblock \href {https://doi.org/10.1145/3460656} {\path{doi:10.1145/3460656}}.

\bibitem{DBLP:journals/jocg/BringmannKN21}
Karl Bringmann, Marvin K{\"{u}}nnemann, and Andr{\'{e}} Nusser.
\newblock Walking the dog fast in practice: Algorithm engineering of the fr{\'{e}}chet distance.
\newblock {\em J. Comput. Geom.}, 12(1):70--108, 2021.
\newblock URL: \url{https://doi.org/10.20382/jocg.v12i1a4}, \href {https://doi.org/10.20382/JOCG.V12I1A4} {\path{doi:10.20382/JOCG.V12I1A4}}.

\bibitem{BruningCD22}
Frederik Br{\"{u}}ning, Jacobus Conradi, and Anne Driemel.
\newblock Faster approximate covering of subcurves under the {F}r{\'{e}}chet distance.
\newblock In {\em Proc.\ 30th Annual European Symposium on Algorithms ({ESA})}, volume 244 of {\em LIPIcs}, pages 28:1--28:16. Schloss Dagstuhl - Leibniz-Zentrum f{\"{u}}r Informatik, 2022.
\newblock URL: \url{https://doi.org/10.4230/LIPIcs.ESA.2022.28}, \href {https://doi.org/10.4230/LIPICS.ESA.2022.28} {\path{doi:10.4230/LIPICS.ESA.2022.28}}.

\bibitem{BuchinBG0W24}
Kevin Buchin, Maike Buchin, Joachim Gudmundsson, Aleksandr Popov, and Sampson Wong.
\newblock Map-matching queries under {F}r{\'{e}}chet distance on low-density spanners.
\newblock In {\em Proc.\ 40th International Symposium on Computational Geometry ({SoCG})}, volume 293 of {\em LIPIcs}, pages 27:1--27:15. Schloss Dagstuhl - Leibniz-Zentrum f{\"{u}}r Informatik, 2024.
\newblock URL: \url{https://doi.org/10.4230/LIPIcs.SoCG.2024.27}, \href {https://doi.org/10.4230/LIPICS.SOCG.2024.27} {\path{doi:10.4230/LIPICS.SOCG.2024.27}}.

\bibitem{sigspatial2}
Kevin Buchin, Yago Diez, Tom van Diggelen, and Wouter Meulemans.
\newblock Efficient trajectory queries under the {Fréchet} distance ({GIS Cup}).
\newblock In {\em Proceedings of the 25th ACM SIGSPATIAL International Conference on Advances in Geographic Information Systems}, SIGSPATIAL'17, pages 101:1--101:4, New York, NY, USA, 2017. ACM.
\newblock URL: \url{http://doi.acm.org/10.1145/3139958.3140064}, \href {https://doi.org/10.1145/3139958.3140064} {\path{doi:10.1145/3139958.3140064}}.

\bibitem{BuchinDGHKLS19}
Kevin Buchin, Anne Driemel, Joachim Gudmundsson, Michael Horton, Irina Kostitsyna, Maarten L{\"{o}}ffler, and Martijn Struijs.
\newblock Approximating $(k, \ell)$-center clustering for curves.
\newblock In {\em Proc.\ 30th {ACM-SIAM} Symposium on Discrete Algorithms ({SODA})}, pages 2922--2938. {SIAM}, 2019.
\newblock \href {https://doi.org/10.1137/1.9781611975482.181} {\path{doi:10.1137/1.9781611975482.181}}.

\bibitem{DBLP:conf/gis/BuchinDLN19}
Kevin Buchin, Anne Driemel, Natasja van~de L'Isle, and Andr{\'{e}} Nusser.
\newblock klcluster: Center-based clustering of trajectories.
\newblock In Farnoush~Banaei Kashani, Goce Trajcevski, Ralf~Hartmut G{\"{u}}ting, Lars Kulik, and Shawn~D. Newsam, editors, {\em Proceedings of the 27th {ACM} {SIGSPATIAL} International Conference on Advances in Geographic Information Systems, {SIGSPATIAL} 2019, Chicago, IL, USA, November 5-8, 2019}, pages 496--499. {ACM}, 2019.
\newblock \href {https://doi.org/10.1145/3347146.3359111} {\path{doi:10.1145/3347146.3359111}}.

\bibitem{BuchinFLPRR23}
Kevin Buchin, Chenglin Fan, Maarten L{\"{o}}ffler, Aleksandr Popov, Benjamin Raichel, and Marcel Roeloffzen.
\newblock Fr{\'{e}}chet distance for uncertain curves.
\newblock {\em {ACM} Trans. Algorithms}, 19(3):29:1--29:47, 2023.
\newblock \href {https://doi.org/10.1145/3597640} {\path{doi:10.1145/3597640}}.

\bibitem{BuchinFSW23}
Kevin Buchin, Brittany~Terese Fasy, Erfan~Hosseini Sereshgi, and Carola Wenk.
\newblock On length-sensitive {F}r{\'{e}}chet similarity.
\newblock In {\em Proc.\ 18th International Symposium on Algorithms and Data Structures ({WADS})}, volume 14079 of {\em Lecture Notes in Computer Science}, pages 208--231. Springer, 2023.
\newblock \href {https://doi.org/10.1007/978-3-031-38906-1\_15} {\path{doi:10.1007/978-3-031-38906-1\_15}}.

\bibitem{BuchinLOPUV23}
Kevin Buchin, Maarten L{\"{o}}ffler, Tim Ophelders, Aleksandr Popov, J{\'{e}}r{\^{o}}me Urhausen, and Kevin Verbeek.
\newblock Computing the {F}r{\'{e}}chet distance between uncertain curves in one dimension.
\newblock {\em Comput. Geom.}, 109:101923, 2023.
\newblock URL: \url{https://doi.org/10.1016/j.comgeo.2022.101923}, \href {https://doi.org/10.1016/J.COMGEO.2022.101923} {\path{doi:10.1016/J.COMGEO.2022.101923}}.

\bibitem{BuchinOS19}
Kevin Buchin, Tim Ophelders, and Bettina Speckmann.
\newblock {SETH} says: Weak {F}r{\'{e}}chet distance is faster, but only if it is continuous and in one dimension.
\newblock In {\em Proc. 30th {ACM-SIAM} Symposium on Discrete Algorithms ({SODA})}, pages 2887--2901. {SIAM}, 2019.
\newblock \href {https://doi.org/10.1137/1.9781611975482.179} {\path{doi:10.1137/1.9781611975482.179}}.

\bibitem{BuchinDR23}
Maike Buchin, Anne Driemel, and Dennis Rohde.
\newblock Approximating $(k, \ell)$-median clustering for polygonal curves.
\newblock {\em {ACM} Trans. Algorithms}, 19(1):4:1--4:32, 2023.
\newblock \href {https://doi.org/10.1145/3559764} {\path{doi:10.1145/3559764}}.

\bibitem{BuchinHOSSS22}
Maike Buchin, Ivor {van der Hoog}, Tim Ophelders, Lena Schlipf, Rodrigo~I. Silveira, and Frank Staals.
\newblock Efficient {F}r{\'{e}}chet distance queries for segments.
\newblock In {\em Proc.\ 30th Annual European Symposium on Algorithms ({ESA})}, volume 244 of {\em LIPIcs}, pages 29:1--29:14. Schloss Dagstuhl - Leibniz-Zentrum f{\"{u}}r Informatik, 2022.
\newblock URL: \url{https://doi.org/10.4230/LIPIcs.ESA.2022.29}, \href {https://doi.org/10.4230/LIPICS.ESA.2022.29} {\path{doi:10.4230/LIPICS.ESA.2022.29}}.

\bibitem{ChengH23}
Siu{-}Wing Cheng and Haoqiang Huang.
\newblock Curve simplification and clustering under {F}r{\'{e}}chet distance.
\newblock In {\em Proc.\ {ACM-SIAM} Symposium on Discrete Algorithms ({SODA})}, pages 1414--1432. {SIAM}, 2023.
\newblock URL: \url{https://doi.org/10.1137/1.9781611977554.ch51}, \href {https://doi.org/10.1137/1.9781611977554.CH51} {\path{doi:10.1137/1.9781611977554.CH51}}.

\bibitem{cheng_frechet_nodate}
Siu{-}Wing Cheng and Haoqiang Huang.
\newblock Fr{\'{e}}chet distance in subquadratic time.
\newblock {\em CoRR}, abs/2407.05231, 2024.
\newblock To appear at SODA 2025.
\newblock URL: \url{https://doi.org/10.48550/arXiv.2407.05231}, \href {https://arxiv.org/abs/2407.05231} {\path{arXiv:2407.05231}}, \href {https://doi.org/10.48550/ARXIV.2407.05231} {\path{doi:10.48550/ARXIV.2407.05231}}.

\bibitem{ChengH24}
Siu{-}Wing Cheng and Haoqiang Huang.
\newblock Solving {F}r{\'{e}}chet distance problems by algebraic geometric methods.
\newblock In {\em Proc.\ {ACM-SIAM} Symposium on Discrete Algorithms ({SODA})}, pages 4502--4513. {SIAM}, 2024.
\newblock \href {https://doi.org/10.1137/1.9781611977912.158} {\path{doi:10.1137/1.9781611977912.158}}.

\bibitem{ColombeF21}
Connor Colombe and Kyle Fox.
\newblock Approximating the (continuous) {F}r{\'{e}}chet distance.
\newblock In {\em Proc.\ 37th International Symposium on Computational Geometry ({SoCG})}, volume 189 of {\em LIPIcs}, pages 26:1--26:14. Schloss Dagstuhl - Leibniz-Zentrum f{\"{u}}r Informatik, 2021.
\newblock URL: \url{https://doi.org/10.4230/LIPIcs.SoCG.2021.26}, \href {https://doi.org/10.4230/LIPICS.SOCG.2021.26} {\path{doi:10.4230/LIPICS.SOCG.2021.26}}.

\bibitem{DBLP:conf/esa/DiksS07}
Krzysztof Diks and Piotr Sankowski.
\newblock Dynamic plane transitive closure.
\newblock In Lars Arge, Michael Hoffmann, and Emo Welzl, editors, {\em Algorithms - {ESA} 2007, 15th Annual European Symposium, Eilat, Israel, October 8-10, 2007, Proceedings}, volume 4698 of {\em Lecture Notes in Computer Science}, pages 594--604. Springer, 2007.
\newblock \href {https://doi.org/10.1007/978-3-540-75520-3\_53} {\path{doi:10.1007/978-3-540-75520-3\_53}}.

\bibitem{DriemelHR22}
Anne Driemel, Ivor {van der Hoog}, and Eva Rotenberg.
\newblock On the discrete {F}r{\'{e}}chet distance in a graph.
\newblock In {\em Proc.\ 38th International Symposium on Computational Geometry, ({SoCG})}, volume 224 of {\em LIPIcs}, pages 36:1--36:18. Schloss Dagstuhl - Leibniz-Zentrum f{\"{u}}r Informatik, 2022.
\newblock URL: \url{https://doi.org/10.4230/LIPIcs.SoCG.2022.36}, \href {https://doi.org/10.4230/LIPICS.SOCG.2022.36} {\path{doi:10.4230/LIPICS.SOCG.2022.36}}.

\bibitem{sigspatial3}
Fabian Dütsch and Jan Vahrenhold.
\newblock A filter-and-refinement-algorithm for range queries based on the {Fréchet} distance ({GIS} {Cup}).
\newblock In {\em Proceedings of the 25th ACM SIGSPATIAL International Conference on Advances in Geographic Information Systems}, SIGSPATIAL'17, pages 100:1--100:4, New York, NY, USA, 2017. ACM.
\newblock URL: \url{http://doi.acm.org/10.1145/3139958.3140063}, \href {https://doi.org/10.1145/3139958.3140063} {\path{doi:10.1145/3139958.3140063}}.

\bibitem{EfratIV04}
Alon Efrat, Piotr Indyk, and Suresh Venkatasubramanian.
\newblock Pattern matching for sets of segments.
\newblock {\em Algorithmica}, 40(3):147--160, 2004.
\newblock URL: \url{https://doi.org/10.1007/s00453-004-1089-y}, \href {https://doi.org/10.1007/S00453-004-1089-Y} {\path{doi:10.1007/S00453-004-1089-Y}}.

\bibitem{fan_computing_2021}
Chenglin Fan and Benjamin Raichel.
\newblock Computing the {F}r\'echet gap distance.
\newblock {\em Discrete \& Computational Geometry}, 65(4):1244--1274, June 2021.
\newblock \href {https://doi.org/10.1007/s00454-020-00224-w} {\path{doi:10.1007/s00454-020-00224-w}}.

\bibitem{filtser_static_2023}
Arnold Filtser and Omrit Filtser.
\newblock Static and streaming data structures for {F}réchet distance queries.
\newblock {\em ACM Trans. Algorithms}, 19(4):39:1--39:36, 2023.
\newblock \href {https://doi.org/10.1145/3610227} {\path{doi:10.1145/3610227}}.

\bibitem{FiltserFK23}
Arnold Filtser, Omrit Filtser, and Matthew~J. Katz.
\newblock Approximate nearest neighbor for curves: Simple, efficient, and deterministic.
\newblock {\em Algorithmica}, 85(5):1490--1519, 2023.
\newblock URL: \url{https://doi.org/10.1007/s00453-022-01080-1}, \href {https://doi.org/10.1007/S00453-022-01080-1} {\path{doi:10.1007/S00453-022-01080-1}}.

\bibitem{Filtser0MP23}
Omrit Filtser, Mayank Goswami, Joseph S.~B. Mitchell, and Valentin Polishchuk.
\newblock On flipping the {F}r{\'{e}}chet distance.
\newblock In {\em Proc.\ 14th Innovations in Theoretical Computer Science Conference ({ITCS})}, volume 251 of {\em LIPIcs}, pages 51:1--51:22. Schloss Dagstuhl - Leibniz-Zentrum f{\"{u}}r Informatik, 2023.
\newblock URL: \url{https://doi.org/10.4230/LIPIcs.ITCS.2023.51}, \href {https://doi.org/10.4230/LIPICS.ITCS.2023.51} {\path{doi:10.4230/LIPICS.ITCS.2023.51}}.

\bibitem{FoxNPR24}
Emily Fox, Amir Nayyeri, Jonathan~James Perry, and Benjamin Raichel.
\newblock {F}r{\'{e}}chet edit distance.
\newblock In {\em Proc.\ 40th International Symposium on Computational Geometry ({SoCG})}, volume 293 of {\em LIPIcs}, pages 58:1--58:15. Schloss Dagstuhl - Leibniz-Zentrum f{\"{u}}r Informatik, 2024.
\newblock URL: \url{https://doi.org/10.4230/LIPIcs.SoCG.2024.58}, \href {https://doi.org/10.4230/LIPICS.SOCG.2024.58} {\path{doi:10.4230/LIPICS.SOCG.2024.58}}.

\bibitem{GudmundssonSW23}
Joachim Gudmundsson, Martin~P. Seybold, and Sampson Wong.
\newblock Map matching queries on realistic input graphs under the {F}r{\'{e}}chet distance.
\newblock In {\em Proc.\ 2023 {ACM-SIAM} Symposium on Discrete Algorithms ({SODA})}, pages 1464--1492. {SIAM}, 2023.
\newblock URL: \url{https://doi.org/10.1137/1.9781611977554.ch53}, \href {https://doi.org/10.1137/1.9781611977554.CH53} {\path{doi:10.1137/1.9781611977554.CH53}}.

\bibitem{GudmundssonRSW21}
Joachim Gudmundsson, Andr{\'{e}} van Renssen, Zeinab Saeidi, and Sampson Wong.
\newblock Translation invariant {F}r{\'{e}}chet distance queries.
\newblock {\em Algorithmica}, 83(11):3514--3533, 2021.
\newblock URL: \url{https://doi.org/10.1007/s00453-021-00865-0}, \href {https://doi.org/10.1007/S00453-021-00865-0} {\path{doi:10.1007/S00453-021-00865-0}}.

\bibitem{GudmundssonW22}
Joachim Gudmundsson and Sampson Wong.
\newblock Cubic upper and lower bounds for subtrajectory clustering under the continuous {F}r{\'{e}}chet distance.
\newblock In {\em Proc.\ {ACM-SIAM} Symposium on Discrete Algorithms ({SODA})}, pages 173--189. {SIAM}, 2022.
\newblock \href {https://doi.org/10.1137/1.9781611977073.9} {\path{doi:10.1137/1.9781611977073.9}}.

\bibitem{JiangXZ07}
Minghui Jiang, Ying Xu, and Binhai Zhu.
\newblock Protein structure-structure alignment with discrete {F}r\'echet distance.
\newblock In {\em Proc.\ 5th Asia-Pacific Bioinformatics Conference ({APBC})}, volume~5 of {\em Advances in Bioinformatics and Computational Biology}, pages 131--141. Imperial College Press, 2007.
\newblock URL: \url{http://www.comp.nus.edu.sg/\%7Ewongls/psZ/apbc2007/apbc162a.pdf}.

\bibitem{karczmarz_et_al:LIPIcs.ICALP.2024.95}
Adam Karczmarz and Marcin Smulewicz.
\newblock {Fully Dynamic Strongly Connected Components in Planar Digraphs}.
\newblock In Karl Bringmann, Martin Grohe, Gabriele Puppis, and Ola Svensson, editors, {\em 51st International Colloquium on Automata, Languages, and Programming (ICALP 2024)}, volume 297 of {\em Leibniz International Proceedings in Informatics (LIPIcs)}, pages 95:1--95:20, Dagstuhl, Germany, 2024. Schloss Dagstuhl -- Leibniz-Zentrum f{\"u}r Informatik.
\newblock URL: \url{https://drops.dagstuhl.de/entities/document/10.4230/LIPIcs.ICALP.2024.95}, \href {https://doi.org/10.4230/LIPIcs.ICALP.2024.95} {\path{doi:10.4230/LIPIcs.ICALP.2024.95}}.

\bibitem{knauerComparingGeometricPatterns2002}
Christian Knauer.
\newblock {\em Algorithms for Comparing Geometric Patterns}.
\newblock PhD thesis, Free University Berlin, 2002.
\newblock URL: \url{http://dx.doi.org/10.17169/refubium-16312}.

\bibitem{Nath020}
Abhinandan Nath and Erin Taylor.
\newblock k-median clustering under discrete {F}r{\'{e}}chet and hausdorff distances.
\newblock In {\em Proc.\ 36th International Symposium on Computational Geometry ({SoCG})}, volume 164 of {\em LIPIcs}, pages 58:1--58:15. Schloss Dagstuhl - Leibniz-Zentrum f{\"{u}}r Informatik, 2020.
\newblock URL: \url{https://doi.org/10.4230/LIPIcs.SoCG.2020.58}, \href {https://doi.org/10.4230/LIPICS.SOCG.2020.58} {\path{doi:10.4230/LIPICS.SOCG.2020.58}}.

\bibitem{SriraghavendraKB07}
E.~Sriraghavendra, K.~Karthik, and Chiranjib Bhattacharyya.
\newblock Fr{\'{e}}chet distance based approach for searching online handwritten documents.
\newblock In {\em Proc.\ 9th International Conference on Document Analysis and Recognition {(ICDAR)}}, pages 461--465. {IEEE} Computer Society, 2007.
\newblock \href {https://doi.org/10.1109/ICDAR.2007.4378752} {\path{doi:10.1109/ICDAR.2007.4378752}}.

\bibitem{DBLP:conf/esa/Subramanian93}
Sairam Subramanian.
\newblock A fully dynamic data structure for reachability in planar digraphs.
\newblock In Thomas Lengauer, editor, {\em Algorithms - {ESA} '93, First Annual European Symposium, Bad Honnef, Germany, September 30 - October 2, 1993, Proceedings}, volume 726 of {\em Lecture Notes in Computer Science}, pages 372--383. Springer, 1993.
\newblock \href {https://doi.org/10.1007/3-540-57273-2\_72} {\path{doi:10.1007/3-540-57273-2\_72}}.

\bibitem{KerkhofKLMW19}
Mees van~de Kerkhof, Irina Kostitsyna, Maarten L{\"{o}}ffler, Majid Mirzanezhad, and Carola Wenk.
\newblock Global curve simplification.
\newblock In {\em Proc.\ 27th Annual European Symposium on Algorithms ({ESA})}, volume 144 of {\em LIPIcs}, pages 67:1--67:14. Schloss Dagstuhl - Leibniz-Zentrum f{\"{u}}r Informatik, 2019.
\newblock URL: \url{https://doi.org/10.4230/LIPIcs.ESA.2019.67}, \href {https://doi.org/10.4230/LIPICS.ESA.2019.67} {\path{doi:10.4230/LIPICS.ESA.2019.67}}.

\bibitem{HorstO24}
Thijs van~der Horst and Tim Ophelders.
\newblock Faster {F}r{\'{e}}chet distance approximation through truncated smoothing.
\newblock In {\em Proc.\ 40th International Symposium on Computational Geometry ({SoCG})}, volume 293 of {\em LIPIcs}, pages 63:1--63:15. Schloss Dagstuhl - Leibniz-Zentrum f{\"{u}}r Informatik, 2024.
\newblock URL: \url{https://doi.org/10.4230/LIPIcs.SoCG.2024.63}, \href {https://doi.org/10.4230/LIPICS.SOCG.2024.63} {\path{doi:10.4230/LIPICS.SOCG.2024.63}}.

\bibitem{HorstKOS23}
Thijs van~der Horst, Marc~J. van Kreveld, Tim Ophelders, and Bettina Speckmann.
\newblock A subquadratic \emph{n}\({}^{\mbox{{\(\epsilon\)}}}\)-approximation for the continuous {F}r{\'{e}}chet distance.
\newblock In {\em Proc.\ {ACM-SIAM} Symposium on Discrete Algorithms ({SODA})}, pages 1759--1776. {SIAM}, 2023.
\newblock URL: \url{https://doi.org/10.1137/1.9781611977554.ch67}, \href {https://doi.org/10.1137/1.9781611977554.CH67} {\path{doi:10.1137/1.9781611977554.CH67}}.

\bibitem{KreveldLW20}
Marc~J. van Kreveld, Maarten L{\"{o}}ffler, and Lionov Wiratma.
\newblock On optimal polyline simplification using the {H}ausdorff and {F}r{\'{e}}chet distance.
\newblock {\em J. Comput. Geom.}, 11(1):1--25, 2020.
\newblock URL: \url{https://doi.org/10.20382/jocg.v11i1a1}, \href {https://doi.org/10.20382/JOCG.V11I1A1} {\path{doi:10.20382/JOCG.V11I1A1}}.

\bibitem{Veltkamp01}
Remco~C. Veltkamp.
\newblock Shape matching: Similarity measures and algorithms.
\newblock In {\em Proc.\ International Conference on Shape Modeling and Applications {(SMI)}}, page 188. {IEEE} Computer Society, 2001.
\newblock \href {https://doi.org/10.1109/SMA.2001.923389} {\path{doi:10.1109/SMA.2001.923389}}.

\bibitem{wenkShapeMatchingHigher2003}
Carola Wenk.
\newblock {\em Shape Matching in Higher Dimensions}.
\newblock PhD thesis, Free University Berlin, 2003.
\newblock URL: \url{http://dx.doi.org/10.17169/refubium-8310}.

\end{thebibliography}

\end{document}